\newtheorem{theorem}{Theorem}
\newtheorem{corollary}[theorem]{Corollary}
\newtheorem{proposition}{Proposition}
\newtheorem{assumption}{Assumption}
\newtheorem{lemma}{Lemma}
\theoremstyle{plain}
\newcounter{tablenums}
\begin{document}
\begin{titlepage}
\title{\singlespacing Breaking Ties: Regression Discontinuity Design\break Meets Market Design\thanks{ \scriptsize We thank Nadiya Chadha,  Andrew McClintock, Sonali Murarka, Lianna Wright, and the staff of the New York City Department of Education for answering our questions and facilitating access to data. Don Andrews, Tim Armstrong, Eduardo Azevedo, Yeon-Koo Che, Glenn Ellison, Brigham Frandsen, John Friedman, Justine Hastings, Guido Imbens, Jacob Leshno, Whitney Newey, Ariel Pakes, Pedro Sant'Anna, Olmo Silva, Hal Varian and seminar participants at Columbia, Montreal, Harvard, Hebrew University, Google, the NBER Summer Institute, the NBER Market Design Working Group, the FRB of Minneapolis, CUNY, Yale, Hitotsubashi, and Tokyo provided helpful feedback. We're especially indebted to Adrian Blattner, Nicolas Jimenez, Ignacio Rodriguez, and Suhas Vijaykumar for expert research assistance and to MIT SEII team leaders Eryn Heying and Anna Vallee for invaluable administrative support. We gratefully acknowledge funding from the Laura and John Arnold Foundation, the National
Science Foundation (under awards SES-1056325 and SES-1426541), and the W.T. Grant Foundation.}}
\author{At\.{\i}la Abdulkad\.{\i}ro\u{g}lu\thanks{\scriptsize Duke
University and NBER. Email: \href{mailto:atila.abdulkadiroglu@duke.edu}{atila.abdulkadiroglu@duke.edu}} \and Joshua D. Angrist\thanks{\scriptsize MIT and NBER. Email: \href{mailto:angrist@mit.edu}{angrist@mit.edu}} \and Yusuke Narita \thanks{\scriptsize Yale University and NBER. Email: \href{mailto:yusuke.narita@yale.edu}{yusuke.narita@yale.edu}} \and Parag Pathak\thanks{\scriptsize MIT and NBER. Email: \href{mailto:ppathak@mit.edu}{ppathak@mit.edu}} }
\date{\today}
\maketitle
\begin{adjustwidth*}{0.5cm}{0.5cm} 
\begin{abstract}
\singlespacing
\noindent Many schools in large urban districts have more applicants than seats. Centralized school assignment algorithms ration seats at over-subscribed schools using randomly assigned lottery numbers, non-lottery tie-breakers like test scores, or both. The New York City public high school match illustrates the latter, using test scores and other criteria to rank applicants at ``screened'' schools, combined with lottery tie-breaking at unscreened ``lottery'' schools. We show how to identify causal effects of school attendance in such settings. Our approach generalizes regression discontinuity methods to allow for multiple treatments and multiple running variables, some of which are randomly assigned.  The key to this generalization is a local propensity score that quantifies the school assignment probabilities induced by lottery and non-lottery tie-breakers. The local propensity score is applied in an empirical assessment of the predictive value of New York City's school report cards. Schools that receive a high grade indeed improve SAT math scores and increase graduation rates, though by much less than OLS estimates suggest. Selection bias in OLS estimates is egregious for screened schools. \par 
\end{abstract}
\end{adjustwidth*}
\onehalfspacing
\setcounter{page}{0}
\thispagestyle{empty}
\end{titlepage}
\pagebreak \newpage

\onehalfspacing

\section{Introduction}
\thispagestyle{empty}

Large school districts increasingly use sophisticated centralized assignment mechanisms to match students and schools.  In addition to producing fair and transparent admissions decisions, centralized assignment offers a unique resource for research on schools: the data these systems generate can be used to construct unbiased estimates of school value-added. This research dividend arises from the \textit{tie-breaking} embedded in centralized assignment.  Many school assignment schemes rely on the deferred acceptance (DA) algorithm, which takes as input information on applicant preferences and school priorities. In settings where seats are scarce, DA rations seats at oversubscribed schools using tie-breaking variables, producing quasi-experimental assignment of students to schools.

Many districts break ties with a uniformly distributed random variable, often described as a lottery number.  \cite{mdrd1:17} show that DA with lottery tie-breaking assigns students to schools as if in a stratified randomized trial.  That is, conditional on preferences and priorities, the assignments generated by such systems are randomly assigned and therefore independent of potential outcomes.  In practice, however, preferences and priorities, which we call applicant \textit{type}, are too finely distributed for full non-parametric conditioning to be useful. We must therefore pool applicants of different types, while avoiding any omitted variables bias that might arise from the fact that type predicts outcomes.  

The key to type pooling is the \textit{DA propensity score}, defined as the probability of school assignment conditional on applicant type.  In a mechanism with lottery tie-breaking, conditioning on the scalar DA propensity score is sufficient to make school assignment independent of potential outcomes. Moreover, the distribution of the scalar propensity score turns out to be much coarser than the distribution of types.\footnote{The propensity score theorem says that for research designs in which treatment status, $D_i$, is independent of potential outcomes conditional on covariates, $X_i$, treatment status is also independent of potential outcomes conditional on the propensity score, that is, conditional on $E[D_i|X_i]$. In work building on \cite{mdrd1:17}, the DA propensity score has been used to study schools \citep{bergman2018risks}, management training \citep{abebe2019learning}, and entrepreneurship training \citep{argentina}.}

This paper generalizes the propensity score to DA-based assignment mechanisms in which tie-breaking variables are not limited to randomly assigned lottery numbers. Selective exam schools, for instance, admit students with high test scores, and students with higher scores tend to have better achievement and graduation outcomes regardless of where they enroll. We refer to such scenarios as involving \textit{general tie-breaking}.\footnote{Non-lottery tie-breaking embedded in centralized assignment schemes has been used in econometric research on schools in Chile \citep{hastings/neilson/zimmerman:13, zimmerman2016making}, Ghana \citep{ajayi:13}, Italy \citep{fort2016cognitive}, Kenya \citep{lucas/mbiti:14}, Norway \citep{kirkeboen/leuven/mogstad:15}, Romania \citep{pop-eleches/urquiola:13}, Trinidad and Tobago \citep{jackson:10, jackson:12, beuermann2015privately}, and the U.S. \citep{abdulkadiroglu/angrist/pathak:14, dobbie/fryer:11, barrow2016role}. These studies treat individual schools and tie-breakers in isolation, without exploiting centralized assignment. Related methodological work exploring regression discontinuity designs with multiple assignment variables and multiple cutoffs includes \cite{papay2011extending, zajonc2012regression, wong2013analyzing, cattaneo2016interpreting}.}  Matching markets with general tie-breaking raise challenges beyond those addressed in the \cite{mdrd1:17} study of DA with lottery tie-breaking. 

The most important complication raised by general tie-breaking arises from the fact that seat assignment is no longer independent of potential outcomes conditional on applicant type. This problem is intimately entwined with the identification challenge raised by regression discontinuity (RD) designs, which typically compare candidates for treatment on either side of a qualifying test score cutoff. In particular, non-lottery tie-breakers play the role of an RD \textit{running variable} and are likewise a source of omitted variables bias. The setting of interest here, however, is far more complex than the typical RD design: DA may involve many treatments, tie-breakers, and cutoffs.  

A further barrier to causal inference comes from the fact that the propensity score in this general tie-breaking setting depends on the unknown distribution of non-lottery tie-breakers conditional on type. Consequently, the propensity score under general tie-breaking may be no coarser than the underlying high-dimensional type distribution.  When the score distribution is no coarser than the type distribution, score conditioning is pointless. 

These problems are solved here by introducing a \textit{local DA propensity score} that quantifies the probability of school assignment induced by a combination of non-lottery and lottery tie-breakers.  This score is ``local'' in the sense that it  is constructed using the fact that continuously distributed non-lottery tie-breakers are locally uniformly distributed. Combining this property with the (globally) known distribution of lottery tie-breakers yields a formula for the assignment probabilities induced by any DA match. Conditional on the local DA propensity score, school assignments are shown to be asymptotically randomly assigned. Moreover, like the DA propensity score for lottery tie-breaking, the local DA propensity score has a distribution far coarser than the underlying type distribution.

Our analytical approach builds on \cite{hahn2001identification} and other pioneering econometric contributions to the development of non-parametric RD designs. We also build on the more recent local random assignment interpretation of nonparametric RD.\footnote{See, among others, \cite{frolich2007regression, cattaneo2015randomization, cattaneo2017comparing, frandsen2017party, sekhon2017interpreting}; \cite{frolich/huber2019}; and \cite{arai2019testing}.} The resulting theoretical framework allows us to quantify the probability of school assignment as a function of a few features of student type and tie-breakers, such as proximity to the admissions cutoffs determined by DA and the identity of key cutoffs for each applicant. By integrating nonparametric RD with \cite{rosenbaum/rubin:83}'s propensity score theorem and large-market matching theory, our theoretical results provide a framework suitable for causal inference in a wide variety of applications.

The research value of the local DA propensity score is demonstrated through an analysis of New York City (NYC) high school report cards.  Specifically, we ask whether schools distinguished by ``Grade A''  on the district's school report card indeed signify high quality schools that boost their students' achievement and improve other outcomes.  Alternatively, the good performance of most Grade A students may reflect omitted variables bias. The distinction between causal effects and omitted variables bias is especially interesting in light of an ongoing debate over access to New York's academically selective schools, also called screened schools, which are especially likely to be graded A (see, e.g., \cite{brody:19} and \cite{veiga:18}). We identify the causal effects of Grade A school attendance by exploiting the NYC high school match. NYC employs a DA mechanism integrating non-lottery screened school tie-breaking with a common lottery tie-breaker at lottery schools.   In fact, NYC screened schools design their own tie-breakers based on middle school transcripts, interviews, and other factors.  

The effects of Grade A school attendance are estimated here using instrumental variables constructed from the school assignment offers generated by the NYC high school match. Specifically, our two-stage least squares (2SLS) estimators use assignment offers as instrumental variables for Grade A school attendance, while controlling for the local DA propensity score. The resulting estimates suggest that Grade A attendance boosts SAT math scores modestly and may increase high school graduation rates a little. But these effects are much smaller than those the corresponding ordinary least squares (OLS) estimates of Grade A value-added would suggest. 

We also compare 2SLS estimates of Grade A effects computed separately for NYC's screened and lottery schools, a comparison that shows the two sorts of schools to have similar effects. This finding therefore implies that OLS estimates showing a large Grade A screened school advantage are especially misleading. The distinction between screened and lottery schools has been central to the ongoing debate over NYC school access and quality. Our comparison suggests that the public concern with this sort of treatment effect heterogeneity may be misplaced. Treatment effect heterogeneity may be limited, supporting our assumption of constant treatment effects conditional on observables.\footnote{The analysis here allows for treatment effect heterogeneity as a function of observable student and school characteristics. Our working paper shows how DA in markets with general tie-breaking identifies average causal affects for applicants with tie-breaker values away from screened-school cutoffs (\cite{abdulkadiroglu/angrist/narita/pathak:19}). We leave other questions related to unobserved heterogeneity for future work.}   

The next section shows how DA can be used to identify causal effects of school attendance. Section \ref{sec:SDsection} illustrates key ideas in a setting with a single non-lottery tie-breaker. Section \ref{sec:multiscore} derives a formula for the local DA propensity score in a market with general tie-breaking. This section also derives a consistent estimator of the local propensity score. Section \ref{sec:NYC} uses these theoretical results to estimate causal effects of attending Grade A schools.\footnote{Our theoretical analysis covers any mechanism that can be computed by student-proposing DA. This DA class includes student-proposing DA, serial dictatorship, the immediate acceptance (Boston) mechanism \citep{ergin/sonmez:05}, China's parallel mechanisms \citep{chen/kesten:17}, England's first-preference-first mechanisms \citep{pathak/sonmez:13}, and the Taiwan mechanism \citep{dur/pathak/song/sonmez:18}.  In large markets satisfying regularity conditions that imply a unique stable matching, the relevant DA class also includes school-proposing as well as applicant-proposing DA (these conditions are spelled out in \cite{azevedo/leshno:14}). The DA class omits the Top Trading Cycles (TTC) mechanism defined for school choice by \cite{abdulkadiroglu/sonmez:03}.}

\section{Using Centralized Assignment to Eliminate Omitted Variables Bias}\label{sec:setup}

The NYC school report cards published from 2007-13 graded high schools on the basis of student achievement, graduation rates, and other criteria.  These grades were part of an accountability system meant to help parents choose high quality schools. In practice, however, report card grades computed without extensive control for student characteristics reflect students' ability and family background as well as school quality. Systematic differences in student body composition are a powerful source of bias in school report cards. It's therefore worth asking whether a student who is \textit{randomly} assigned to a Grade A high school indeed learns more and is more likely to graduate as a result.

We answer this question using instrumental variables derived from NYC's DA-based assignment of high school seats. The NYC high school match generates a single school assignment for each applicant as a function of applicants' preferences over schools, school-specific priorities, and a set of tie-breaking variables that distinguish between applicants who share preferences and priorities.\footnote{Seat assignment at some of NYC's selective enrollment ``exam schools'' is determined by a separate match.  NYC charter schools use school-specific lotteries. Applicants are free to seek exam school and charter school seats as well as an assignment in the traditional sector.}  Because they're a function of student characteristics like preferences and test scores, NYC assignments are not randomly assigned.  We show, however, that conditional on the local DA propensity score, DA-generated assignments of a seat at school $s$ provide credible instruments for enrollment at school $s$. This result motivates a two-stage least squares (2SLS) specification where the endogenous treatment is enrollment at any Grade A school while the instrument is DA-generated assignment of a seat at any Grade A school.  

Our identification strategy builds on the large-market ``continuum'' model of DA detailed in \cite{mdrd1:17}.  The large-market model is extended here to allow for multiple and non-lottery tie-breakers. To that end, let $s=0,1,...,S$ index schools, where $s=0$ represents an outside option. Applicants are assumed to be identified by an index, $i$, drawn from the unit interval $[0, 1]$.  The large market model is ``large'' by virtue of this assumption. 

Applicant $i$'s preferences over schools constitute a strict partial ordering, $\succ_{i}$, where $a\succ_{i}b$ means that $i$ prefers school $a$ to
school $b$. Each applicant is also granted a priority at every school. For example, schools may prioritize applicants who live nearby or with currently enrolled siblings. Let
$\rho_{is}\in\{1,...,K,\infty\}$ denote applicant $i$'s priority at school $s$,
where $\rho_{is}<\rho_{js}$ means school $s$ prioritizes $i$ over $j$. 
We use $\rho_{is}=\infty$ to indicate that $i$ is ineligible for school $s.$
The vector $\rho_{i}=(\rho_{i1},...,\rho_{iS})$
records applicant $i$'s priorities at each school.
Applicant type is then defined as $\theta_{i}=(\succ_{i},\rho_{i})$, that is, the combination of an applicant's  preferences and priorities at all schools. Let $\Theta_s$ denote the set of types, $\theta$, that ranks $s$.

In addition to applicant type, DA matches applicants to seats as a function of a set of tie-breaking variables. We leave DA mechanics for Section \ref{sec:multiscore}; at this point, it's enough to establish notation for DA inputs. 
Most importantly, our analysis of markets with general tie-breaking requires notation to keep track of tie-breakers.  Let $v\in\{1,...,V\}$ index tie-breakers and let $S_v$ be the set of schools using tie-breaker $v$.
We assume that each school uses a single tie-breaker.
Scalar random variable $R_{iv}$ denotes applicant $i$'s tie-breaker $v$. Some of these are uniformly distributed lottery numbers.  The set of non-lottery $R_{iv}$ used at schools ranked by applicant $i$ are collected in the vector $\mathcal{R}_i$.  
Without loss of generality, we assume that ties are broken in favor of applicants with the smaller tie-breaker value. 
DA uses $\theta_i$, $\mathcal{R}_i$, and the set of lottery
tie-breakers for all $i$ to assign applicants to schools. 

We are interested in using the assignment variation resulting from DA to estimate the causal effect of $C_{i}$, a variable indicating student $i$'s attendance at (or years of enrollment in)  
any Grade A school. Outcome variables, denoted $Y_i$, include SAT scores and high school graduation status. In a DA match like the one in NYC, $C_i$ is not randomly assigned, but rather reflects student preferences, school priorities, tie-breaking variables, as well as decisions whether or not to enroll at school $s$ when offered a seat there through the match. The potential for omitted variables bias induced by the process determining $C_i$ can be eliminated by an instrumental variables strategy that exploits our understanding of the structure of matching markets.  

The instruments used for this purpose are a function of individual school assignments, indicated by $D_i(s)$ for the assignment of student $i$ to a seat at school $s$. Because DA generates a single assignment for each student, a dummy for any Grade A assignment, denoted $D_{Ai}$, is the sum of dummies indicating all assignments to individual Grade A schools. $D_{Ai}$ provides a natural instrument for $C_i$. In particular, we show below that 2SLS consistently estimates the effect of $C_i$ on $Y_i$ in the context of a linear constant-effects causal model that can be written as:
\begin{align}
Y_i & = \beta C_{i} + f_2(\theta_i, \mathcal{R}_i, \delta) + \eta_i\label{RF}, 
\end{align}
where $\beta$ is the causal effect of interest and the associated first stage equation is
\begin{align}
C_{i} &=  \gamma D_{Ai} + f_1(\theta_i, \mathcal{R}_i, \delta) + \nu_{i}\label{FS}.
\end{align}
The terms $f_1(\theta_i, \mathcal{R}_i, \delta)$ and $f_2(\theta_i, \mathcal{R}_i, \delta)$ are functions of type and non-lottery tie-breakers, as well as a bandwidth, $\delta\in \mathbb{R}$, that's integral to the local DA propensity score.

Our goal is to specify $f_1(\theta_i, \mathcal{R}_i, \delta)$ and $f_2(\theta_i, \mathcal{R}_i, \delta)$ so that 2SLS estimates of $\beta$ are consistent. Because \eqref{RF} is seen as a model for potential outcomes rather than a regression equation, consistency requires that $D_{Ai}$ and $\eta_i$ be uncorrelated.  The relevant identification assumption can be written: 
\begin{equation}
E[\eta_i D_{Ai}]\approx 0,\label{CI}
\end{equation}

\noindent where $\approx$ means asymptotic equality as $\delta\rightarrow0$, in a manner detailed below. Briefly, our main theoretical result establishes limiting local conditional mean independence of school assignments from applicant characteristics and potential outcomes, yielding \eqref{CI}. This result specifies $f_1(\theta_i, \mathcal{R}_i, \delta)$ and $f_2(\theta_i, \mathcal{R}_i, \delta)$ to be easily-computed functions of the local propensity score and elements of $\mathcal{R}_i$.   

\cite{mdrd1:17} derive the relevant DA propensity score for a scenario with lottery tie-breaking only.  Lottery tie-breaking obviates the need for a bandwidth and control for components of $\mathcal{R}_i$.  Many applications of DA use non-lottery tie-breaking, however. The task here is to derive the propensity score for elaborate matches like that in NYC, which combines lottery tie-breaking with many school-specific non-lottery tie-breakers. The resulting estimation strategy integrates propensity score methods with the nonparametric approach to RD (introduced by \cite{hahn2001identification}), and the local random assignment model of RD (discussed by \cite{frolich2007regression, cattaneo2015randomization, cattaneo2017comparing, frandsen2017party}, among others). Our theoretical results can also be seen as generalizing nonparametric RD to allow for many schools (treatments), many tie-breakers (running variables), and many cutoffs.  




\section{From Non-Lottery Tie-Breaking to Random Assignment in Serial Dictatorship}\label{sec:SDsection} 


An analysis of a market with a single non-lottery tie-breaker and no priorities illuminates key elements of our approach.  DA in this case is known as \textit{serial dictatorship}.  Like the general local DA score, the local DA score for serial dictatorship depends only on a handful of statistics, including admissions cutoffs for schools ranked, and whether applicant $i$'s tie-breaker is close to cutoffs for schools using non-lottery tie-breakers. Conditional on this local propensity score, school offers are asymptotically randomly assigned.

Serial dictatorship can be described as follows:
\begin{quote}
Order applicants by tie-breaker.  Proceeding in order, assign each applicant to his or her most preferred school among those with seats remaining.  
\end{quote} 
Seating is constrained by a capacity vector, $q=(q_0,q_1,q_2,...,q_S)$, where $q_{s}\in[0, 1]$ is defined as the proportion of the unit interval that can be seated at school $s$. We assume $q_0=1$. 
Serial dictatorship is used in Boston and New York City to allocate seats at selective public exam schools.

Because serial dictatorship relies on a single tie-breaker, notation for the set of non-lottery tie-breakers, $\mathcal{R}_i$, can be replaced by a scalar, $R_{i}$.  As in \cite{mdrd1:17}, tie-breakers for individuals are modelled as stochastic, meaning they are drawn from a distribution for each applicant.  Although $R_{i}$ is not necessarily uniform, we assume that it's distributed with positive density over $[0, 1]$, with continuously differentiable cumulative distribution function, $F^i_R$.  These common support and smoothness assumptions notwithstanding, tie-breakers may be correlated with type, so that $R_{i}$ and $R_{j}$ for applicants $i$ and $j$ are not necessarily identically distributed, though they're assumed to be independent of one another.
The probability that type $\theta$ applicants have a tie-breaker below any value $r$ is $F_R(r| \theta) \equiv E[F^i_R(r)| \theta_i=\theta]$, where $F^i_R(r)$ is $F^i_R$ evaluated at $r$.  

The serial dictatorship allocation is characterized by a set of \textit{tie-breaker cutoffs}, denoted $\tau_s$ for school $s$. For any school $s$ that's filled to capacity, $\tau_s$ is given by the tie-breaker of the last (highest tie-breaker value) student assigned to $s$. Otherwise, $\tau_s=1$, a non-binding cutoff reflecting excess capacity. We say an applicant \textit{qualifies} at $s$ when they have a tie-breaker value that clears $\tau_s$.  Under serial dictatorship, students are assigned to $s$ if and only if they:
\begin{itemize}
\item qualify at $s$ (since seats are assigned in tie-breaker order)
\item fail to qualify at any school they prefer to $s$ (since serial dictatorship assigns available seats at preferred schools first)
\end{itemize}

\noindent In large markets, cutoffs are constant, so stochastic variation in seat assignments arises solely from the distribution of tie-breakers. 

\subsection{The Serial Dictatorship Propensity Score}\label{sec:SDscoredetail}

Which cutoffs matter? Under serial dictatorship, the assignment probability faced by an applicant of type $\theta$ at school $s$ is determined by the cutoff at $s$ and by cutoffs at schools preferred to $s$. By virtue of single tie-breaking, it's enough to know only one of the latter.  In particular, an applicant who fails to clear the highest cutoff among those at schools preferred to $s$ surely fails to do better than $s$.  This leads us to define \textit{most informative
disqualification} (MID), a scalar parameter for each applicant type and school.  MID tells us how the tie-breaker distribution among type $\theta$ applicants to $s$ is truncated by disqualification at the schools type $\theta$ applicants prefer to $s$.   

Because MID for type $\theta$ at school $s$ is defined with reference to the set of schools $\theta$ prefers to $s$, 
we define:
\begin{equation}\label{better}
B_{\theta s}=\{s' \ne s  \hspace*{.1cm} | \hspace*{.1cm}  s' \succ_{\theta}s\} \text{ for each } \theta \in \Theta_{s},
\end{equation}
the set of schools type $\theta$ prefers to $s$.
For each type and school, $MID_{\theta s}$ is a function of tie-breaker cutoffs at schools in $B_{\theta s}$, specifically:
\begin{align}\label{SDMID}
MID_{\theta s}\equiv
\left\{
\begin{array}
[c]{ll}
0 & \text{ if } B_{\theta s} = \emptyset\\
\max\{\tau_{b}  \hspace*{.1cm} | \hspace*{.1cm} b\in B_{\theta s}\} & \text{ otherwise. }
\end{array}
\right.
\end{align}

$MID_{\theta s}$ is zero when school $s$ is ranked first since all who rank $s$ first compete for a seat there.  The second line reflects the fact that an applicant who ranks $s$ second is seated there only when disqualified at the school they've ranked first, while applicants who rank $s$ third are seated there when disqualified at their first and second choices, and so on.  Moreover, anyone who fails to clear cutoff
$\tau_{b}$ is surely disqualified at schools with less forgiving cutoffs.
For example, applicants who fail to qualify at a school with a cutoff of 0.6 are disqualified at a school with cutoff 0.4.

Note that an applicant of type $\theta$ cannot be seated at $s$ when $MID_{\theta s} > \tau_s$.  This is the scenario sketched in the top panel of Figure \ref{fig:SDsketch}, which illustrates the forces determining SD assignment rates. On the other hand, assignment rates when $MID_{\theta s} \leq \tau_s$ are given by the probability that:
$$MID_{\theta s} < R_i \leq \tau_s,$$  
an event described in the middle panel of Figure \ref{fig:SDsketch}.
These facts are collected in the following proposition, which is implied by a more general result for DA proved in the online appendix.  

\begin{proposition}[Propensity Score in Serial Dictatorship]\label{RSDpscore}
Suppose seats in a large market are assigned by serial dictatorship. Let $p_s(\theta)=E[D_i(s)| \theta_i=\theta]$ denote the type $\theta$ propensity score for assignment to $s$.  
For all schools $s$ and $\theta \in \Theta_{s}$, we have:
\begin{align*}
p_{s}(\theta) =  \max \left\{ 0, F_R(\tau_{s}| \theta)-F_R(MID_{\theta s}| \theta) \right\}.
\end{align*}
\end{proposition}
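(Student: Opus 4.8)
The plan is to compute the propensity score directly as the conditional probability $p_s(\theta)=\Pr(D_i(s)=1\mid \theta_i=\theta)$ by translating the two bulleted characterizations of serial-dictatorship assignment into events for the scalar tie-breaker $R_i$, and then integrating against the type-$\theta$ tie-breaker distribution. First I would fix a school $s$ and a type $\theta\in\Theta_s$ and record the two conditions under which a type-$\theta$ applicant lands at $s$: the applicant \emph{qualifies} at $s$, meaning $R_i\le\tau_s$, and the applicant \emph{fails} at every school it prefers to $s$, meaning $R_i>\tau_b$ for every $b\in B_{\theta s}$.

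The key simplification, and the reason the answer depends only on the single scalar $MID_{\theta s}$, is that serial dictatorship breaks ties everywhere with the \emph{same} tie-breaker $R_i$. Hence ``fails at every $b\in B_{\theta s}$'' is equivalent to $R_i>\max\{\tau_b : b\in B_{\theta s}\}$, because a single value $R_i$ exceeds every $\tau_b$ if and only if it exceeds their maximum. By definition this maximum is exactly $MID_{\theta s}$ when $B_{\theta s}\ne\emptyset$; when $B_{\theta s}=\emptyset$ the failure condition is vacuous and $MID_{\theta s}=0$, which is consistent since $F_R(0\mid\theta)=0$ under the positive-density assumption. Combining qualification and failure, the assignment event is exactly $\{MID_{\theta s}<R_i\le\tau_s\}$.

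With the event in hand, I would compute its probability. In the large market the cutoffs $\tau_s$ and $\tau_b$ are deterministic constants, so conditional on $\theta_i=\theta$ the indicator $D_i(s)$ is a fixed function of $R_i$ alone, namely $D_i(s)=\mathbf{1}\{MID_{\theta s}<R_i\le\tau_s\}$. Taking the conditional expectation and using linearity together with the definition $F_R(r\mid\theta)=E[F^i_R(r)\mid\theta_i=\theta]$ gives $p_s(\theta)=F_R(\tau_s\mid\theta)-F_R(MID_{\theta s}\mid\theta)$ whenever $MID_{\theta s}\le\tau_s$; continuity of $F^i_R$ makes the distinction between strict and weak inequalities immaterial. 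When instead $MID_{\theta s}>\tau_s$ the interval is empty, so $D_i(s)=0$ surely and $p_s(\theta)=0$. These two cases are precisely what the $\max\{0,\cdot\}$ records, which yields the stated formula.

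The main obstacle is not the probability calculation, which is elementary once the assignment event is pinned down, but rather the structural input that cutoffs are nonrandom in the large market, so that they may be held fixed while only $R_i$ fluctuates within a type. This fixed-point property is what the continuum model delivers and is the substantive content borrowed from the general DA argument in the appendix; here it lets me treat $\tau_s$ and $MID_{\theta s}$ as constants. I would also flag that the collapse of the failure condition to a single scalar $MID_{\theta s}$ is special to single tie-breaking; with multiple running variables this reduction fails, which is exactly what motivates the local propensity score developed later.
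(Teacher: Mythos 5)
Your proof is correct and follows essentially the same route as the paper: the paper derives Proposition 1 as the priority-free, single-tie-breaker specialization of its general DA score result, whose proof likewise fixes the deterministic large-market cutoffs, reduces the ``not assigned to a preferred school'' condition to $R_i > MID_{\theta s}$, and evaluates the probability of the event $\{MID_{\theta s} < R_i \le \tau_s\}$ against $F_R(\cdot\mid\theta)$. Your direct computation of that single event is just the collapsed form of the paper's product decomposition $(1-F_R(MID_{\theta s}\mid\theta))\times\max\bigl\{0,\tfrac{F_R(\tau_s\mid\theta)-F_R(MID_{\theta s}\mid\theta)}{1-F_R(MID_{\theta s}\mid\theta)}\bigr\}$, so there is no substantive difference.
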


Proposition \ref{RSDpscore} says that the serial dictatorship assignment probability, positive only when the tie-breaker cutoff at $s$ exceeds $MID_{\theta s}$, is given by the size of the group with $R_i$ between $MID_{\theta s}$ and $\tau_s$.  This is $$F_R(\tau_{s}| \theta)-F_R(MID_{\theta s}| \theta).$$
With a uniformly distributed lottery number, the serial dictatorship propensity score simplifies to $\tau_s - MID_{\theta s}$, a scenario noted in Figure \ref{fig:SDsketch}. In this case, the assignment probability for each applicant is determined by $\tau_s$ and $MID_{\theta s}$ alone. Given these two cutoffs, seats at $s$ are randomly assigned. 

\subsection{Serial Dictatorship Goes Local} 

With non-lottery tie-breaking, the serial dictatorship propensity score depends on the conditional distribution function, $F_R(\cdot| \theta)$ evaluated at $\tau_s$ and $MID_{\theta s}$, rather than the cutoffs themselves. This dependence leaves us with two econometric challenges.  First, $F_R(\cdot| \theta)$ is unknown. This precludes computation of the propensity score by repeatedly sampling from $F_R(\cdot| \theta)$. 
Second, $F_R(\cdot| \theta)$, is likely to depend on $\theta$, so the score in Proposition \ref{RSDpscore} need not have coarser support than does $\theta$. This is in spite of the fact many applicants with different values of $\theta$ share the same $MID_{\theta s}$. 
Finally, although controlling for $p_{s}(\theta)$
eliminates confounding from type, assignments are a function of tie-breakers as well as type. Confounding from non-lottery tie-breakers remains even after conditioning on $p_{s}(\theta)$.

These challenges are met here by focusing on assignment probabilities for applicants with tie-breaker realizations close to key cutoffs. Specifically, for each $\tau_s$, define an interval, $(\tau_s - \delta, \tau_s + \delta]$, where parameter $\delta$ is a bandwidth analogous to that used for nonparametric RD estimation.  A local propensity score treats the qualification status of applicants inside this interval as randomly assigned.  This assumption is justified by the fact that, given continuous differentiability of tie-breaker distributions, non-lottery tie-breakers have a limiting uniform distribution as the bandwidth shrinks to zero.   

\setcounter{figure}{0}
\begin{figure}[!t]
    \centering
    \caption{Assignment Probabilities under Serial Dictatorship}
    \includegraphics[width=\textwidth]{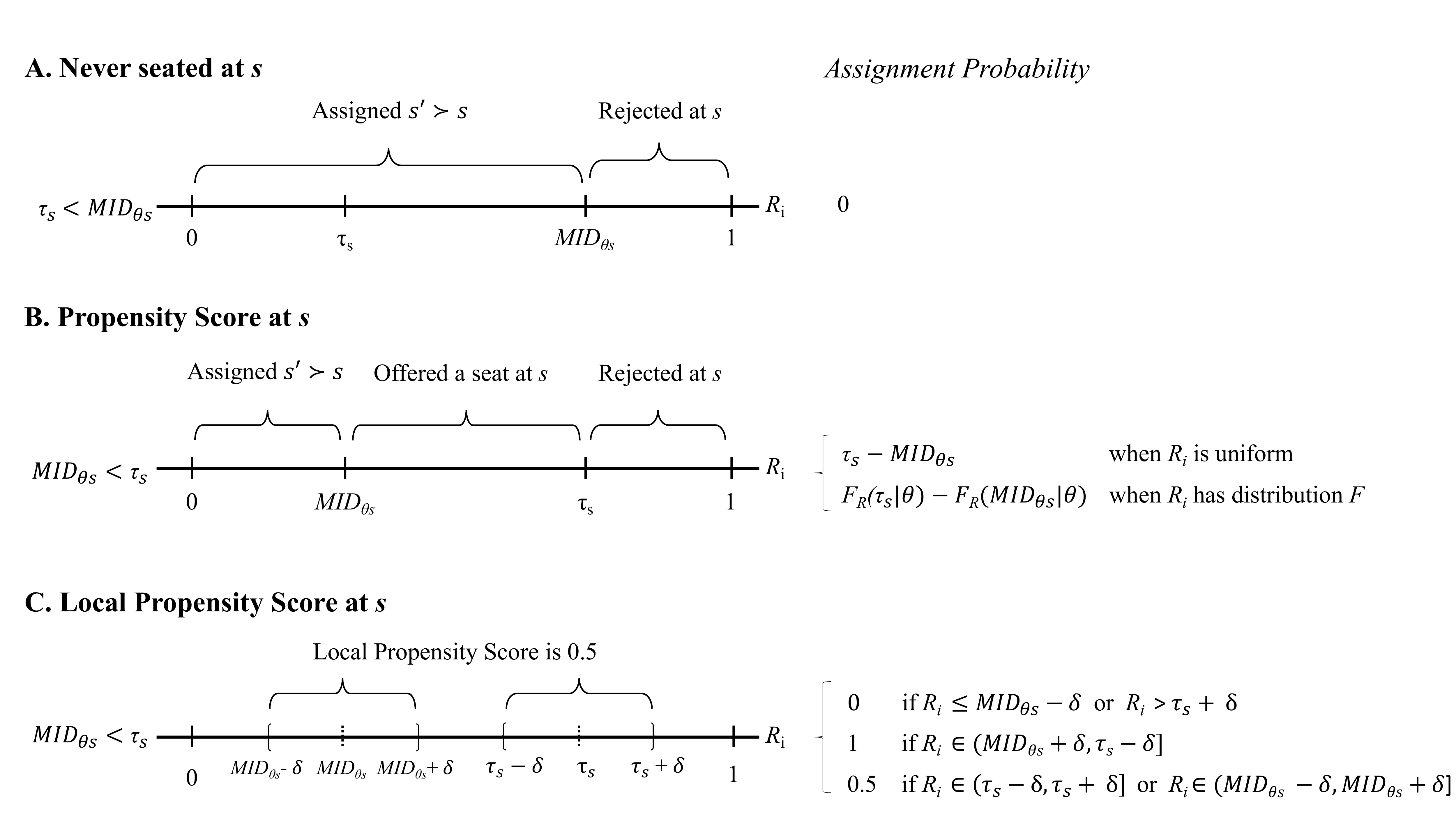}  \label{fig:SDsketch}
    \vspace{-0.5cm}
\floatfoot{\footnotesize{\textit{Notes:} This figure illustrates the assignment probability at school $s$ under serial dictatorship. $R_i$ is the tie-breaker. $MID_{\theta s}$ is the most forgiving cutoff at schools preferred to $s$ and $\tau_s$ is the cutoff at $s$.}}
\end{figure}


The following Proposition uses this fact to characterize the local serial dictatorship propensity score:

\begin{proposition}[Local Serial Dictatorship Propensity Score]\label{local_pscore} 
Suppose seats in a large market are assigned by serial dictatorship. Also, let $W_i$ be any applicant characteristic other than type that is unchanged by school assignment.\footnote{Let $W_i=\Sigma_s D_{i}(s)W_{i}(s)$, where $W_{i}(s)$ is the potential value of $W_i$ revealed when $D_{i}(s)=1$. We say $W_i$ is unchanged by school assignment when $W_{i}(s)=W_{i}(s')$ for all $s \neq s'$. Examples include demographic characteristics and potential outcomes.} Finally, assume $\tau_s\neq \tau_{s'}$ for all $s\neq s'$ unless both are 1. 
Then, 
$$E[D_{i}(s)| \theta_i=\theta, W_i=w]=  0 \text{ if } \tau_s < MID_{\theta s}.$$ 
Otherwise, 
\begin{align*}\label{simplescore}
&\lim_{\delta\rightarrow 0}E[D_{i}(s)| \theta_i=\theta, W_i=w, R_i \le MID_{\theta s}-\delta ] =\lim_{\delta\rightarrow 0}E[D_{i}(s)| \theta_i=\theta, W_i=w, R_i>\tau_s+\delta) ] = 0, \\
&\lim_{\delta\rightarrow 0}E[D_{i}(s)| \theta_i=\theta, W_i=w, R_i \in (MID_{\theta s}+\delta, \tau_s-\delta] ] = 1, \\
&\lim_{\delta\rightarrow 0}E[D_{i}(s)| \theta_i=\theta, W_i=w, R_i \in (MID_{\theta s}-\delta,MID_{\theta s}+\delta] ] \\
& \quad \ = \lim_{\delta\rightarrow 0}E[D_{i}(s)| \theta_i=\theta, W_i=w, R_i \in (\tau_s-\delta,\tau_s+\delta]] = 0.5.
\end{align*}
\end{proposition}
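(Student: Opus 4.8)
The plan is to reduce the entire statement to the deterministic offer rule that holds in the continuum market and then to examine the conditioning interval relative to the two cutoffs $MID_{\theta s}$ and $\tau_s$ one region at a time. Because cutoffs are constant in the large market, the qualification/disqualification logic behind Proposition~\ref{RSDpscore} implies that, conditional on $\theta_i=\theta$, the offer is a deterministic function of the single tie-breaker, $D_i(s)=\mathbf{1}\{MID_{\theta s}<R_i\le\tau_s\}$. Since $W_i$ is unchanged by assignment, $D_i(s)$ does not depend on $W_i$ except through $R_i$, so conditioning on $W_i=w$ enters only through the conditional law of $R_i$ given $(\theta,w)$. I will write its CDF as $F_R(\cdot\mid\theta,w)=E[F^i_R(\cdot)\mid\theta_i=\theta,W_i=w]$, which inherits continuous differentiability and a positive density from the individual-level assumptions on $F^i_R$; denote the density $f_R(\cdot\mid\theta,w)$.

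The degenerate case is immediate: when $\tau_s<MID_{\theta s}$ the set $\{MID_{\theta s}<r\le\tau_s\}$ is empty, so $D_i(s)\equiv 0$ and its conditional mean vanishes for any conditioning event, giving the first display. For the remaining claims assume $\tau_s\ge MID_{\theta s}$. Here the distinct-cutoff hypothesis ($\tau_s\ne\tau_{s'}$ unless both equal $1$) rules out $\tau_s=MID_{\theta s}$, since $MID_{\theta s}$ is the cutoff of some school preferred to $s$; the lone exception, both cutoffs equal to $1$, again leaves the assignment interval empty and $D_i(s)\equiv0$. Thus on the relevant branch the gap $\tau_s-MID_{\theta s}$ is strictly positive and, for all sufficiently small $\delta$, the four conditioning intervals are disjoint and of positive probability. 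On the two exterior intervals and on the interior interval $(MID_{\theta s}+\delta,\tau_s-\delta]$ the indicator $\mathbf{1}\{MID_{\theta s}<R_i\le\tau_s\}$ is identically $0$, $0$, and $1$ respectively, so the conditional means equal these constants exactly, irrespective of $w$, and the stated limits hold a fortiori.

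The real content is the two boundary intervals, where I would invoke the local uniformity of the tie-breaker. Take $(\tau_s-\delta,\tau_s+\delta]$; for $\delta$ below half the cutoff gap this interval lies entirely above $MID_{\theta s}$, so there $D_i(s)=\mathbf{1}\{R_i\le\tau_s\}$ and the conditional mean equals the ratio of the $F_R(\cdot\mid\theta,w)$-measure of $(\tau_s-\delta,\tau_s]$ to that of $(\tau_s-\delta,\tau_s+\delta]$. The mean value theorem writes numerator and denominator as $f_R(\xi_1\mid\theta,w)\,\delta$ and $f_R(\xi_2\mid\theta,w)\,2\delta$ with $\xi_1,\xi_2\to\tau_s$, and continuity together with positivity of the density at $\tau_s$ then force the ratio to $1/2$. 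The same computation at $MID_{\theta s}$---where the assigned portion is the upper half-interval $(MID_{\theta s},MID_{\theta s}+\delta]$ because $R_i\le\tau_s$ holds automatically for small $\delta$---delivers $1/2$ as well.

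I expect this last boundary step to be the main obstacle, and it is the only place the smoothness and bandwidth apparatus is actually used. The delicate points are (i) justifying that $F_R(\cdot\mid\theta,w)$ is itself continuously differentiable with a density continuous and bounded away from zero at the cutoffs, which requires interchanging expectation and differentiation in $F_R(r\mid\theta,w)=E[F^i_R(r)\mid\theta,w]$---legitimate given smooth, positive individual densities on the compact support $[0,1]$---and (ii) confirming that the conditioning events keep positive probability as $\delta\to 0$, which again follows from positivity of the density. Everything else is bookkeeping around the deterministic offer rule.
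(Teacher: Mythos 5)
Your argument is correct, and the analytic core is the same as the paper's: the only non-trivial step is that the ratio of CDF increments $\bigl(F_R(\tau_s\mid\cdot)-F_R(\tau_s-\delta\mid\cdot)\bigr)/\bigl(F_R(\tau_s+\delta\mid\cdot)-F_R(\tau_s-\delta\mid\cdot)\bigr)$ tends to one-half under continuous differentiability with positive density (you get there by the mean value theorem, the paper by L'H\^{o}pital --- immaterial). The route differs in packaging: the paper never proves Proposition \ref{local_pscore} directly, but obtains it as the single-tie-breaker, no-priority specialization of Theorem \ref{theorem:local}, whose proof goes through Lemma \ref{pscoreContinuum} and Lemma \ref{lemma2} --- a fixed-$\delta$ score formula built from sequential conditioning across tie-breakers ($e(v)$, the products $\Phi_\delta$, $\Phi'_\delta$) --- before taking limits. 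Your direct reduction to the deterministic offer rule $D_i(s)=\mathbf{1}\{MID_{\theta s}<R_i\le\tau_s\}$ followed by a case split on the four conditioning intervals is more elementary and self-contained; what it gives up is reusability for the general DA setting with priorities and multiple tie-breakers, which is precisely what the paper's lemmas are built to handle. Two points you handle at least as carefully as the paper: you flag the degenerate $\tau_s=MID_{\theta s}=1$ corner explicitly (where the boundary limit statement is vacuous because the conditioning events concentrate on an empty assignment set), and you flag the interchange of expectation and differentiation needed for $F_R(\cdot\mid\theta,w)$ to inherit smoothness, which the paper simply asserts as an implication of Assumption \ref{assumption_smooth}.
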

\noindent This follows from a more general result for DA presented in the next section.

Proposition \ref{local_pscore} describes a key conditional independence result: the limiting local probability of seat assignment in serial dictatorship takes on only three values and is unrelated to applicant characteristics.    
Note that the cases enumerated in the proposition (when $\tau_s > MID_{\theta s}$) partition the tie-breaker line as sketched in Figure \ref{fig:SDsketch}.  
Applicants with tie-breaker values above the cutoff at $s$ are disqualified at $s$ and so cannot be seated there, while applicants with tie-breaker values below $MID_{\theta s}$ are qualified at a school they prefer to $s$ and so will be seated elsewhere.  Applicants with tie-breakers strictly between $MID_{\theta s}$ and $\tau_s$ are surely assigned to $s$. Finally, type $\theta$ applicants with tie-breakers near either $MID_{\theta s}$ or the cutoff at $s$ are seated with probability approximately equal to one-half.   Nearness in this case means inside the interval defined by bandwidth $\delta$.

The driving force behind Proposition \ref{local_pscore} is the assumption that the tie-breaker distribution is continuously differentiable.  In a shrinking window, the tie-breaker density therefore approaches that of a uniform distribution, so the limiting qualification rate is one-half (See \cite{mdrd2:17} or \cite{bugni2018testing} for formal proof of this claim). The assumption of a continuously differentiable tie-breaker distribution is analogous to the continuous running variable assumption invoked in \cite{lee2008randomized} and to a local smoothness assumption in \cite{dong2018alternative}. Continuity of tie-breaker distributions implies a weaker smoothness condition asserting continuity at cutoffs of the conditional expectation functions of potential outcomes given running variables. We favor the stronger continuity assumption because the implied local random assignment provides a scaffold for construction of assignment probabilities in more complicated matching scenarios.\footnote{The connection between continuity of running variable distributions and conditional expectation functions is noted by \cite{dong2018alternative} and \cite{arai2019testing}. Antecedents for the local random assignment idea include an unpublished appendix to \cite{frolich2007regression} and an unpublished draft of \cite{frandsen2017party}, which shows something similar for an asymmetric bandwidth. See also \cite{cattaneo2015randomization} and \cite{frolich/huber2019}.} 

\section{The Local DA Propensity Score}\label{sec:multiscore}

Many school districts assign seats using a version of \textit{student-proposing DA}, which can be described like this:
\begin{quote}
	Each applicant proposes to his or her most preferred school. Each school ranks these
	proposals, first by priority then by tie-breaker within priority groups,
	\textit{provisionally} admitting the highest-ranked applicants in this order up to
	its capacity. Other applicants are rejected.
	
	Each rejected applicant proposes to his or her next most preferred school. Each school
	ranks these new proposals \textit{together with applicants admitted provisionally in the previous round}, first by priority and then by
	tie-breaker. From this pool, the school again provisionally admits those ranked highest
	up to capacity, rejecting the rest.
	
	The algorithm terminates when there are no new proposals (some applicants may remain unassigned).
\end{quote}

Different schools may use different tie-breakers. For example, the NYC high school match includes a diverse set of screened schools \citep{apr:05,abdulkadiroglu/pathak/roth:09}.  These schools order applicants using school-specific tie-breakers that are derived from interviews, auditions, or GPA in earlier grades, as well as test scores. The NYC match also includes many unscreened schools, referred to here as lottery schools, that use a uniformly distributed lottery number as tie-breaker. Lottery numbers are distributed independently of type and potential outcomes, but non-lottery tie-breakers like entrance exam scores almost certainly depend on these variables.  


\subsection{Key Assumptions and Main Theorem} 
%

We adopt the convention that tie-breaker indices are ordered such that lottery tie-breakers come first.  That is, 
$v\in\{1,...,U\}$, where $U\leq V$, indexes $U$ lottery tie-breakers. 
Each lottery tie-breaker, $R_{iv}$ for $v=\{1,...U\}$, is uniformly distributed over $[0,1]$.
Non-lottery tie-breakers are indexed by $v \in \{U+1,...,V\}$.
The assumptions employed with general tie-breaking are summarized as follows: 
\begin{assumption}\label{assumption_smooth} 
~\begin{enumerate}[(i)] \item For any tie-breaker indexed by $v\in\{1,...,V\}$ and applicants $i \neq j$, tie-breakers $R_{iv}$ and $R_{jv}$ are independent, though not necessarily identically distributed.  
\item The unconditional joint distribution of non-lottery tie-breakers $\{R_{iv};  v=U+1,..., V\}$ for applicant $i$ is continuously differentiable with positive density over $[0,1]$.
\end{enumerate}
\end{assumption}
Let $v(s)$ be a function that returns the index of the tie-breaker used at school $s$.  By definition, $s\in S_{v(s)}$.  To combine applicants' priority status and tie-breaking variables into a single number for each school, we define  \textit{applicant
position} at school $s$ as:
\[
\pi_{is} = \rho_{is}+R_{iv(s)}.
\]
Since the difference between any two priorities is at least 1 and tie-breaking variables are between 0 and 1, applicant
order by position at $s$ is lexicographic, first by priority then by tie-breaker. As noted in the discussion of serial dictatorship, we distinguish between tie-breakers and priorities because the latter are fixed, while the former are random variables. 

We also generalize cutoffs to incorporate priorities; these \textit{DA cutoffs} are denoted $\xi_s$.  For any school $s$ that ends up filled to capacity, $\xi_s$ is given by $\max_i\{\pi_{is}| D_i(s)=1\}$. Otherwise, we set
$\xi_s=K+1$ to indicate that $s$ has slack (recall that $K$ is the lowest possible priority).

DA assigns a seat at school $s$ to any applicant $i$ ranking $s$ who has
\begin{equation}
\pi_{is} \leq \xi_s \text{ and } \pi_{ib} > \xi_b \text{ for all } b \succ_i s.\label{eq:core}
\end{equation}
This is a consequence of the fact that the student-proposing DA is stable.\footnote{In particular, if an applicant is seated at $s$ but prefers $b$, she must be qualified at $s$ and not have been assigned to $b$.  Moreover, since DA-generated assignments at $b$ are made in order of position, applicants not assigned to $b$ must be disqualified there.}
In large markets, $\xi_s$ is fixed as tie breakers are drawn and re-drawn.  DA-induced school assignment rates are therefore determined by the distribution of stochastic tie-breakers evaluated at fixed school cutoffs. Condition (\ref{eq:core}) nests our characterization of seat assignment under serial dictatorship since we can set $\rho_{is}=0$ for all applicants and use a single tie-breaker to determine position.  Statement (\ref{eq:core}) then says that $R_i\leq\tau_s$ and
$R_{i}>MID_{\theta s}$ for applicants with $\theta_i=\theta$. 


The DA propensity score is the probability of the event described by \eqref{eq:core}. This probability is determined in part by \textit{marginal priority} at school $s$, denoted $\rho_s$ and defined as $\text{int}(\xi_s)$, the integer part of the DA cutoff.  Conditional on rejection by all preferred schools, applicants to $s$ are assigned $s$ with certainty if $\rho_{is}<\rho_{s}$, that is, if they clear marginal priority. Applicants with $\rho_{is}>\rho_{s}$ have no chance of finding a seat at
$s$. Applicants for whom $\rho_{is} = \rho_{s}$ are marginal: these applicants are seated at $s$ when their tie-breaker values fall below tie-breaker cutoff $\tau_{s}$. This quantity can therefore be written as the decimal part of the DA cutoff: 
$$\tau_s=\xi_{s}-\rho_{s}.$$
Applicants with marginal priority have $\rho_{is} = \rho_{s}$, so
$\pi_{is} \leq \xi_{s} \Leftrightarrow R_{iv(s)} \leq \tau_s.$


	
	


In addition to marginal priority, the local DA propensity score is conditioned on applicant position relative to screened school cutoffs.  To describe this conditioning, define a set of variables, $t_{is}(\delta)$, as follows:
\begin{equation*} 
\begin{split}
t_{is}(\delta)= 
&
\left\{
\begin{array}
[c]{ll}
n & \text{ if }\rho_{\theta s}>\rho_{s} \text{ or, if } v(s) > U, \: \rho_{\theta s}=\rho_{s} \text{ and } R_{iv(s)} > \tau_{s} + \delta \\
a & \text{ if }\rho_{\theta s}<\rho_{s} \text{ or, if } v(s) > U, \: \rho_{\theta s}=\rho_{s} \text{ and } R_{iv(s)} \leq \tau_{s} - \delta \\
c & \text{ if }\rho_{\theta s}=\rho_{s}\text{ and, if } v(s) > U, \, R_{iv(s)} \in (\tau_{s} - \delta,\tau_{s} + \delta], \\
\end{array}
\right.
\end{split}
\end{equation*}
where the mnemonic value labels $n,a,c$ stand for \textit{never seated, always seated, and conditionally seated}. 
It's convenient to collect these variables in a vector,
\[
T_{i}(\delta)=[t_{i 1}(\delta), ..., t_{i s}(\delta), ..., t_{i S}(\delta)].
\]

Elements of $T_i(\delta)$ for unscreened schools are a function only of the partition of types determined by marginal priority.  For screened schools, however, $T_i(\delta)$ also encodes the relationship between tie-breakers and cutoffs. Never-seated applicants to $s$ cannot be seated there, either because they fail to clear marginal priority at $s$ or because they're too far above the cutoff when $s$ is screened.  Always-seated applicants to $s$ are assigned $s$ for sure when they can't do better, either because they clear marginal priority at $s$ or because they're well below the cutoff at $s$ when $s$ is screened.  Finally, conditionally-seated applicants to $s$ are randomized marginal priority applicants.  Randomization is by lottery number when $s$ is a lottery school or by non-lottery tie-breaker within the bandwidth when $s$ is screened. 

With this machinery in hand, the \textit{local DA propensity score} is defined as follows:
\[
\psi_s(\theta,T)=\lim_{\delta\rightarrow 0} E[D_i(s)|\theta_{i}=\theta,T_i(\delta)=T],
\]
for $T=[t_{1}, ..., t_{s}, ..., t_{S}]$ where $t_{s}\in\{n,a,c\}$ for each $s$. This describes assignment probabilities as a function of type and cutoff proximity at each school.  As in Proposition \ref{local_pscore}, formal characterization of $\psi_s(\theta,T)$ requires cutoffs be distinct:

\begin{assumption}\label{assumption_differentiability}
	$\tau_s\neq \tau_{s'}$ for all $s\neq s'$ unless both are 1.
\end{assumption}

The formula characterizing $\psi_s(\theta,T)$ builds on an extension of the $MID$ idea to a general tie-breaking regime. First, the set of schools $\theta$ prefers to $s$, $B_{\theta s}$, is partitioned by tie-breakers by defining $B^{v}_{\theta s}\equiv\{b \in S_v  \hspace*{.1cm} | \hspace*{.1cm}  b \succ_{\theta}s\}$ for each $v$. We then have:
\begin{align*}
MID^{v}_{\theta s}=\left\{
\begin{array}
[c]{ll}%
0 &\text{ if } \rho_{\theta b}>\rho_{b} \text{ for all } b \in  B^{v}_{\theta s} \text{ or if } B_{\theta s}^v = \emptyset\\
1 &\text{ if } \rho_{\theta b}<\rho_{b} \text{ for some } b \in B^{v}_{\theta s} \\
\max\{ \tau_{b}  \hspace*{.1cm} | \hspace*{.1cm} b\in B^{v}_{\theta s} \text{ and } \rho_{\theta b}=\rho_{b} \} & \text{ otherwise. }
\end{array}
\right.
\end{align*}
\noindent  $MID^{v}_{\theta s}$ quantifies the extent to which qualification at schools using tie-breaker $v(s)$ and that type $\theta$ applicants prefer to $s$ truncates the tie-breaker distribution among those contending for seats at s.

Next, define:
\[
m_{s}(\theta,T) = |\{v>U: MID^{v}_{\theta s}=\tau_b \text{ and } t_b=c \text{ for some } b\in B^{v}_{\theta s} \}|.
\]
This quantity counts the number of RD-style experiments created by the screened schools that type $\theta$ prefers to $s$. 

The last preliminary to a formulation of local DA assignment scores uses $MID^{v}_{\theta s}$ and $m_{s}(\theta,T)$
to compute disqualification rates at all schools preferred to $s$.  We break this into two pieces: variation generated by screened schools and variation generated by lottery schools. As the bandwidth shrinks, the limiting disqualification probability at screened schools in $B_{\theta s}$ converges to
\begin{equation}
\sigma_s(\theta,T) = 0.5^{m_{s}(\theta,T)}.\label{eq:sigma} 
\end{equation}
The disqualification probability at lottery schools in $B_{\theta s}$ is 
\begin{equation}\label{eq:lotto_risk}
\lambda_s(\theta)=\prod_{v=1}^{U}(1-MID^{v}_{\theta s}),
\end{equation}
without regard to bandwidth.

To recap: the local DA score for type $\theta$ applicants is determined in part by the screened schools $\theta$ prefers to $s$. Relevant screened schools are those determining $MID^{v}_{\theta s}$, and at which applicants are close to tie-breaker cutoffs. The variable $m_{s}(\theta,T)$ counts the number of tie-breakers involved in such close encounters. Applicants drawing screened school tie-breakers close to $\tau_b$ for some $b\in B^{v}_{\theta s}$ face qualification rates of $0.5$ for each tie-breaker $v$. Since screened school disqualification is locally independent over tie-breakers, the term $\sigma_s(\theta,T)$ computes the probability of not being assigned a screened school preferred to $s$. Likewise, since the qualification rate at preferred lottery schools is $MID^{v}_{\theta s}$, the term $\lambda_s(\theta)$ computes the probability of not being assigned a lottery school preferred to $s$.

The following theorem combines these in a formula for the local DA propensity score:

\begin{theorem}[Local DA Propensity Score with General Tie-breaking]\label{theorem:local}
	Suppose seats in a large market are assigned by DA with tie-breakers indexed by $v$, and suppose Assumptions \ref{assumption_smooth} and \ref{assumption_differentiability} hold. 	For all schools $s, \theta,$ $T$ and $w$, we have 
	$$\psi_s(\theta,T)=\lim_{\delta \rightarrow 0} E[D_i(s)| \theta_i=\theta,T_i(\delta)=T, W_i=w]=0,$$
	if (a) $t_{s}=n$; or (b) $t_b=a \ \text{ for some } b \in B_{\theta s}$. Otherwise,
	\begin{equation}\label{equation_main_theorem2}
	\psi_s(\theta,T)=\left\{
	\begin{array}
	[c]{ll}
	\sigma_s(\theta,T)\lambda_s(\theta) & \text{ if } t_{s}=a\\
	\sigma_s(\theta,T)\lambda_s(\theta)\max\left\{ 0,\dfrac{\tau_{s}-MID^{v(s)}_{\theta s}}{1-MID^{v(s)}_{\theta s}}\right\} & \text{ if } t_{s}=c\text{ and }v(s)\leq U\\
	\sigma_s(\theta,T)\lambda_s(\theta) \times 0.5 & \text{ if } t_{s}=c\text{ and }v(s)> U.\\
	\end{array}
	\right.
	\end{equation}
\end{theorem}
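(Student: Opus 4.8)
The plan is to begin from the stability characterization \eqref{eq:core}: DA seats $i$ at $s$ exactly when $i$ qualifies at $s$ ($\pi_{is}\le\xi_s$) and is disqualified at every preferred school $b\in B_{\theta s}$ ($\pi_{ib}>\xi_b$). Because the market is large, every cutoff---and hence every $\tau_s$, $\rho_s$, and $MID^{v}_{\theta s}$---is a fixed constant, so the only randomness in $D_i(s)$ comes from $i$'s own tie-breakers. I would thus read $\psi_s(\theta,T)$ as the limiting conditional probability of the intersection of the qualification event at $s$ with the disqualification events at all $b\in B_{\theta s}$, taken over the draw of $(R_{i1},\dots,R_{iV})$ given $\theta_i=\theta$, $T_i(\delta)=T$, and $W_i=w$. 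The two vanishing cases require no limit: if $t_s=n$ then $i$ misses marginal priority at $s$ or, for screened $s$, has $R_{iv(s)}>\tau_s+\delta$, so $i$ is disqualified at $s$ for every $\delta$; if $t_b=a$ for some $b\in B_{\theta s}$ then $i$ qualifies at a preferred school. Either way \eqref{eq:core} forces $D_i(s)=0$ and $\psi_s(\theta,T)=0$. In the remaining main case every preferred school carries label $n$ or $c$ and $t_s\in\{a,c\}$.

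Next I would group the events by tie-breaker index $v$ and reduce them. The role of $MID^{v}_{\theta s}$ is to collapse ``disqualified at every $b\in B^{v}_{\theta s}$'' into the single inequality $R_{iv}>MID^{v}_{\theta s}$: disqualification is automatic at non-marginal preferred schools, and among marginal ones being above the largest cutoff (the maximum defining $MID^{v}_{\theta s}$) implies being above all smaller ones. Assumption \ref{assumption_differentiability} then supplies two facts. First, at most one school per tie-breaker can carry a $c$ label for small $\delta$, since two $c$-labels on the same $v$ would demand that $R_{iv}$ lie in two disjoint shrinking windows. Second, inconsistent configurations are excluded: if $s$ is screened with $t_s=c$ and some preferred $b$ sharing $v(s)$ had $\tau_b>\tau_s$, then $R_{iv(s)}$ near $\tau_s$ would fall below $\tau_b-\delta$, i.e. $t_b=a$, returning us to the vanished case; hence every preferred school sharing $v(s)$ has a strictly smaller cutoff and contributes disqualification probability one.

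With these reductions the conditional probability factors across tie-breakers. The $U$ lottery coordinates are uniform, mutually independent, and independent of type, $W_i$, and the non-lottery coordinates, and conditioning on $T$ does not restrict them because lottery labels depend only on marginal priority; this part of the argument is exact rather than limiting. For $v\le U$ with $v\ne v(s)$ the factor is $1-MID^{v}_{\theta s}$, while for $v(s)\le U$ with $t_s=c$ the shared coordinate $R_{iv(s)}$ yields the joint factor $\max\{0,\tau_s-MID^{v(s)}_{\theta s}\}$. Multiplying reproduces $\lambda_s(\theta)$ in the $t_s=a$ and screened-$c$ cases, and $\lambda_s(\theta)\max\{0,(\tau_s-MID^{v(s)}_{\theta s})/(1-MID^{v(s)}_{\theta s})\}$ in the lottery-$c$ case, with the denominator simply cancelling the spurious $v(s)$ factor carried inside $\lambda_s(\theta)$.

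The main obstacle is the non-lottery product and the $0.5$'s, which is precisely the local-randomization step. For a non-lottery $v$ with no $c$-school the disqualification event is deterministic given $T$ (probability one in the main case), so nothing is needed; for the unique $c$-school on $v$, conditioning localizes $R_{iv}$ to a window $(\tau-\delta,\tau+\delta]$, and I would invoke Assumption \ref{assumption_smooth}(ii): continuous differentiability with positive density makes the conditional law of $R_{iv}$ on a shrinking window converge to the uniform, so the chance of landing above rather than below the cutoff tends to $1/2$, with the within-window positions across distinct $v$ becoming jointly independent and independent of $W_i$ in the limit (the claim established in \cite{mdrd2:17} and \cite{bugni2018testing}). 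Counting these independent fair coin flips over the preferred screened schools gives $0.5^{m_s(\theta,T)}=\sigma_s(\theta,T)$; when $s$ itself is screened with $t_s=c$, its own qualification is one further independent $1/2$ on the tie-breaker $v(s)$, which carries no preferred $c$-school and is therefore not counted in $m_s(\theta,T)$, producing the extra factor in the third line. This limiting independence from $W_i$ is also what renders $\psi_s(\theta,T)$ free of $w$, completing the argument.
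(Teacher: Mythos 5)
Your argument is correct and follows essentially the same route as the paper's proof: characterize assignment via the stability condition with fixed large-market cutoffs, collapse disqualification at preferred schools into the single inequality $R_{iv}>MID^{v}_{\theta s}$ per tie-breaker, use Assumption \ref{assumption_differentiability} to rule out multiple $c$-labels on one tie-breaker, compute the lottery factors exactly, and obtain the $0.5$ factors from local uniformity of smooth tie-breaker distributions near cutoffs. The only organizational difference is that the paper first derives an exact fixed-$\delta$ formula via sequentially conditioned CDF ratios (Lemmas \ref{pscoreContinuum} and \ref{lemma2}) and then passes to the limit with L'H\^{o}pital's rule, whereas you argue the limit directly; the decomposition and key ideas are the same.
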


\bigskip

Theorem \ref{theorem:local} starts with a scenario where applicants to $s$ are either disqualified there or assigned to a preferred school for sure.\footnote{See the appendix for proof of the Theorem, along with other theoretical results, including derivation of a non-limit form of the DA propensity score.}  In this case, we need not worry about whether $s$ is a screened or lottery school.  In other scenarios where applicants are surely qualified at $s$, the probability of assignment to $s$ is determined entirely by disqualification rates at preferred screened schools and by truncation of lottery tie-breaker distributions at preferred lottery schools.  These sources of assignment risk combine to produce the first line of \eqref{equation_main_theorem2}.   The conditional assignment probability at any lottery $s$, described on the second line of  \eqref{equation_main_theorem2}, is determined by the disqualification rate at preferred schools and the qualification rate at $s$, where the latter is given by $\tau_{s}-MID^{v(s)}_{\theta s}$ (to see this, note that $\lambda_s(\theta)$ includes the term $1-MID^{v(s)}$ in the product over lottery tie-breakers).  Similarly, the conditional assignment probability at any screened $s$, on the third line of  \eqref{equation_main_theorem2}, is determined by the disqualification rate at preferred schools and the qualification rate at $s$, where the latter is given by $0.5$.  

The Theorem covers the non-lottery tie-breaking serial dictatorship scenario in the previous section. With a single non-lottery tie-breaker, $\lambda_s(\theta)=1$.  When $t_s=n$ or $t_b=a$ for some $b \in B_{\theta s}$, the local propensity score at $s$
is zero. Otherwise, suppose $t_b=n$ for all $b \in B_{\theta s}$, so that $m_s(\theta,T)=0$.   If $t_s=a$,
then the local propensity score is $1$.  If $t_s=c$, then the local propensity score is $0.5$.  Suppose, instead, that $MID_{\theta s} = \tau_b$ for some $b \in B_{\theta s}$, so
that $m_s(\theta,T)=1$.  In this case, $t_s\neq c$ because cutoffs are distinct.  If $t_s=a$, then the local propensity score is $0.5$. 
Online Appendix \ref{appendix_illustration} uses an example to illustrate the Theorem in other scenarios.  

\subsection{Score Estimation}\label{section:convergence_result}

Theorem \ref{theorem:local} characterizes the theoretical probability of school assignment in a large market with a continuum of applicants. In reality, of course, the number of applicants is finite and propensity scores must be estimated.  We show here that, in an asymptotic sequence that increases market size with a shrinking bandwidth, a sample analog of the local DA score described by Theorem \ref{theorem:local} converges uniformly to the corresponding local score for a finite market. Our empirical application establishes the relevance of this asymptotic result by showing that applicant characteristics are balanced by assignment status conditional on estimates of the local DA propensity score.

The asymptotic sequence for the estimated local DA score works as follows: randomly sample $N$ applicants from a continuum economy. 
The applicant sample (of size $N$) includes information on each applicant's type and the vector of large-market school capacities, $q_s$, which give the proportion of $N$ seats that can be seated at $s$.  We observe realized tie-breaker values for each applicant, but not the underlying distribution of non-lottery tie-breakers. The set of finitely many schools is unchanged along this sequence. 

Fix the number of seats at school $s$ in a sampled finite market to be the integer part of $Nq_s$ and run DA with these applicants and schools.  We consider the limiting behavior of an estimator computed using the estimated $\hat{MID}^v_{\theta_i s}$, $\hat \tau_s$, and marginal priorities generated by this single realization.  Also, given a bandwidth $\delta_N>0$, we compute $t_{is}(\delta_N)$ for each $i$ and $s$, collecting these in vector $T_i(\delta_N)$. 
These statistics then determine:
$$\hat{m}_{s}(\theta_i, T_i(\delta_N))=|\{v>U: \hat{MID}^{v}_{\theta_i s}= \hat \tau_b \text{ and }  t_{ib}(\delta_N)=c \text{ for some } b\in B^{v}_{\theta_i s} \}|.$$


Our local DA score estimator, denoted $\hat \psi_{s}(\theta_i,T_i(\delta_N))$, is constructed by plugging these ingredients into the formula in Theorem \ref{theorem:local}. That is, if (a) $\hat t_{is}(\delta_N)=n$; or (b) $\hat t_{ib}(\delta_N)=a \ \text{ for some } b \in B_{\theta_i s}$, then $\hat \psi_{s}(\theta_i,T_i(\delta_N))=0.$ 
Otherwise,
	\begin{align}
	 \hat \psi_{s}(\theta_i,T_i(\delta_N)) = \qquad \qquad \qquad \qquad \qquad \qquad \qquad \qquad \qquad  \qquad \qquad \qquad \qquad \qquad & \nonumber \\
	\left\{
	\begin{array}
	[c]{ll}
	 \hat \sigma_s(\theta_i,T_i(\delta_N))\hat \lambda_s(\theta_i) &\text{if } t_{is}(\delta_N)=a\\
	 \hat \sigma_s(\theta_i,T_i(\delta_N))\hat \lambda_s(\theta_i)\max\left\{ 0,\frac{\hat \tau_{s}-\hat{MID}^{v(s)}_{\theta_i s}}{1-\hat{MID}^{v(s)}_{\theta_i s}}\right\} &\text{if }   t_{is}(\delta_N)=c\text{ and }v(s)\leq U\\
	 \hat \sigma_s(\theta_i,T_i(\delta_N)) \hat \lambda_s(\theta_i) \times 0.5 &\text{if }   t_{is}(\delta_N)=c\text{ and }v(s)> U,\\
	\end{array}
	\right.
	\end{align}
\noindent where 
$$\hat \sigma_s(\theta_i,T_i(\delta_N)) = 0.5^{\hat{m}_{s}(\theta_i, T_i(\delta_N))} $$
and
$$\hat \lambda_s(\theta_i)=\prod_{v=1}^{U}(1-\hat{MID}^{v}_{\theta_i s}).$$

As a theoretical benchmark for the large-sample performance of $\hat \psi_{s}$, consider the true local DA score for a finite market of size $N$.  This is 
\begin{align}
\psi_{Ns}(\theta, T) = \lim_{\delta\rightarrow 0} E_N[D_i(s)| \theta_i=\theta, T_i(\delta)=T],\label{eq:finitepsi}
\end{align}
\noindent where $E_N$ is the expectation induced by the joint tie-breaker distribution for applicants in the finite market. This quantity is defined by fixing the distribution of types and the vector of proportional school capacities, as well as market size.  $\psi_{Ns}(\theta, T)$ is then the limit of the average of $D_i(s)$ across infinitely many tie-breaker draws in ever-narrowing bandwidths for this finite market.  Because tie-breaker distributions are assumed to have continuous density in the neighborhood of any cutoff, the finite-market local propensity score is well-defined for any positive $\delta$.  

We're interested in the gap between the estimator $\hat \psi_{s}(\theta,T(\delta_N))$ and the true local score $\psi_{Ns}(\theta, T)$ as $N$ grows and $\delta_N$ shrinks. 
We show below that $\hat \psi_{s}(\theta,T(\delta_N))$ converges uniformly to 
$\psi_{Ns}(\theta, T)$ in our asymptotic sequence.

This result uses a regularity condition:

\begin{assumption}\label{assumption_types} (Rich support)
In the population continuum market, for every school $s$ and every priority $\rho$ held by a positive mass of applicants who rank $s$, the proportion of applicants $i$ with $\rho_{is}=\rho$ who rank $s$ first is also positive. 
\end{assumption}


Uniform convergence of $\hat \psi_{s}(\theta,T(\delta_N))$ is formalized below:

\begin{theorem}[Consistency of the Estimated Local DA Propensity Score]
\label{theorem:main}
In the asymptotic sequence described above, and maintaining Assumptions \ref{assumption_smooth}-\ref{assumption_types}, the estimated local DA propensity score $\hat \psi_{s}(\theta,T(\delta_N))$ is a consistent
estimator of $\psi_{Ns}(\theta, T)$ 
in the following sense:
For any $\delta_N$ such that $\delta_N\rightarrow 0$, $N\delta_N\rightarrow \infty,$ and $T(\delta_n) \rightarrow T$,
\[
\sup_{\theta, s, T}|\hat \psi_{s}(\theta,T(\delta_N))-\psi_{Ns}(\theta, T)| \overset{p}{\longrightarrow}0,
\]
as $N\rightarrow\infty$. 
\end{theorem}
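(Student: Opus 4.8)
The plan is to bound the error by passing through the continuum-market score $\psi_s(\theta,T)$ of Theorem \ref{theorem:local}, writing
\[
\sup_{\theta,s,T}\bigl|\hat\psi_s(\theta,T(\delta_N))-\psi_{Ns}(\theta,T)\bigr|\le \sup_{\theta,s,T}\bigl|\hat\psi_s(\theta,T(\delta_N))-\psi_s(\theta,T)\bigr|+\sup_{\theta,s,T}\bigl|\psi_s(\theta,T)-\psi_{Ns}(\theta,T)\bigr|,
\]
and showing each term vanishes in probability; both terms rest on the same ingredients, so the argument is modular. The first ingredient is convergence of the estimated cutoffs: the realized $\hat\tau_s$ and marginal priorities $\hat\rho_s=\text{int}(\hat\xi_s)$ obtained from a single DA run on the sampled finite market converge in probability to the continuum cutoffs $\tau_s,\rho_s$, a large-market result I would borrow from \cite{mdrd1:17} (in the spirit of \cite{azevedo/leshno:14}); Assumption \ref{assumption_types} guarantees the continuum cutoffs exist and that each marginal-priority group is populated. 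Because there are only $S+1$ schools, this convergence is automatically uniform over $s$.

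Next I would propagate this to the $MID$ statistics and the classification variables. Since $\hat{MID}^v_{\theta s}$ is a maximum of the $\hat\tau_b$ over the priority-restricted set $B^v_{\theta s}$, and Assumption \ref{assumption_differentiability} makes the cutoffs distinct so the argmax is isolated, continuity of the max gives $\hat{MID}^v_{\theta s}\overset{p}{\to}MID^v_{\theta s}$; because the priority comparisons entering $B^v_{\theta s}$ are discrete and fixed, this holds uniformly over schools and types. The delicate case is the conditionally-seated window $(\tau_s-\delta_N,\tau_s+\delta_N]$ at screened schools. Here I would invoke Assumption \ref{assumption_smooth}(ii): continuous differentiability with positive density renders the tie-breaker distribution locally uniform on a shrinking window, so the within-window qualification rate converges to the value $0.5$ appearing in Theorem \ref{theorem:local}. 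The rate condition $N\delta_N\to\infty$ supplies enough applicants in the window for a local law of large numbers, while $\delta_N\to0$ controls the bias from the density's curvature; the count $\hat m_s(\theta,T)$ of active screened-school experiments is then recovered correctly with probability approaching one, so $\hat\sigma_s=0.5^{\hat m_s}$ matches $\sigma_s$, and $\hat\lambda_s(\theta)=\prod_{v\le U}(1-\hat{MID}^v_{\theta s})$ converges by the cutoff convergence above and carries no bandwidth dependence.

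I would then assemble the formula by the continuous mapping theorem. The map from $(\hat\tau_s,\hat{MID}^{v(s)}_{\theta s},\hat m_s,\{\hat{MID}^v_{\theta s}\}_{v\le U})$ into the expression of Theorem \ref{theorem:local} is continuous on the relevant region, since the products and the term $\max\{0,(\tau_s-MID^{v(s)}_{\theta s})/(1-MID^{v(s)}_{\theta s})\}$ are continuous and the denominator is bounded away from zero (because $t_s=c$ with $v(s)\le U$ forces $MID^{v(s)}_{\theta s}<\tau_s\le1$). Consistency of each argument then yields $\hat\psi_s\overset{p}{\to}\psi_s$ uniformly, and the identical local-uniform and cutoff-convergence reasoning applied to the finite-market expectation $E_N$ gives $\psi_{Ns}\overset{p}{\to}\psi_s$, so the triangle-inequality bound closes the argument. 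Uniformity over $T$ is immediate since each $t_s\in\{n,a,c\}$ gives only $3^S$ vectors, and uniformity over $\theta$ follows because the formula depends on $\theta$ only through finitely many discrete comparisons and through maxima of the finitely many converging cutoffs.

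I expect the main obstacle to be pinning the screened-school qualification rate to exactly one-half uniformly while the bandwidth shrinks and the market grows. This is a genuine local empirical-process problem: the number of applicants landing in the window is random, and---more subtly---the cutoff $\hat\tau_s$ is estimated from the very same sample that supplies the window's applicants, creating feedback between the estimated cutoff and the classification of the marginal applicants nearest it. Controlling this dependence under the knife-edge rate $N\delta_N\to\infty$ is where the real work lies; the remaining steps are continuity and bookkeeping over a finite index set.
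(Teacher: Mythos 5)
Your proposal is correct and follows essentially the same route as the paper: a triangle-inequality decomposition pivoting through the continuum score $\psi_s(\theta,T)$, cutoff convergence propagated to the $MID$'s and the classification vector by continuity over a finite index set, and a local Glivenko--Cantelli/local law of large numbers in the shrinking window under $N\delta_N\to\infty$ (the paper's Lemma \ref{lemma_local_gc}, via Example 19.29 of \cite{vdv:00}) to pin the screened-school qualification rate at one-half. The obstacle you flag at the end is exactly where the paper concentrates its technical effort.
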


\noindent This result (proved in the online appendix) justifies conditioning on an estimated local propensity score to eliminate omitted variables bias in school attendance effect estimates. 

\subsection{Treatment Effect Estimation}

Theorems \ref{theorem:local} and \ref{theorem:main} provide a foundation for causal inference. In combination with an exclusion restriction discussed below, these results imply that a dummy variable indicating Grade A assignments is asymptotically independent of potential outcomes (represented by the residuals in a equation \eqref{RF}), conditional on an estimate of the Grade A local propensity score.  Let $S_A$ denote the set of Grade A schools. Because DA generates a single offer, the local propensity score for Grade A assignment can be computed as:
$$\hat \psi_{A}(\theta_i,T_i(\delta_N))=\sum_{s\in S_A} \hat \psi_{s}(\theta_i,T_i(\delta_N)).$$
In other words, the local score for Grade A assignment is the sum of the scores for all Grade A schools in the match.

These considerations lead to a 2SLS estimator with second and first stage equations that can be written in stylized form as:
\begin{align}
Y_i & = \beta C_{i} + \sum_{x} \alpha_{2}(x)d_{i}(x) + g_2(\mathcal{R}_i; \delta_N) + \eta_i\label{implement2}
\end{align}
\begin{align}
C_{i} &=  \gamma D_{Ai} + \sum_{x} \alpha_{1}(x)d_{i}(x) + g_1(\mathcal{R}_i; \delta_N) + \nu_{i},\label{implement1}
\end{align}
where $d_i(x)=1\{\hat \psi_{A}(\theta_i,T_i(\delta_N))=x\}$ and
the set of parameters denoted $\alpha_2(x)$ and $\alpha_1(x)$ provide saturated control for the local propensity score.  
As detailed in the next section, functions $g_2(\mathcal{R}_i; \delta_N)$ and $g_1(\mathcal{R}_i; \delta_N)$ implement local linear control for screened school tie-breakers for applicants to these schools with $\hat{t}_{is}(\delta_N)=c$.  Linking this with the empirical strategy sketched at the outset, equation \eqref{implement2} is a version of of equation \eqref{RF} that sets 
$$f_2(\theta_i, \mathcal{R}_i, \delta) = \sum_{x} \alpha_{2}(x)d_{i}(x) + g_2(\mathcal{R}_i; \delta_N).$$
Likewise, equation \eqref{implement1} is a version of equation \eqref{FS} with $f_1(\theta_i, \mathcal{R}_i, \delta)$ defined similarly. 

Our implementation of score-controlled instrumental variables is inspired by the \cite{calonico_et_al:2019_RD} analysis of RD designs with covariates. Using a mix of simulation evidence and theoretical reasoning, \cite{calonico_et_al:2019_RD} argues that additive control for covariates in a local linear regression model requires fewer assumptions and is likely to have better finite-sample behavior than more elaborate procedures. The covariates of interest to us are a full set of dummies for values in the support of the Grade A local propensity score. We'd like to control for these while also benefiting from the good performance of local linear regression estimators of conditional mean functions near cutoffs.\footnote{\cite{calonico_et_al:2019_RD} discuss both sharp and fuzzy RD designs.  The conclusions for sharp design carry over to the fuzzy case in which cutoff clearance is used as an instrument. Equations \eqref{implement2} and \eqref{implement1} are said to be stylized because they omit a number of implementation details supplied in the following section.}

Note that saturated regression-conditioning on the local propensity score eliminates applicants with score values of zero or one.  This is apparent from an analogy with a fixed-effects panel model.  In panel data with multiple annual observations on individuals, estimation with individual fixed effects is equivalent to estimation after subtracting person means from regressors.  Here, the ``fixed effects'' are coefficients on dummies for each possible score value.  When the score value is 0 or 1 for applicants of a given type, assignment status is constant and observations on applicants of this type drop out. We therefore say an applicant \textit{has Grade A risk} when $\hat \psi_{A}(\theta_i,T_i(\delta_N)) \in (0, 1)$.  The sample with risk contributes to parameter estimation in models with saturated score control. 

Propensity score conditioning facilitates control for applicant type in the sample with risk. In practice, local propensity score conditioning yields considerable dimension reduction compared to full-type conditioning, as we would hope. The 2014 NYC high school match, for example, involved 52,124 applicants of 47,074 distinct types.  Of these, 42,461 types listed a Grade A school on their application to the high school match.  By contrast, the local propensity score for Grade A school assignment takes on only 2,054 values.  

\section{A Brief Report on NYC Report Cards}\label{sec:NYC}

\subsection{Doing DA in the Big Apple}
Since the 2003-04 school year, the NYC Department of Education (DOE) has used DA to assign rising ninth graders to high schools.  Many high schools in the match host multiple programs, each with their own admissions protocols.   Applicants are matched to programs rather than schools.  Each
applicant for a ninth grade seat can rank up to twelve programs.  All traditional public high schools participate in the match, but
charter schools and NYC's specialized exam high schools have separate admissions
procedures.\footnote{Some special needs students are also matched separately. The centralized NYC
high school match is detailed in \cite{apr:05,abdulkadiroglu/pathak/roth:09}. \cite{abdulkadiroglu/angrist/pathak:14} describe NYC exam school admissions.}

The NYC match is structured like the general DA match described in Section \ref{sec:multiscore}: lottery programs use a common uniformly distributed lottery number, while screened programs use a variety of non-lottery tie-breaking variables. Screened tie-breakers are mostly distinct, with one for each school or program, though some screened programs share a tie-breaker.  In any case, our theoretical framework accommodates all of NYC's many tie-breaking protocols.\footnote{Screened tie-breakers are reported as an integer variable encoding the underlying tie-breaker order such as a test score or portfolio summary score. We scale these so as to lie in $(0, 1]$ by computing $[R_{iv}-\min_j{R_{jv}}+1]/[\max_jR_{jv}-\min_jR_{jv}+1]$ for each tie-breaker $v$.  This transformation produces a positive cutoff at $s$ when only one applicant is seated at $s$ and a cutoff of 1 when all applicants who rank $s$ are seated there.}

Our analysis uses Theorems \ref{theorem:local} and \ref{theorem:main} to compute propensity scores for programs rather than schools since programs are the unit of assignment. For our purposes, a lottery school is a school hosting any lottery program. Other schools are defined as screened.\footnote{Some NYC high schools sort applicants on a coarse screening tie-breaker that allows ties, breaking these ties using the common lottery number. Schools of this type are treated as lottery schools, with priority groups defined by values of the screened tie-breaker. Seats for NYC's ed-opt programs are allocated to two groups, one of which screens applicants using a single non-lottery tie-breaker and the other using the common lottery number.  The online appendix explains how ed-opt programs are handled by our analysis.}

In 2007, the NYC DOE launched a school accountability system that graded schools from A to F. This mirrors similar accountability systems in Florida and other states. NYC's school grades were determined by achievement levels and, especially, achievement growth, as well as by survey- and attendance-based features of the school environment.  Growth looked at credit accumulation, Regents test completion and pass rates; performance measures were derived mostly from four- and six-year graduation rates.  Some schools were ungraded. Figure \ref{fig:sch_report} reproduces a sample letter-graded school progress report.\footnote{\cite{walcott:12} details the NYC grading methodology used in this period. Note that the computation of the grade of a school for a particular year uses only information from past years, so that there is no feedback between school grades and the school's current outcomes.}

\begin{figure}[!t]
\centering
\caption{Sample NYC School Report Card}
  \includegraphics[width=.85\linewidth]{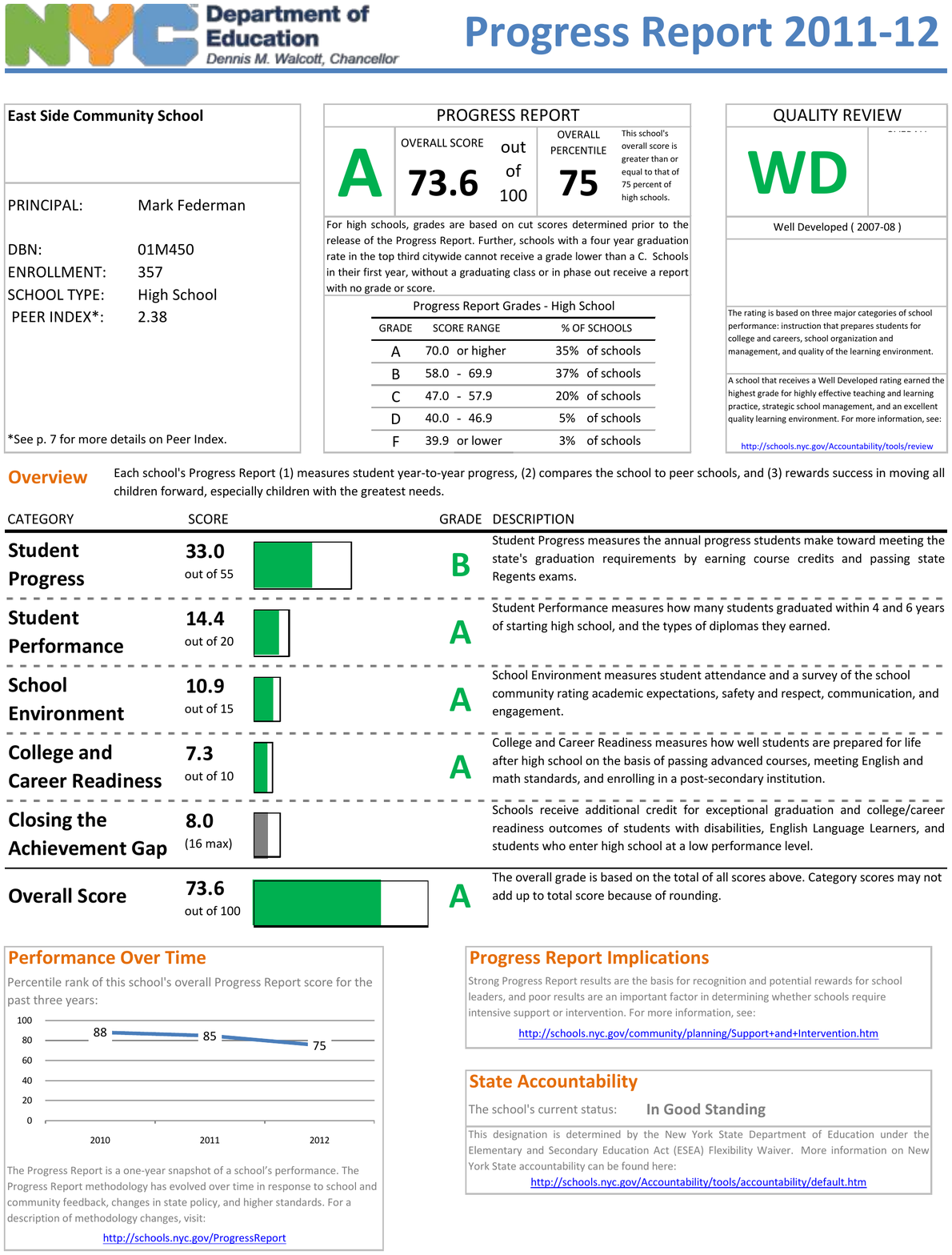} \label{fig:sch_report}
\floatfoot{\footnotesize{\textit{Notes:} This figure shows the 2011/12 progress report for East Side Community School. Source: $www.crpe.org$}}
\end{figure}

The 2007 grading system was controversial.  Proponents applauded the integration of multiple measures of school quality while opponents objected to the high-stakes consequences of low school grades, such as school closure or consolidation. \cite{rockoff/turner:10} provide a partial validation of the system by showing that low grades seem to have sparked school improvement. In 2014, the DOE replaced the 2007 scheme with school quality measures that place less weight on test scores and more on curriculum characteristics and subjective assessments of teaching quality.  The relative merits of the old and new systems continue to be debated.

The results reported here use application data from the 2011-12, 2012-13, and 2013-14 school years (students in these application cohorts enrolled in the following school years). Our sample includes first-time applicants seeking 9th grade seats, who submitted preferences over programs in the main round of the NYC high school match. We obtained data on school capacities and priorities, lottery numbers, and screened school tie-breakers, information that allows us to replicate the match. Details related to match replication appear in the online appendix.\footnote{Our analysis assigns report card grades to a cohort's schools based on the report cards published in the previous year. For the 2011/12 application cohort, for instance, we used the grades published in 2010/11. On the other hand, applicant SAT scores from tests taken before 9th grade are dropped.}

\begin{center}
	\includegraphics[width = .95  \textwidth]{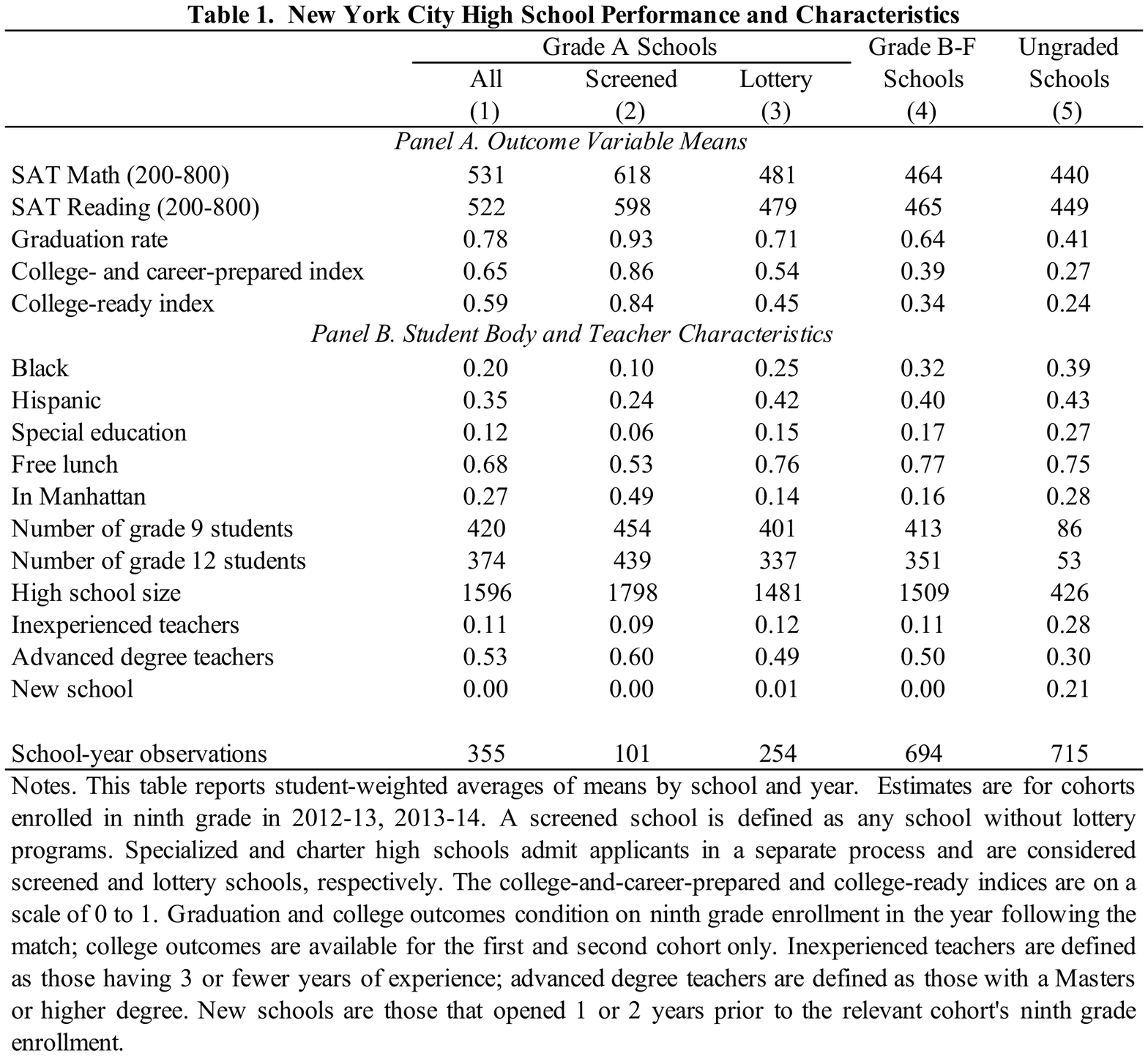} \refstepcounter{tablenums} \label{tab:schchars}
\end{center}

Students at Grade A schools have higher average SAT scores and higher graduation rates than do students at other schools.  Differences in graduation rates across schools feature in popular accounts of socioeconomic differences in school access (see, e.g., \cite{harris/fessenden:17} and \cite{disare:17}). Grade A students are also more likely than students attending other schools to be deemed ``college- and career-prepared'' or ``college-ready.''\footnote{These composite variables are determined as a function of Regents and AP scores, course grades, vocational or arts certification, and college admission tests.}  These and other school characteristics are documented in Table \ref{tab:schchars}, which reports statistics separately by school grade and admissions regime. Achievement gaps between screened and lottery Grade A schools are especially large, likely reflecting selection bias induced by test-based screening.

Screened Grade A schools have a majority white and Asian student body, the only group of schools described in the table to do so (the table reports shares black and Hispanic).  These schools are also over-represented in Manhattan, a borough that includes most of New York's wealthiest neighborhoods (though average family income is higher on Staten Island).  Teacher experience is similar across school types, while screened Grade A schools have somewhat more teachers with advanced degrees.

The first two columns of Table \ref{tab:stuchars} describe the roughly 180,000 ninth graders enrolled in the 2012-13, 2013-14, and 2014-15 school years. Students enrolled in a Grade A school, including those enrolled in the Grade A schools assigned outside the match, are less likely to be black or Hispanic and have higher baseline scores than the general population of 9th graders.  The 153,000 eighth graders who applied for ninth grade seats are described in column 3 of the table.  Roughly 130,000 listed a Grade A school for which seats are assigned in the match on their application form and a little over a third of these were assigned to a Grade A school.\footnote{The difference between total 9th grade enrollment and the number of match participants is accounted for by special education students outside the main match, direct-to-charter enrollment, and a few schools that straddle 9th grade.}  Applicants in the match have baseline scores (from tests taken in 6th grade) above the overall district mean (baseline scores are standardized to the population of test-takers).  As can be seen by comparing columns 3 and 4 in Table \ref{tab:stuchars}, however, the average characteristics of Grade A applicants are mostly similar to those of the entire applicant population.   

The statistics in column 5 of Table \ref{tab:stuchars} show that applicants \textit{enrolled} in a Grade A school (among schools participating in the match) are somewhat less likely to be black and have higher baseline scores than the total applicant pool.  These gaps likely reflect systematic differences in offer rates by race at screened Grade A schools. Column 5 of Table \ref{tab:stuchars} also shows that most of those attending a Grade A school were assigned there, and that most Grade A students ranked a Grade A school first.  Grade A students are about twice as likely to go to a lottery school as to a screened school.  Interestingly, enthusiasm for Grade A schools is far from universal: just under half of all applicants in the match ranked a Grade A school first. 

\begin{center}
	\includegraphics[width = .9\textwidth]{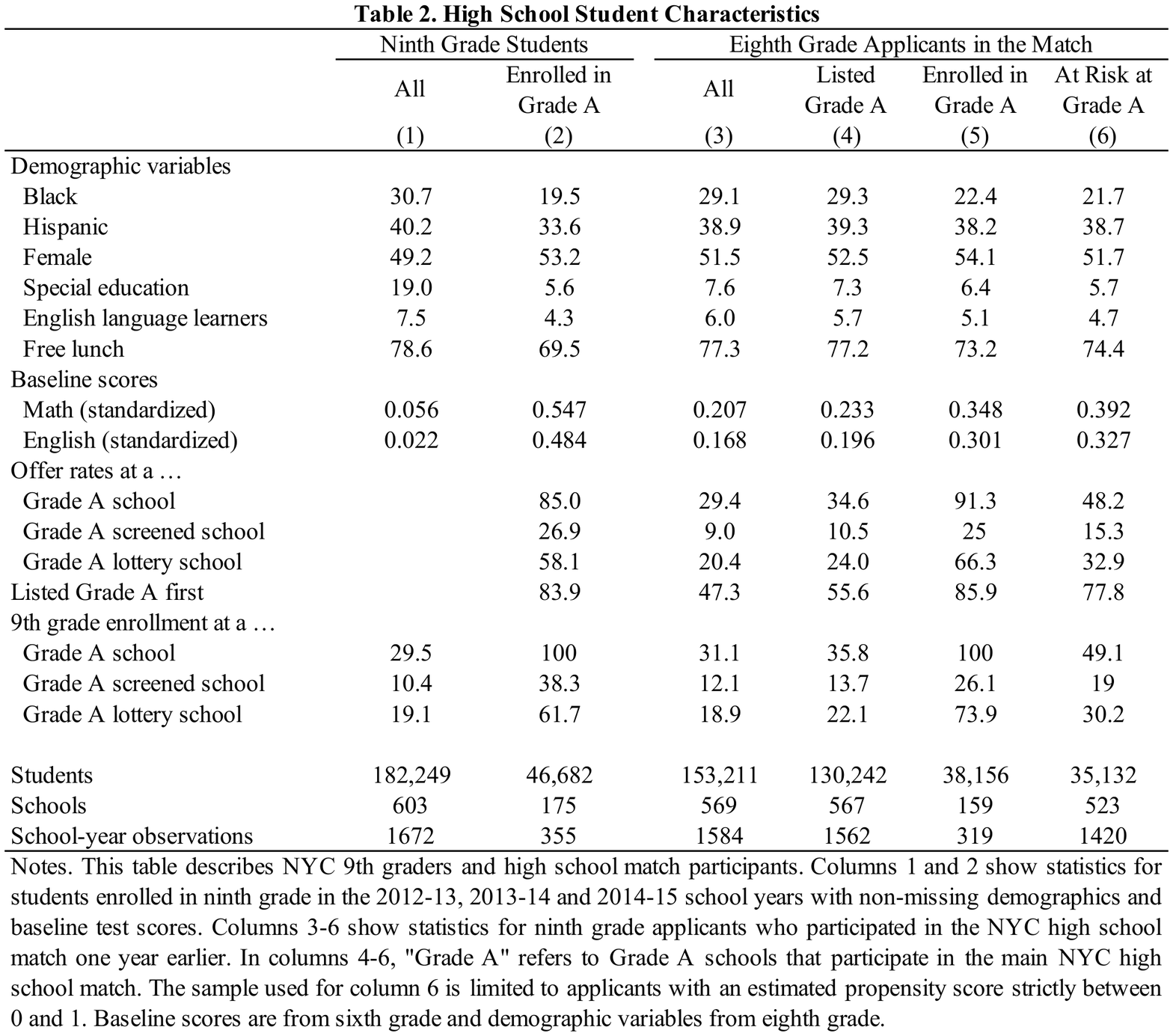} \refstepcounter{tablenums} \label{tab:stuchars}
\end{center}

\subsection{Balance and 2SLS Estimates} 

Because NYC has a single lottery tie-breaker, the disqualification probability at lottery schools in $B_{\theta s}$ described by equation \eqref{eq:lotto_risk} simplifies to
$$\lambda_s(\theta)=(1-MID^{1}_{\theta s}),$$ 
where $MID^{1}_{\theta s}$ is most informative disqualification at schools using the common lottery tie-breaker, $R_{1i}$.
The local DA score described by equation \eqref{equation_main_theorem2} therefore also simplifies, in this case to:
	\begin{equation}\label{equation_main_theorem2_nyc_simplification}
	\psi_s(\theta,T)=\left\{
	\begin{array}
	[c]{ll}
	\sigma_s(\theta,T)(1-MID^{1}_{\theta s}) & \text{ if } t_{s}=a,\\
	\sigma_s(\theta,T)\max\left\{ 0, \tau_{s}-MID^{1}_{\theta s}\right\} & \text{ if } t_{s}=c\text{ and }v(s)=1,\\
	\sigma_s(\theta,T)(1-MID^{1}_{\theta s}) \times 0.5 & \text{ if } t_{s}=c\text{ and }v(s)>1.\\
	\end{array}
	\right.
	\end{equation}
 
Estimates of the local DA score based on \eqref{equation_main_theorem2_nyc_simplification} reveal that roughly 35,000 applicants have Grade A risk, that is, an estimated local DA score value strictly between 0 and 1.  As can be seen in column 6 of Table \ref{tab:stuchars}, applicants with Grade A risk have mean baseline scores and demographic characteristics much like those of the sample enrolled at a Grade A school.  The ratio of screened to lottery enrollment among those with Grade A risk is also similar to the corresponding ratio in the sample of enrolled students (compare 32.9/15.3 in the former group to 66.3/25.0 in the latter). Online Appendix Figure \ref{figappdx:ptsofsupport} plots the distribution of Grade A assignment probabilities for applicants with risk.  The modal probability is $0.5$, reflecting the fact that roughly 25\% of those with Grade A risk rank a single Grade A school and that this school is screened.

The balancing property of local propensity score conditioning is evaluated using score-controlled differences in covariate means for applicants who do and don't receive Grade A assignments. Score-controlled differences by Grade A assignment status are estimated in a model that includes a dummy indicating assignments at ungraded schools as well as a dummy for Grade A assignments, controlling for the propensity scores for both.  We account for ungraded school attendance to ensure that estimated Grade A effects compare schools with high and low grades, omitting the ungraded.\footnote{Ungraded schools were mostly new when grades were assigned or had data insufficient to determine a grade.}
Specifically, let $D_{Ai}$ denote Grade A assignments as before, and let $D_{0i}$ indicate assignments at ungraded schools. Assignment risk for each type of school is controlled using sets of dummies denoted $d_{Ai}(x)$ and $d_{0i}(x)$, respectively, for score values indexed by $x$.  

The covariates of interest here, denoted by $W_i$, are those that are unchanged by school assignment and should therefore be mean-independent of $D_{Ai}$ in the absence of selection bias. The balance test results reported in Table \ref{tab:balance} are estimates of 
parameter $\gamma_A$ in regressions of $W_i$ on $D_{Ai}$ of the form:
\begin{align}
W_i =  \gamma_{A} D_{Ai} + \gamma_0 D_{0i} + \sum_{x} \alpha_{A}(x)d_{Ai}(x) + \sum_{x} \alpha_{0}(x)d_{0i}(x) + g(\mathcal{R}_i; \delta_N) + \nu_i. \label{eq:covbalance}
\end{align}
Local piecewise linear control for screened tie-breakers is parameterized as:
\begin{align}
g(\mathcal{R}_i; \delta_N) =  \sum_{s \in S \backslash S_0} \omega_{1s} a_{is} + k_{is}[\omega_{2s} + \omega_{3s}( R_{iv(s)} - \tau_s )
+ \omega_{4s}( R_{iv(s)}- \tau_s )\textbf{1}( R_{iv(s)} > \tau_s )], \label{rv-control}
\end{align}
where $S\backslash S_0$ is the set of screened programs, $a_{is}$ indicates whether applicant $i$
applied to screened program $s$, and $k_{is} = 1[\hat{t}_{is}(\delta_N)=c]$. 
The sample used to estimate \eqref{eq:covbalance} is limited to applicants with Grade A risk.  



Parameters in \eqref{eq:covbalance} and \eqref{rv-control} vary by application cohort (three cohorts are stacked in the estimation sample). Bandwidths are estimated two ways, as suggested by \cite{imbens/kalyanaraman:12} (IK) using a uniform kernel, and using methods and software described in \cite{rdrobust2017} (CCFT). These bandwidths are computed separately for each program (the notation ignores this), for the set of applicants in the relevant marginal priority group.\footnote{The IK bandwidths used here are identical to those yielded by the IK implementation referenced in \cite{armstrong2018optimal} and distributed via the \textsf{\href{https://github.com/kolesarm/RDHonest}{RDhonest}} package. Bandwidths are computed separately for each outcome variable; we use the smallest of these for each program.  The bandwidth for screened programs is set to zero when there are fewer than five in-bandwidth observations on one or the other side of the relevant cutoff. The control function $g(\mathcal{R}_i; \delta_N)$ is unweighted and can therefore be said to use a uniform kernel. We also explored bandwidths designed to produce balance as in \cite{locrand2016}. These results proved to be sensitive to implementation details such as the p-value used to establish balance.} 

As can be seen in column 2 of Table \ref{tab:balance}, which reports raw differences in means by Grade A assignment status, applicants assigned to a Grade A school are much more likely to have ranked a Grade A school first, and ranked more Grade A schools highly than did other applicants.  These applicants are also more likely
to rank a Screened Grade A school first and among their top three.  Minority and free-lunch-eligible applicants are less likely to be assigned to a Grade A school, while those assigned to a Grade A school have much higher baselines scores, with gaps of $0.3-0.4$ in favor of those assigned.  These raw differences notwithstanding, our theoretical results suggest that estimates of $\gamma_A$ in equation \eqref{eq:covbalance} should be close to zero.

This is borne out by the estimates reported in column 4 of the table, which shows small, mostly insignificant differences in covariates by assignment status when estimated using using \cite{imbens/kalyanaraman:12} bandwidths. The estimated covariate gaps in column 6, computed using \cite{rdrobust2017} bandwidths, are similar. These estimates establish the empirical relevance of both the large-market model of DA and the local DA propensity score derived from it.\footnote{Our balance assessment relies on linear models to estimate mean differences rather than comparisons of distributions. The focus on means is justified because the IV reduced form relationships we aspire to validate are themselves regressions. Recall that in a regression context, reduced form causal effects are unbiased provided omitted variables are mean-independent of the instrument, $D_{Ai}$. Since treatment variable $D_{Ai}$ is a dummy, the regression of omitted control variables on it is given by the difference in conditional control variable means computed with $D_{Ai}$ switched on and off.}

\begin{center}
	\includegraphics[width = .92\textwidth]{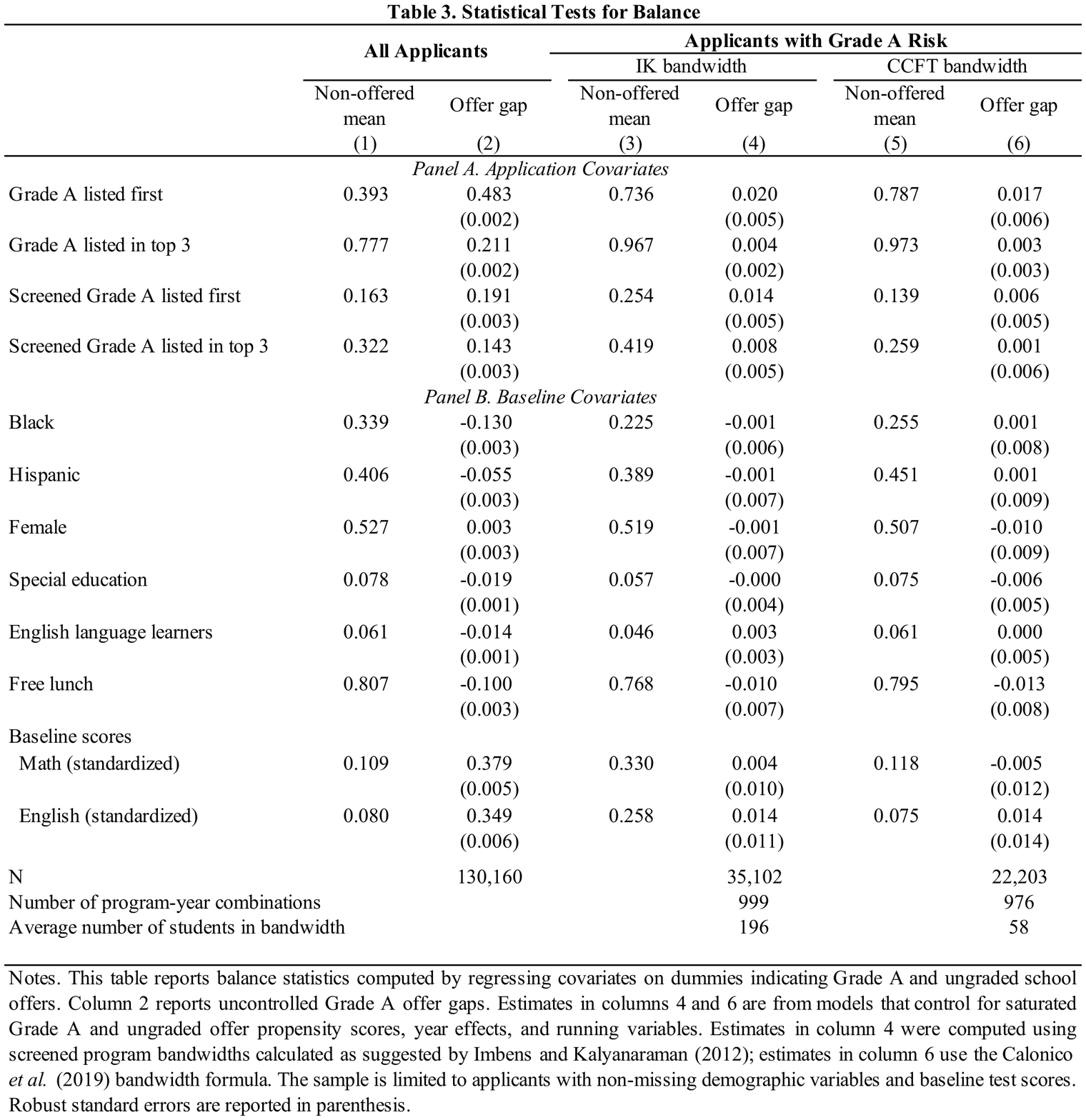} \refstepcounter{tablenums} \label{tab:balance}
\end{center}

Causal effects of Grade A attendance are estimated by 2SLS using assignment dummies as instruments for years of exposure to schools of a particular type, as suggested by equations \eqref{RF} and \eqref{FS}. As in the setup used to establish covariate balance, however, the 2SLS estimating equations include two endogenous variables, $C_{Ai}$ for Grade A exposure and $C_{0i}$ measuring exposure to an ungraded school. Exposure is measured in years for SAT outcomes; otherwise, $C_{Ai}$ and $C_{0i}$ are enrollment dummies. As in equation \eqref{eq:covbalance}, local propensity score controls consist of saturated models for Grade A and ungraded propensity scores, with local linear control for screened tie-breakers as described by equation \eqref{rv-control}.   These equations also control for baseline math and English scores, free lunch, special education, and English language learner dummies, and gender and race dummies (estimates without these controls are similar, though less precise).\footnote{Replacing $W_i$ on the left hand side of \eqref{eq:covbalance} with outcome variable $Y_i$, equations \eqref{eq:covbalance} and \eqref{rv-control} describe the reduced form for our 2SLS estimator. In an application with lottery tie-breaking, \cite{mdrd1:17} compare score-controlled 2SLS estimates with semiparametric instrumental variables estimates based on \cite{abadie:03}. The former are considerably more precise than the latter.}   

OLS estimates of Grade A effects, reported as a benchmark in the second column of Table \ref{tab:2slsmain}, indicate that Grade A attendance is associated with higher SAT scores and graduation rates, as well as increased college and career readiness. The OLS estimates in Table \ref{tab:2slsmain} are from models that omit local propensity score controls, computed in a sample that includes all participants in the high school match without regard to assignment probability. OLS estimates of the SAT gains associated with Grade A enrollment are around 6-7 points.   Estimated graduation gains are similarly modest at   2.4 points, but effects on college and career readiness are substantial, running 7-10 points on a base rate around 40.  

The first stage effects of Grade A assignments on Grade A enrollment, shown in columns 4 and 6 of Panel A in Table \ref{tab:2slsmain}, show that Grade A offers boost Grade A enrollment by about 1.8 years between the time of application and SAT test-taking.  Grade A assignments boost the likelihood of any Grade A enrollment by about 67 percentage points. This can be compared with Grade A enrollment rates of 16-19 percent among those not assigned a Grade A seat in the match.\footnote{The gap between assignment and enrollment arises from several sources. Applicants remaining in the public system may attend charter or non-match exam schools.  Applicants  may also reject a match-based assignment, turning instead to an ad hoc administrative assignment process later in the year.} 

In contrast with the OLS estimates in column 2, the 2SLS estimates shown in columns 4 and 6 of Table \ref{tab:2slsmain} suggest that most of the SAT gains associated with Grade A attendance reflect selection bias. Computed with either bandwidth, 2SLS estimates of SAT math gains are around 2 points, though still (marginally) significant. 2SLS estimates of SAT reading effects are even smaller and not significantly different from zero, though estimated with similar precision. At the same time, the 2SLS estimate for graduation status shows a statistically significant gain of 3-4 percentage points, exceeding the corresponding OLS estimate. The estimated standard error of $0.009$ associated with the graduation estimate in column 4 seems especially noteworthy, as this suggests that our research design has the power to uncover even modest improvements in high school completion rates.\footnote{Estimates reported in Online Appendix Table \ref{tabappdx:attrition} show little difference in follow-up rates between applicants who are and aren't offered a Grade A seat. The 2SLS estimates in Table \ref{tab:2slsmain} are therefore unlikely to be compromised by differential attrition.}

\begin{center}
	\includegraphics[width = .88\textwidth]{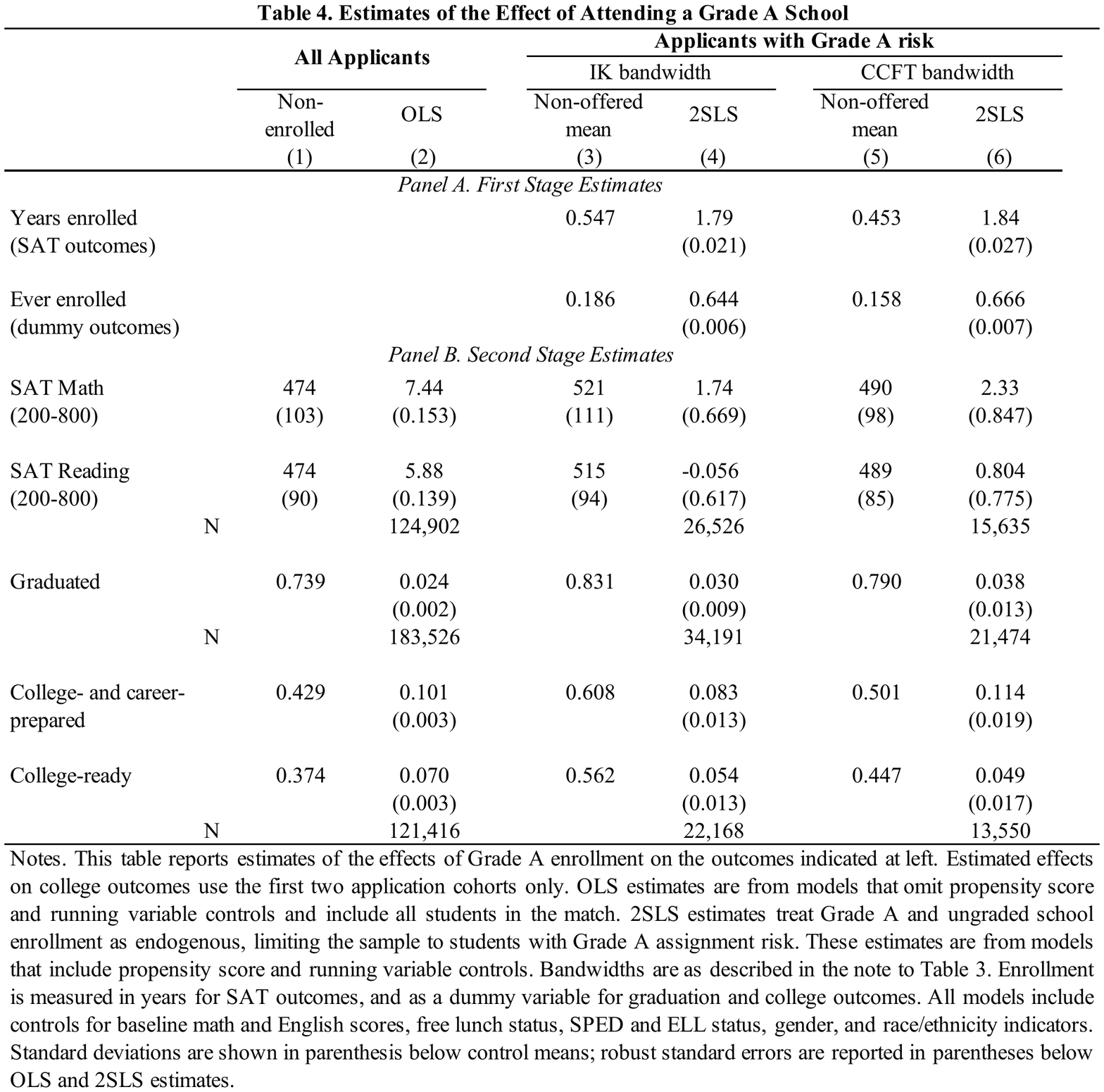} \refstepcounter{tablenums} \label{tab:2slsmain}
\end{center}
 
The strongest Grade A effects appear in estimates of effects on college and career preparedness and college readiness. This may in part reflect the fact that Grade A schools are especially likely to offer advanced courses, the availability of which contributes to the college- and career-related composite outcome variables (the online appendix details the construction of these variables). 2SLS estimates of effects on these outcomes are  mostly close to the corresponding OLS estimates (three out of four are smaller). Here too, switching bandwidth matters little for magnitudes. Throughout Table \ref{tab:2slsmain}, however, 2SLS estimates computed with an IK bandwidth are more precise than those computed using CCFT.

\subsection{Screened vs. Lottery Grade A Effects} 

In New York, education policy discussions often focus on access to academically selective screened schools such as Townsend Harris in Queens, a school consistently ranked among the top American high schools by \textit{U.S. News and World Report}.  Public interest in screened schools motivates an analysis that distinguishes screened from lottery Grade A effects.  The possibility of different effects within the Grade A sector also raises concerns related to the exclusion restriction underpinning a causal interpretation of 2SLS estimates. In the context of our causal model of Grade A effects, the exclusion restriction fails when the offer of a Grade A seat moves applicants between schools of different quality within the Grade A sector. We therefore explore multi-sector models that distinguish causal effects of attendance at different sorts of Grade A schools, focusing on differences by admissions regime since this is widely believed to matter for school quality.

The multi-sector estimates reported in Table \ref{tab:2slsmulti} are from models that include separate endogenous variables for screened and lottery Grade A schools, along with a third endogenous variable for the ungraded sector.  Instruments in this just-identified set-up are two dummies indicating each sort of Grade A offer, as well as a dummy indicating the offer of a seat at an ungraded school.  2SLS models include separate saturated local propensity score controls for screened Grade A offer risk, unscreened Grade A offer risk, and ungraded offer risk. These multi-sector estimates are computed in a sample limited to applicants at risk of assignment to either a screened or lottery Grade A school. In view of the relative precision of estimates using IK bandwidth, multi-sector estimates using CCFT bandwidths are omitted.

OLS estimates again provide an interesting benchmark.  As can be seen in the first two columns of Table \ref{tab:2slsmulti}, screened Grade A students appear to reap a large SAT advantage even after controlling for baseline achievement and other covariates. In particular, OLS estimates of Grade A effects for schools in the screened sector are on the order of 14-18 points.  At the same time, Grade A lottery schools appear to generate achievement gains of only about 2 points.  Yet the corresponding 2SLS estimates, reported in columns 3 and 4 of the table, suggest the achievement gains yielded by enrollment in both sorts of Grade A schools are equally modest. The 2SLS estimates here run less than 2 points for math scores, with smaller (not significant) negative estimates for reading. The sole statistically significant SAT effect is that for the lottery Grade A school impact on math scores.

The remaining 2SLS estimates in the table likewise show similar screened-school and lottery-school effects. With one marginal exception, p-values in the table reveal estimates for the two sectors to be statistically indistinguishable. As in Table \ref{tab:2slsmain}, the 2SLS estimates in Table \ref{tab:2slsmulti} suggest that screened and lottery Grade A schools boost graduation rates by about 3 points. Effects on college and career preparedness are larger for lottery schools than for screened, but this impact ordering is reversed for effects on college readiness. On the whole, Table \ref{tab:2slsmulti} leads us to conclude that OLS estimates showing a large screened Grade A advantage are driven by selection bias.  

\begin{center}
	\includegraphics[width = 0.85\textwidth]{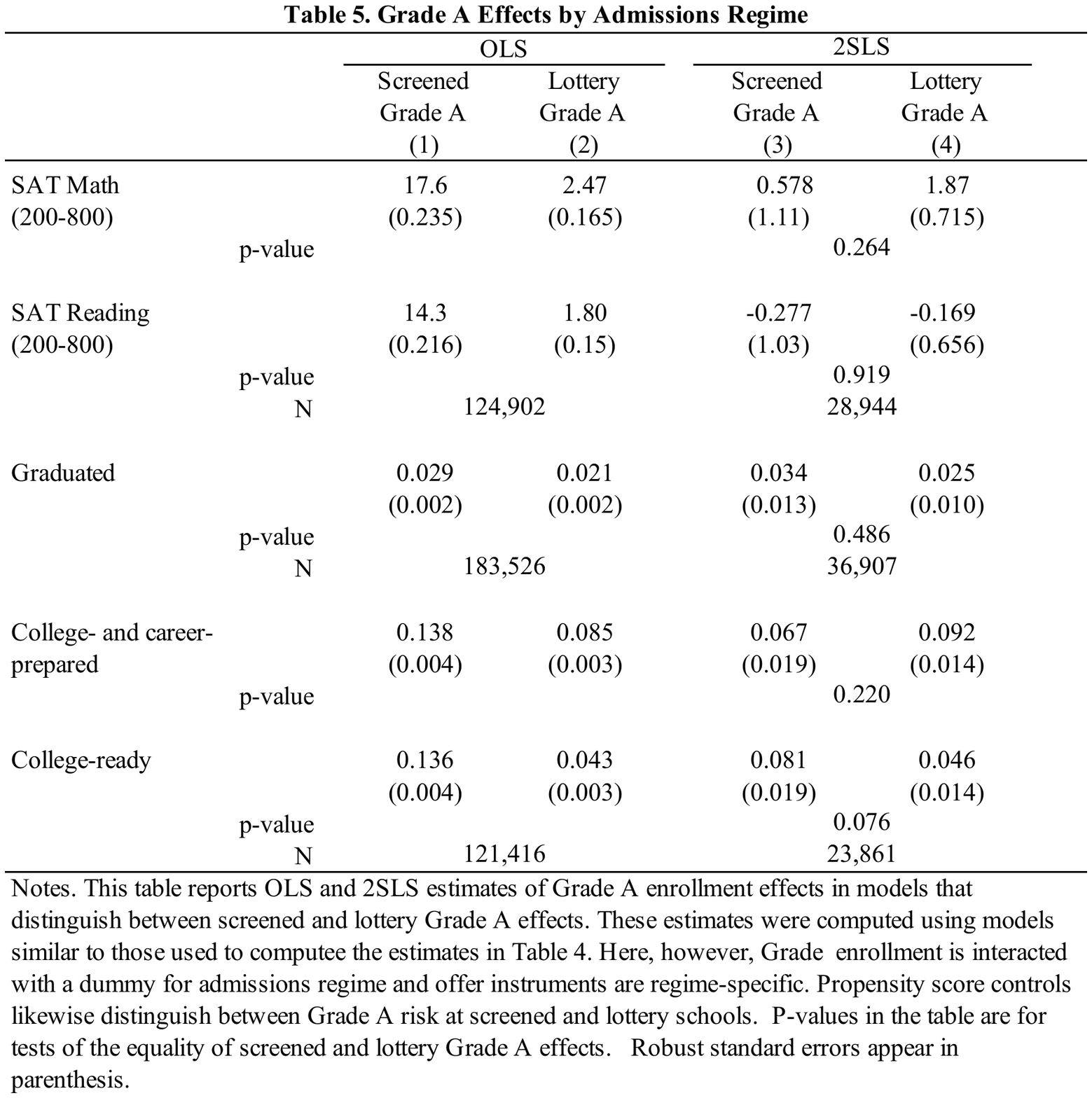} \refstepcounter{tablenums} \label{tab:2slsmulti}
\end{center}

\section{Summary and Next Steps} \label{sec:conclusions}

Centralized student assignment opens new opportunities for the measurement of school quality. The research potential of matching markets is enhanced here by marrying the conditional random assignment generated by lottery tie-breaking with RD-style variation at screened schools. The key to this intermingled empirical framework is a local propensity score that controls for differential assignment rates in DA matches with general tie-breakers.  This new tool allows us to exploit all sources of quasi-experimental variation arising from any mechanism in the DA class. 

Our analysis of NYC school report cards suggests Grade A schools boost SAT math scores and high school graduation rates by a few points. OLS estimates, by contrast, show considerably larger effects of Grade A attendance on test scores.  Grade A screened schools enroll some of the city's highest achievers, but large OLS estimates of achievement gains from attendance at these schools appear to be an artifact of selection bias. 
Concerns about access to such schools (expressed, for example, in \cite{harris/fessenden:17}) may therefore be overblown.
On the other hand, Grade A attendance increases measures of college and career preparedness. 
These results may reflect the greater availability of advanced courses in Grade A schools, a feature that should be replicable at other schools.

In principle, Grade A assignments may act to move applicants between schools within the Grade A sector as well as to boost overall Grade A enrollment. Offer-induced movement between screened and lottery Grade A schools may violate the exclusion restriction that underpins our 2SLS results if schools within the Grade A sector vary in quality.  It's therefore worth asking whether screened and lottery schools should indeed be treated as having the same effect.   Perhaps surprisingly, our analysis supports the idea that screened and lottery Grade A schools can be pooled and treated as having a common average causal effect.  

Our provisional agenda for further research prioritizes an investigation of econometric implementation strategies for DA-founded research designs.  This work is likely to build on the asymptotic framework in \cite{bugni2018testing}
and the study of RD designs with multiple tie-breakers in \cite{papay2011extending},  \cite{zajonc2012regression}, \cite{WSC:2013_RD} and \cite{Cattaneo_et_al:2019:RD_multi}. 
It may be possible to extend the reasoning behind doubly robust nonparametric estimators, such as discussed by \cite{rothe2019properties} and \cite{rothe2020flexible}, to our setting.

Statistical inference in Section \ref{sec:NYC} relies on conventional large sample reasoning of the sort widely applied in empirical RD applications.
It seems natural to consider permutation or randomization inference along the lines suggested by \cite{cattaneo2015randomization, cattaneo2017comparing}, and \cite{canay2017approximate}, along with optimal inference and estimation strategies such as those introduced by \cite{armstrong2018optimal} and \cite{imbens2017optimized}. Also on the agenda, \cite{narita2017non} suggests a path toward generalization of the large-market model of DA assignment risk. Finally, we look forward to a more detailed investigation of the consequences of heterogeneous treatment effects for identification strategies of the sort considered here.


\newpage


\onehalfspacing
\bibliographystyle{bib/ecta}
\bibliography{bib/mdrd2}

\begin{thebibliography}{64}
\newcommand{\enquote}[1]{``#1''}
\expandafter\ifx\csname natexlab\endcsname\relax\def\natexlab#1{#1}\fi

\bibitem[\protect\citeauthoryear{Abadie}{Abadie}{2003}]{abadie:03}
\textsc{Abadie, A.} (2003): \enquote{{Semiparametric instrumental variables
  estimation of treatment response models},} \emph{Journal of Econometrics},
  113(2), 231--263.

\bibitem[\protect\citeauthoryear{Abdulkadiro\u{g}lu, Angrist, Narita, and
  Pathak}{Abdulkadiro\u{g}lu et~al.}{2017a}]{mdrd1:17}
\textsc{Abdulkadiro\u{g}lu, A., J.~D. Angrist, Y.~Narita, and P.~A. Pathak}
  (2017a): \enquote{{Research Design Meets Market Design: Using Centralized
  Assignment for Impact Evaluation},} \emph{Econometrica}, 85(5), 1373--1432.

\bibitem[\protect\citeauthoryear{Abdulkadiro\u{g}lu, Angrist, Narita, and
  Pathak}{Abdulkadiro\u{g}lu et~al.}{2017b}]{mdrd2:17}
---\hspace{-.1pt}---\hspace{-.1pt}--- (2017b): \enquote{Impact Evaluation in
  Matching Markets with General Tie-breaking,} {NBER Working Paper No. 24172}.

\bibitem[\protect\citeauthoryear{Abdulkadiro\u{g}lu, Angrist, Narita, and
  Pathak}{Abdulkadiro\u{g}lu
  et~al.}{2019}]{abdulkadiroglu/angrist/narita/pathak:19}
---\hspace{-.1pt}---\hspace{-.1pt}--- (2019): \enquote{Breaking Ties:
  Regression Discontinuity Design Meets Market Design,} Cowles Foundation
  Discussion Paper 2170.

\bibitem[\protect\citeauthoryear{Abdulkadiro\u{g}lu, Angrist, and
  Pathak}{Abdulkadiro\u{g}lu et~al.}{2014}]{abdulkadiroglu/angrist/pathak:14}
\textsc{Abdulkadiro\u{g}lu, A., J.~D. Angrist, and P.~A. Pathak} (2014):
  \enquote{{The Elite Illusion: Achievement Effects at Boston and New York Exam
  Schools},} \emph{{E}conometrica}, 82(1), 137--196.

\bibitem[\protect\citeauthoryear{Abdulkadiro\u{g}lu, Pathak, and
  Roth}{Abdulkadiro\u{g}lu et~al.}{2005}]{apr:05}
\textsc{Abdulkadiro\u{g}lu, A., P.~A. Pathak, and A.~E. Roth} (2005):
  \enquote{{The New York City High School Match},} \emph{{A}merican {E}conomic
  {R}eview, Papers and Proceedings}, 95, 364--367.

\bibitem[\protect\citeauthoryear{Abdulkadiro\u{g}lu, Pathak, and
  Roth}{Abdulkadiro\u{g}lu et~al.}{2009}]{abdulkadiroglu/pathak/roth:09}
---\hspace{-.1pt}---\hspace{-.1pt}--- (2009): \enquote{{Strategy-Proofness
  versus Efficiency in Matching with Indifferences: Redesigning the New York
  City High School Match},} \emph{{A}merican {E}conomic {R}eview}, 99(5),
  1954--1978.

\bibitem[\protect\citeauthoryear{Abdulkadiro\u{g}lu and
  S\"{o}nmez}{Abdulkadiro\u{g}lu and
  S\"{o}nmez}{2003}]{abdulkadiroglu/sonmez:03}
\textsc{Abdulkadiro\u{g}lu, A. and T.~S\"{o}nmez} (2003): \enquote{{School
  Choice: A Mechanism Design Approach},} \emph{{A}merican {E}conomic {R}eview},
  93, 729--747.

\bibitem[\protect\citeauthoryear{Abebe, Fafchamps, Koelle, and Quinn}{Abebe
  et~al.}{2019}]{abebe2019learning}
\textsc{Abebe, G., M.~Fafchamps, M.~Koelle, and S.~Quinn} (2019):
  \enquote{{Learning Management Through Matching: A Field Experiment Using
  Mechanism Design},} {NBER Working Paper No. 26035}.

\bibitem[\protect\citeauthoryear{Ajayi}{Ajayi}{2014}]{ajayi:13}
\textsc{Ajayi, K.} (2014): \enquote{{Does School Quality Improve Student
  Performance? New Evidence from Ghana},} {IED Discussion Paper No. 260}.

\bibitem[\protect\citeauthoryear{Arai, Hsu, Kitagawa, Mourifi{\'e}, and
  Wan}{Arai et~al.}{2019}]{arai2019testing}
\textsc{Arai, Y., Y.-C. Hsu, T.~Kitagawa, I.~Mourifi{\'e}, and Y.~Wan} (2019):
  \enquote{{Testing Identifying Assumptions in Fuzzy Regression Discontinuity
  Designs},} {Cemmap Working Paper CWP10/19}.

\bibitem[\protect\citeauthoryear{Armstrong and Koles{\'a}r}{Armstrong and
  Koles{\'a}r}{2018}]{armstrong2018optimal}
\textsc{Armstrong, T.~B. and M.~Koles{\'a}r} (2018): \enquote{{Optimal
  Inference in a Class of Regression Models},} \emph{Econometrica}, 86,
  655--683.

\bibitem[\protect\citeauthoryear{Azevedo and Leshno}{Azevedo and
  Leshno}{2016}]{azevedo/leshno:14}
\textsc{Azevedo, E. and J.~Leshno} (2016): \enquote{{A Supply and Demand
  Framework for Two-Sided Matching Markets},} \emph{Journal of Political
  Economy}, 124(5), 1235--1268.

\bibitem[\protect\citeauthoryear{Barrow, Sartain, and de~la Torre}{Barrow
  et~al.}{2016}]{barrow2016role}
\textsc{Barrow, L., L.~Sartain, and M.~de~la Torre} (2016): \enquote{{The Role
  of Selective High Schools in Equalizing Educational Outcomes: Heterogeneous
  Effects by Neighborhood Socioeconomic Status},} {FRB of Chicago Working Paper
  No. 2016-17}.

\bibitem[\protect\citeauthoryear{Bergman}{Bergman}{2018}]{bergman2018risks}
\textsc{Bergman, P.} (2018): \enquote{{The Risks and Benefits of School
  Integration for Participating Students: Evidence from a Randomized
  Desegregation Program},} {IZA Discussion Paper}.

\bibitem[\protect\citeauthoryear{Beuermann, Jackson, and Sierra}{Beuermann
  et~al.}{2016}]{beuermann2015privately}
\textsc{Beuermann, D., C.~K. Jackson, and R.~Sierra} (2016):
  \enquote{{Privately Managed Public Secondary Schools and Academic Achievement
  in Trinidad and Tobago: Evidence from Rule-Based Student Assignments},} {IDB
  Working Paper Series No. 637}.

\bibitem[\protect\citeauthoryear{Brody}{Brody}{2019}]{brody:19}
\textsc{Brody, L.} (2019): \enquote{{Inside the Effort to Diversity Middle
  School in New York},} \emph{Wall Street Journal}, {May 18}.

\bibitem[\protect\citeauthoryear{Bugni and Canay}{Bugni and
  Canay}{2018}]{bugni2018testing}
\textsc{Bugni, F.~A. and I.~A. Canay} (2018): \enquote{Testing Continuity of a
  Density via g-order statistics in the Regression Discontinuity Design,}
  {Cemmap Working Paper CWP20/18}.

\bibitem[\protect\citeauthoryear{Calonico, Cattaneo, Farrell, and
  Titiunik}{Calonico et~al.}{2017}]{rdrobust2017}
\textsc{Calonico, S., M.~D. Cattaneo, M.~H. Farrell, and R.~Titiunik} (2017):
  \enquote{Rdrobust: Software for Regression-discontinuity Designs,} \emph{{The
  Stata Journal}}, 17, 372--404.

\bibitem[\protect\citeauthoryear{Calonico, Cattaneo, Farrell, and
  Titiunik}{Calonico et~al.}{2019}]{calonico_et_al:2019_RD}
---\hspace{-.1pt}---\hspace{-.1pt}--- (2019): \enquote{{Regression
  Discontinuity Designs Using Covariates},} \emph{The Review of Economics and
  Statistics}, 101, 442--451.

\bibitem[\protect\citeauthoryear{Canay and Kamat}{Canay and
  Kamat}{2017}]{canay2017approximate}
\textsc{Canay, I.~A. and V.~Kamat} (2017): \enquote{{Approximate Permutation
  Tests and Induced Order Statistics in the Regression Discontinuity Design},}
  \emph{Review of Economic Studies}, 85, 1577--1608.

\bibitem[\protect\citeauthoryear{Cattaneo, Frandsen, and Titiunik}{Cattaneo
  et~al.}{2015}]{cattaneo2015randomization}
\textsc{Cattaneo, M.~D., B.~R. Frandsen, and R.~Titiunik} (2015):
  \enquote{{Randomization Inference in the Regression Discontinuity Design: An
  Application to Party Advantages in the US Senate},} \emph{Journal of Causal
  Inference}, 3(1), 1--24.

\bibitem[\protect\citeauthoryear{Cattaneo, Titiunik, and Vazquez-Bare}{Cattaneo
  et~al.}{2017}]{cattaneo2017comparing}
\textsc{Cattaneo, M.~D., R.~Titiunik, and G.~Vazquez-Bare} (2017):
  \enquote{{Comparing Inference Approaches for RD Designs: A Reexamination of
  the Effect of Head Start on Child Mortality},} \emph{Journal of Policy
  Analysis and Management}, 36(3), 643--681.

\bibitem[\protect\citeauthoryear{Cattaneo, Titiunik, and Vazquez-Bare}{Cattaneo
  et~al.}{2019}]{Cattaneo_et_al:2019:RD_multi}
---\hspace{-.1pt}---\hspace{-.1pt}--- (2019): \enquote{{Analysis of Regression
  Discontinuity Designs with Multiple Cutoffs or Multiple Scores},} .

\bibitem[\protect\citeauthoryear{Cattaneo, Titiunik, Vazquez-Bare, and
  Keele}{Cattaneo et~al.}{2016{\natexlab{a}}}]{cattaneo2016interpreting}
\textsc{Cattaneo, M.~D., R.~Titiunik, G.~Vazquez-Bare, and L.~Keele}
  (2016{\natexlab{a}}): \enquote{{{I}nterpreting {R}egression {D}iscontinuity
  {D}esigns with {M}ultiple {C}utoffs},} \emph{Journal of Politics}, 78(4),
  1229--1248.

\bibitem[\protect\citeauthoryear{Cattaneo, Vazquez-Bare, and Titiunik}{Cattaneo
  et~al.}{2016{\natexlab{b}}}]{locrand2016}
\textsc{Cattaneo, M.~D., G.~Vazquez-Bare, and R.~Titiunik}
  (2016{\natexlab{b}}): \enquote{Inference in regression discontinuity designs
  under local randomization,} \emph{Stata Journal}, 16, 331--367(37).

\bibitem[\protect\citeauthoryear{Chen and Kesten}{Chen and
  Kesten}{2017}]{chen/kesten:17}
\textsc{Chen, Y. and O.~Kesten} (2017): \enquote{{Chinese College Admissions
  and School Choice Reforms: A Theoretical Analysis},} \emph{Journal of
  Political Economy}, 125, 99--139.

\bibitem[\protect\citeauthoryear{Disare}{Disare}{2017}]{disare:17}
\textsc{Disare, M.} (2017): \enquote{{City to Eliminate High School Admissions
  Method that Favored Families with Time and Resources},} \emph{Chalkbeat},
  {June 6}.

\bibitem[\protect\citeauthoryear{Dobbie and Fryer}{Dobbie and
  Fryer}{2014}]{dobbie/fryer:11}
\textsc{Dobbie, W. and R.~G. Fryer} (2014): \enquote{{Exam High Schools and
  Academic Achievement: Evidence from New York City},} \emph{American Economic
  Journal: Applied Economics}, 6(3), 58--75.

\bibitem[\protect\citeauthoryear{Dong}{Dong}{2018}]{dong2018alternative}
\textsc{Dong, Y.} (2018): \enquote{{Alternative Assumptions to Identify LATE in
  Fuzzy Regression Discontinuity Designs},} \emph{Oxford Bulletin of Economics
  and Statistics}, 80, 1020--1027.

\bibitem[\protect\citeauthoryear{Dur, Pathak, Song, and S\"{o}nmez}{Dur
  et~al.}{2018}]{dur/pathak/song/sonmez:18}
\textsc{Dur, U., P.~A. Pathak, F.~Song, and T.~S\"{o}nmez} (2018):
  \enquote{{Deduction Dilemmas: The Taiwan Assignment Mechanism},} {NBER
  Working Paper No. 25024}.

\bibitem[\protect\citeauthoryear{Ergin and S\"{o}nmez}{Ergin and
  S\"{o}nmez}{2006}]{ergin/sonmez:05}
\textsc{Ergin, H. and T.~S\"{o}nmez} (2006): \enquote{{Games of School Choice
  under the Boston Mechanism},} \emph{Journal of Public Economics}, 90,
  215--237.

\bibitem[\protect\citeauthoryear{Fort, Ichino, and Zanella}{Fort
  et~al.}{2020}]{fort2016cognitive}
\textsc{Fort, M., A.~Ichino, and G.~Zanella} (2020): \enquote{{Cognitive and
  Non-Cognitive Costs of Daycare 0-2 for Children in Advantaged Families},}
  \emph{{J}ournal of {P}olitical {E}conomy}, 128.

\bibitem[\protect\citeauthoryear{Frandsen}{Frandsen}{2017}]{frandsen2017party}
\textsc{Frandsen, B.~R.} (2017): \enquote{{Party Bias in Union Representation
  Elections: Testing for Manipulation in the Regression Discontinuity Design
  When the Running Variable is Discrete},} in \emph{Regression Discontinuity
  Designs: Theory and Applications}, Emerald Publishing Limited, 281--315.

\bibitem[\protect\citeauthoryear{Frolich}{Frolich}{2007}]{frolich2007regression}
\textsc{Frolich, M.} (2007): \enquote{{Regression Discontinuity Design with
  Covariates (Unpublished Appendix)},} {IZA Discussion Paper No. 3024}.

\bibitem[\protect\citeauthoryear{Frolich and Huber}{Frolich and
  Huber}{2019}]{frolich/huber2019}
\textsc{Frolich, M. and M.~Huber} (2019): \enquote{{Including Covariates in the
  Regression Discontinuity Design},} \emph{Journal of Business and Economic
  Statistics}, 37, 736--748.

\bibitem[\protect\citeauthoryear{Hahn, Todd, and Van~der Klaauw}{Hahn
  et~al.}{2001}]{hahn2001identification}
\textsc{Hahn, J., P.~Todd, and W.~Van~der Klaauw} (2001):
  \enquote{{Identification and Estimation of Treatment Effects with a
  Regression-Discontinuity Design},} \emph{Econometrica}, 69(1), 201--209.

\bibitem[\protect\citeauthoryear{Harris and Fessenden}{Harris and
  Fessenden}{2017}]{harris/fessenden:17}
\textsc{Harris, E. and F.~Fessenden} (2017): \enquote{{The Broken Promises of
  Choice in New York City Schools},} \emph{New York Times}, {May 5}.

\bibitem[\protect\citeauthoryear{Hastings, Neilson, and Zimmerman}{Hastings
  et~al.}{2013}]{hastings/neilson/zimmerman:13}
\textsc{Hastings, J., C.~Neilson, and S.~D. Zimmerman} (2013): \enquote{{Are
  Some Degrees Worth More than Others? Evidence from College Admission Cutoffs
  in Chile},} {NBER Working Paper No. 19241}.

\bibitem[\protect\citeauthoryear{Imbens and Kalyanaraman}{Imbens and
  Kalyanaraman}{2012}]{imbens/kalyanaraman:12}
\textsc{Imbens, G.~W. and K.~Kalyanaraman} (2012): \enquote{{Optimal Bandwidth
  Choice for the Regression Discontinuity Estimator},} \emph{Review of Economic
  Studies}, 79(3), 933--959.

\bibitem[\protect\citeauthoryear{Imbens and Wager}{Imbens and
  Wager}{2019}]{imbens2017optimized}
\textsc{Imbens, G.~W. and S.~Wager} (2019): \enquote{{Optimized Regression
  Discontinuity Designs},} \emph{Review of Economics and Statistics}, 101,
  264--278.

\bibitem[\protect\citeauthoryear{Jackson}{Jackson}{2010}]{jackson:10}
\textsc{Jackson, K.} (2010): \enquote{{Do Students Benefit from Attending
  Better Schools? Evidence from Rule-based Student Assignments in Trinidad and
  Tobago},} \emph{Economic Journal}, 120(549), 1399--1429.

\bibitem[\protect\citeauthoryear{Jackson}{Jackson}{2012}]{jackson:12}
---\hspace{-.1pt}---\hspace{-.1pt}--- (2012): \enquote{{Single-sex Schools,
  Student Achievement, and Course Selection: Evidence from Rule-based Student
  Assignments in Trinidad and Tobago},} \emph{Journal of Public Economics},
  96(1-2), 173--187.

\bibitem[\protect\citeauthoryear{Kirkeboen, Leuven, and Mogstad}{Kirkeboen
  et~al.}{2016}]{kirkeboen/leuven/mogstad:15}
\textsc{Kirkeboen, L., E.~Leuven, and M.~Mogstad} (2016): \enquote{{Field of
  Study, Earnings, and Self-Selection},} \emph{Quarterly Journal of Economics},
  131, 1057--1111.

\bibitem[\protect\citeauthoryear{Lee}{Lee}{2008}]{lee2008randomized}
\textsc{Lee, D.~S.} (2008): \enquote{{Randomized Experiments from Non-Random
  Selection in US House Elections},} \emph{Journal of Econometrics}, 142,
  675--697.

\bibitem[\protect\citeauthoryear{Lucas and Mbiti}{Lucas and
  Mbiti}{2014}]{lucas/mbiti:14}
\textsc{Lucas, A. and I.~Mbiti} (2014): \enquote{{Effects of School Quality on
  Student Achievement: Discontinuity Evidence from Kenya},} \emph{American
  Economic Journal: Applied Economics}, 6(3), 234--263.

\bibitem[\protect\citeauthoryear{Narita}{Narita}{2020}]{narita2017non}
\textsc{Narita, Y.} (2020): \enquote{{A Theory of Quasi-Experimental Evaluation
  of School Quality},} \emph{Management Science}.

\bibitem[\protect\citeauthoryear{Papay, Willett, and Murnane}{Papay
  et~al.}{2011}]{papay2011extending}
\textsc{Papay, J.~P., J.~B. Willett, and R.~J. Murnane} (2011):
  \enquote{{Extending the Regression-Discontinuity Approach to Multiple
  Assignment Variables},} \emph{Journal of Econometrics}, 161(2), 203--207.

\bibitem[\protect\citeauthoryear{Pathak and S\"{o}nmez}{Pathak and
  S\"{o}nmez}{2013}]{pathak/sonmez:13}
\textsc{Pathak, P.~A. and T.~S\"{o}nmez} (2013): \enquote{{School Admissions
  Reform in Chicago and England: Comparing Mechanisms by their Vulnerability to
  Manipulation},} \emph{American Economic Review}, 103(1), 80--106.

\bibitem[\protect\citeauthoryear{P{\'e}rez~Vincent and Ubfal}{P{\'e}rez~Vincent
  and Ubfal}{2019}]{argentina}
\textsc{P{\'e}rez~Vincent, S. and D.~Ubfal} (2019): \enquote{{Using Centralized
  Assignment to Evaluate Entrepreneurship and Life-Skills Training Programs in
  Argentina},} {Working Paper}.

\bibitem[\protect\citeauthoryear{Pop-Eleches and Urquiola}{Pop-Eleches and
  Urquiola}{2013}]{pop-eleches/urquiola:13}
\textsc{Pop-Eleches, C. and M.~Urquiola} (2013): \enquote{{Going to a Better
  School: Effects and Behavioral Responses},} \emph{{A}merican {E}conomic
  {R}eview}, 103(4), 1289--1324.

\bibitem[\protect\citeauthoryear{Rockoff and Turner}{Rockoff and
  Turner}{2011}]{rockoff/turner:10}
\textsc{Rockoff, J. and L.~Turner} (2011): \enquote{{Short Run Impacts of
  Accountability of School Quality},} \emph{American Economic Journal: Economic
  Policy}, 2(4), 119--147.

\bibitem[\protect\citeauthoryear{Rosenbaum and Rubin}{Rosenbaum and
  Rubin}{1983}]{rosenbaum/rubin:83}
\textsc{Rosenbaum, P.~R. and D.~B. Rubin} (1983): \enquote{{The Central Role of
  the Propensity Score in Observational Studies for Causal Effects},}
  \emph{Biometrica}, 70, 41--55.

\bibitem[\protect\citeauthoryear{Rothe}{Rothe}{2020}]{rothe2020flexible}
\textsc{Rothe, C.} (2020): \enquote{{Flexible Covariate Adjustments in
  Randomized Experiments},} .

\bibitem[\protect\citeauthoryear{Rothe and Firpo}{Rothe and
  Firpo}{2019}]{rothe2019properties}
\textsc{Rothe, C. and S.~Firpo} (2019): \enquote{Properties of doubly robust
  estimators when nuisance functions are estimated nonparametrically,}
  \emph{Econometric Theory}, 35, 1048--1087.

\bibitem[\protect\citeauthoryear{Sekhon and Titiunik}{Sekhon and
  Titiunik}{2017}]{sekhon2017interpreting}
\textsc{Sekhon, J.~S. and R.~Titiunik} (2017): \enquote{{On Interpreting the
  Regression Discontinuity Design as a Local Experiment},} in \emph{Regression
  Discontinuity Designs: Theory and Applications}, Emerald Publishing Limited,
  1--28.

\bibitem[\protect\citeauthoryear{van~der Vaart}{van~der Vaart}{2000}]{vdv:00}
\textsc{van~der Vaart, A.~W.} (2000): \emph{{Asymptotic Statistics}}, Cambridge
  University Press.

\bibitem[\protect\citeauthoryear{Veiga}{Veiga}{2018}]{veiga:18}
\textsc{Veiga, C.} (2018): \enquote{{Brooklyn Middle Schools Eliminate
  `Screening' as New York City Expands Integration Efforts},} \emph{Chalkbeat},
  {September 20}.

\bibitem[\protect\citeauthoryear{Walcott}{Walcott}{2012}]{walcott:12}
\textsc{Walcott, D.} (2012): \enquote{{NYC Department of Education: Progress
  Reports for New York City Public Schools},} .

\bibitem[\protect\citeauthoryear{Wellner}{Wellner}{1981}]{wellner1981glivenko}
\textsc{Wellner, J.~A.} (1981): \enquote{{A Glivenko-Cantelli Theorem for
  Empirical Measures of Independent but Non-Identically Distributed Random
  Variables},} \emph{Stochastic Processes and Their Applications}, 11(3),
  309--312.

\bibitem[\protect\citeauthoryear{Wong, Steiner, and Cook}{Wong
  et~al.}{2013{\natexlab{a}}}]{wong2013analyzing}
\textsc{Wong, V.~C., P.~M. Steiner, and T.~D. Cook} (2013{\natexlab{a}}):
  \enquote{{Analyzing Regression-Discontinuity Designs with Multiple Assignment
  Variables: A Comparative Study of Four Estimation Methods},} \emph{Journal of
  Educational and Behavioral Statistics}, 38, 107--141.

\bibitem[\protect\citeauthoryear{Wong, Steiner, and Cook}{Wong
  et~al.}{2013{\natexlab{b}}}]{WSC:2013_RD}
---\hspace{-.1pt}---\hspace{-.1pt}--- (2013{\natexlab{b}}): \enquote{{Analyzing
  Regression-Discontinuity Designs With Multiple Assignment Variables: A
  Comparative Study of Four Estimation Methods},} \emph{Journal of Educational
  and Behavioral Statistics}, 38, 107--141.

\bibitem[\protect\citeauthoryear{Zajonc}{Zajonc}{2012}]{zajonc2012regression}
\textsc{Zajonc, T.} (2012): \enquote{{Regression Discontinuity Design with
  Multiple Forcing Variables},} \emph{Essays on Causal Inference for Public
  Policy}, 45--81.

\bibitem[\protect\citeauthoryear{Zimmerman}{Zimmerman}{2019}]{zimmerman2016making}
\textsc{Zimmerman, S.~D.} (2019): \enquote{{Elite Colleges and Upward Mobility
  to Top Jobs and Top Incomes},} \emph{American Economic Review}, 109, 1--47.

\end{thebibliography}

\newpage

\appendix 
\begin{center}
\begin{LARGE}
\textbf{Appendix}
\end{LARGE}
\end{center}

\section{Proof of Theorem \ref{theorem:local}}

Let $F^i_v(r)$ denote the cumulative distribution function (CDF) of $R_{iv}$ evaluated at $r$ and define
\begin{equation}\label{conditionalF}
F_{v}(r|\theta) = E[F^i_{v}(r)|\theta_i=\theta].
\end{equation}
This is the fraction of type $\theta$ applicants with tie-breaker $v$ below $r$ (set to zero when type $\theta$ ranks no schools using tie-breaker $v$). 
We may condition on additional events. 

Recall that the joint distribution of tie-breakers for applicant $i$ is assumed to be continuously differentiable with positive density. This assumption has the following implication: 
The conditional distribution of tie-breaker $v$, $F_v(r| e),$ is continuously differentiable, with $F'_v(r| e)>0$ at any $r=\tau_1, ..., \tau_S$. 
		Here, the conditioning event $e$ is any event of the form that $\theta_i=\theta, R_{iu}>r_u\text{ for }u=1, ..., v-1,$ and $T_i(\delta)=T$.





Take any large market with the general tie-breaking structure in Section \ref{sec:multiscore}. 
For each $\delta>0$ and each tie-breaker $v=U+1, ..., V+1$, let $e(v)$ be short-hand notation for ``$\theta_i=\theta, R_{iu}>MID^u_{\theta s}\text{ for }u=1, ..., v-1, T_i(\delta)=T,$ and $W_i=w$." 
Similarly, $e(1)$ is short-hand notation for ``$\theta_i=\theta, T_i(\delta)=T,$ and $W_i=w$."  
Let
$\psi_s(\theta,T, \delta, w)\equiv E[D_i(s)| e(1)]$ be the assignment probability for an applicant with $\theta_i=\theta, T_i(\delta)=T,$ and characteristics $W_i=w$. 
Our proofs use a lemma that describes this assignment probability.
To state the lemma, for $v>U$, let
\begin{small}
$$\Phi_\delta(v)\equiv 
\begin{cases}
\dfrac{F_v(MID^v_{\theta s}+\delta| e(v))-F_v(MID^v_{\theta s}-\delta| e(v))}{F_v(MID^v_{\theta s}| e(v))-F_v(MID^v_{\theta s}-\delta| e(v))}\text{ if }t_b(\delta)=c\text{ for some }b\in B^v_{\theta s}\\
1\ \ \ \ \ \ \ \ \ \ \ \ \ \ \ \ \ \ \ \ \ \ \ \ \ \ \ \ \ \ \ \ \ \ \ \ \ \ \ \ \ \ \ \ \ \ \ \ \ \ \ \ \ \ \ \ \text{ otherwise.}
\end{cases}$$
We use this object to define
$\Phi_\delta\equiv\prod_{v=1}^{U}(1-MID^{v}_{\theta s})\prod^V_{v=U+1}\Phi_\delta(v).$ 
Finally, let
$$\Phi'_\delta\equiv
\begin{cases}
\max\left\{0, \dfrac{F_{v(s)}(\tau_{s}| e(V+1))-F_{v(s)}(\tau_s-\delta| e(V+1))}{F_{v(s)}(\tau_{s}+\delta| e(V+1))- F_{v(s)}(\tau_s-\delta| e(V+1))}\right\}\text{ if }v(s)>U\\
\max\left\{0, \dfrac{\tau_{s}-MID^{v(s)}_{\theta s}}{1-MID^{v(s)}_{\theta s}}\right\} \ \ \ \ \ \ \ \ \ \ \ \ \ \ \ \ \ \ \ \ \ \ \ \ \ \ \ \ \ \ \ \ \ \ \ \ \ \ \ \ \ \ \ \ \ \text{ if }v(s)\leq U.
\end{cases}$$
\end{small}

\begin{lemma}\label{pscoreContinuum}
In the general tie-breaking setting of Section \ref{sec:multiscore}, 
for any fixed $\delta>0$ such that $\delta < \min_{\theta,s,v} |\tau_s - MID_{\theta,s}^{v}|$, we have:
\[
\psi_s(\theta,T, \delta, w)
=\left\{\begin{array}
[c]{ll}
0 &\text{ if }t_s(\delta)=n\text{ or }t_b(\delta)=a\text{ for some }b\in B_{\theta s},
\\
\Phi_\delta &\text{ otherwise and }t_s(\delta)=a,
\\
\Phi_\delta\times\Phi'_\delta &\text{ otherwise and }t_s(\delta)=c.
\end{array}
\right.  
\]
\end{lemma}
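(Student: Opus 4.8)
The plan is to begin from the stability characterization \eqref{eq:core}, which says that $D_i(s)=1$ exactly when applicant $i$ qualifies at $s$ (that is, $\pi_{is}\le\xi_s$) and is disqualified at every preferred school (that is, $\pi_{ib}>\xi_b$ for all $b\succ_i s$). The first move is to dispose of the two degenerate cases. If $t_s(\delta)=n$, then conditioning on $T_i(\delta)=T$ forces $\pi_{is}>\xi_s$ with certainty --- the applicant either misses marginal priority at $s$ or, when $s$ is screened, lies above $\tau_s+\delta$ --- so qualification at $s$ fails and $\psi_s(\theta,T,\delta,w)=0$. If instead $t_b(\delta)=a$ for some $b\in B_{\theta s}$, the applicant clears $b$ for sure and is seated at a school preferred to $s$, again giving $\psi_s(\theta,T,\delta,w)=0$. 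This reproduces the first branch of the Lemma, and from here on I work in the complementary region.

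Next I would translate the combinatorial disqualification event into tie-breaker threshold events. The key bookkeeping fact, verified directly from the definition of $MID^{v}_{\theta s}$, is that disqualification at \emph{all} preferred schools drawing tie-breaker $v$ is equivalent to the single event $\{R_{iv}>MID^{v}_{\theta s}\}$: preferred schools where the applicant misses marginal priority contribute a threshold of $0$, a preferred school cleared for sure forces the threshold to $1$ (landing in the zero case just handled), and otherwise the binding threshold is the largest marginal cutoff $\max\{\tau_b\}$. Collecting these across $v$, disqualification-at-preferred becomes $\bigcap_v\{R_{iv}>MID^{v}_{\theta s}\}$, and when $t_s(\delta)=c$ the own constraint at $s$ merges with the $v(s)$ threshold into $MID^{v(s)}_{\theta s}<R_{iv(s)}\le\tau_s$. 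The heart of the argument is then to factor this joint probability under the conditioning in $e(1)$. Because the non-lottery tie-breakers need not be independent within an applicant (only Assumption \ref{assumption_smooth}(ii) controls their joint law), I would evaluate the probability by the chain rule along the tie-breaker index, conditioning the $v$-th factor on $e(v)$, i.e. on $\theta$, $T$, $w$, and $\{R_{iu}>MID^{u}_{\theta s}\}$ for $u<v$. Since lottery tie-breakers are uniform and, being randomly assigned, independent of type, $w$, and the other tie-breakers, their factors equal $1-MID^{v}_{\theta s}$ irrespective of the conditioning, producing $\prod_{v\le U}(1-MID^{v}_{\theta s})$. Each non-lottery factor is a conditional disqualification probability read off the conditional CDF $F_v(\cdot\mid e(v))$, where the role of conditioning on $T_i(\delta)=T$ is to pin the conditionally-seated tie-breakers inside their bandwidth windows $(\tau_b-\delta,\tau_b+\delta]$; these are exactly the factors $\Phi_\delta(v)$, which collapse to $1$ when no preferred school using $v$ is conditionally seated. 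Their product is $\Phi_\delta$, which is the whole answer when $t_s(\delta)=a$. When $t_s(\delta)=c$, the final factor splits off the own tie-breaker of $s$: conditional on the accumulated disqualification events recorded in $e(V+1)$, the probability of qualifying at $s$ is the uniform truncation ratio $(\tau_s-MID^{v(s)}_{\theta s})/(1-MID^{v(s)}_{\theta s})$ when $v(s)\le U$ and the window ratio when $v(s)>U$ --- that is, $\Phi'_\delta$ --- yielding $\psi_s=\Phi_\delta\times\Phi'_\delta$.

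The step I expect to be the main obstacle is the careful handling of the conditioning on $T_i(\delta)=T$ inside the factorization. Two points demand care. First, the chain-rule decomposition must keep the sequential conditioning events $e(v)$ mutually consistent, so that dependence across non-lottery tie-breakers is fully absorbed into the conditional CDFs rather than silently discarded; this is what lets a product formula survive in the absence of joint independence. Second, one must invoke the distinct-cutoffs hypothesis (Assumption \ref{assumption_differentiability}) together with the bandwidth restriction $\delta<\min_{\theta,s,v}|\tau_s-MID^{v}_{\theta s}|$ to ensure that, for any single tie-breaker, at most one relevant cutoff window can bind on a given realization and that the windows defining $MID^{v}_{\theta s}$ and $\tau_s$ are disjoint. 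This disjointness is precisely what makes the qualification and disqualification events separate cleanly and keeps the denominators in $\Phi_\delta(v)$ and $\Phi'_\delta$ strictly positive, using positivity of the density near cutoffs from Assumption \ref{assumption_smooth}(ii). Once these factors are assembled, checking that they reduce to $\sigma_s(\theta,T)$, $\lambda_s(\theta)$, and the three $t_s$ branches in the $\delta\to 0$ limit is routine and is the content carried forward into Theorem \ref{theorem:local}.
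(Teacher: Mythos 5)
Your proposal is correct and follows essentially the same route as the paper's proof: dispose of the $t_s(\delta)=n$ and $t_b(\delta)=a$ cases directly, convert disqualification at all preferred schools using tie-breaker $v$ into the single threshold event $\{R_{iv}>MID^{v}_{\theta s}\}$, factor the joint disqualification probability sequentially over tie-breakers via the conditional CDFs $F_v(\cdot\mid e(v))$ (which is exactly the chain-rule conditioning the paper builds into the definitions of $\Phi_\delta(v)$ and $\Phi'_\delta$), and append the own-school factor $\Phi'_\delta$ when $t_s(\delta)=c$. Your explicit use of the bandwidth restriction $\delta<\min_{\theta,s,v}|\tau_s-MID^{v}_{\theta s}|$ to keep the $MID$ and $\tau_s$ windows disjoint is the same step the paper performs when it resolves the $\max$ inside $\Phi'_\delta$.
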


\noindent \textit{Proof of Lemma \ref{pscoreContinuum}.} 
We start verifying the first line in $\psi_s(\theta,T, \delta, w)$.
Applicants who don't rank $s$ have $\psi_s(\theta,T, \delta, w)=0$.
Among those who rank $s$, those of $t_s(\delta)=n$ have $\rho_{\theta s}>\rho_s \text{ or, if } v(s)\neq 0, \: \rho_{\theta s}=\rho_s \text{ and } R_{iv(s)} > \tau_{s} + \delta$. 
If $\rho_{\theta s}>\rho_s$, then $\psi_s(\theta,T, \delta, w)=0$.
Even if $\rho_{\theta s}\leq\rho_s$, as long as $\: \rho_{\theta s}=\rho_s \text{ and } R_{iv(s)} > \tau_{s} + \delta$, student $i$ never clears the cutoff at school $s$ so $\psi_s(\theta,T, \delta, w)=0$.

To show the remaining cases, take as given that it is not the case that $t_s(\delta)=n\text{ or }t_b(\delta)=a\text{ for some }b\in B_{\theta s}$. 
Applicants with $t_b(\delta)\neq a$ for all $b\in B_{\theta s}$ and $t_s(\delta)=a$ or $c$ may be assigned $b\in B_{\theta s},$ where $\rho_{\theta b} = \rho_{b}$.
Since the (aggregate) distribution of tie-breaking variables for type $\theta$ students is $\hat F_v(\cdot| \theta)=F_v(\cdot| \theta)$, conditional on $T_i(\delta)=T$, the proportion of type $\theta$ applicants not  assigned any $b\in B_{\theta s}$ where $\rho_{\theta b} = \rho_{b}$ is $\Phi_\delta=\prod_{v=1}^{U}(1-MID^{v}_{\theta s})\prod^V_{v=U+1}\Phi_\delta(v)$ since each $\Phi_\delta(v)$ is the probability of not being assigned to any $b\in B^v_{\theta s}$.
To see why $\Phi_\delta(v)$ is the probability of not being assigned to any $b\in B^v_{\theta s}$, note that if $t_b(\delta)\neq c\text{ for all }b\in B^v_{\theta s}$, then $t_b(\delta)=n$ for all $b\in B^v_{\theta s}$ so that applicants are never assigned to any $b\in B^v_{\theta s}$. 
Otherwise, i.e., if $t_b(\delta)=c\text{ for some }b\in B^v_{\theta s}$, then applicants are assigned to $s$ if and only if their values of tie-breaker $v$ clear the cutoff of the school that produces $MID^{v}_{\theta s}$, where applicants have $t_s(\delta)=c$. 
This event happens with probability 

$$\dfrac{F_v(MID^v_{\theta s}| e(v))-F_v(MID^v_{\theta s}-\delta| e(v))}{F_v(MID^v_{\theta s}+\delta| e(v))-F_v(MID^v_{\theta s}-\delta| e(v))},$$

\noindent implying that $\Phi_\delta(v)$ is the probability of not being assigned to any $b\in B^v_{\theta s}$. 

Given this fact, to see the second line, note that every applicant of type $t_s(\delta)=a$ who is not assigned a higher choice
is assigned $s$ for sure because $\rho_{\theta s}<\rho_{s}$ or $\rho_{\theta s}+R_{iv(s)}<\xi_s$. 
Therefore, we have%
\[
\psi_s(\theta,T, \delta, w)=\Phi_\delta.
\]

Finally, consider applicants with $t_s(\delta)=c$.
The fraction of those who are
not assigned a higher choice is $\Phi_\delta$, as explained above.
Also, for tie-breaker $v(s)$, the tie-breaker values of these applicants are larger (worse) than $MID^{v(s)}_{\theta s}$.
If $\tau_{s}<MID^{v(s)}_{\theta s},$ then no such applicant is assigned $s.$ If
$\tau_{s}\geq MID^{v(s)}_{\theta s},$ then the fraction of applicants who are
assigned $s$ conditional on $\tau_{s}\geq MID^{v(s)}_{\theta s}$ is given by
$$\max\left\{0, \frac{F_{v(s)}(\tau_{s}| e(V+1))-\max\{F_{v(s)}(MID^{v(s)}_{\theta s}| e(V+1)), F_{v(s)}(\tau_s-\delta| e(V+1))\}}{F_{v(s)}(\tau_{s}+\delta| e(V+1))-\max\{F_{v(s)}(MID^{v(s)}_{\theta s}| e(V+1)), F_{v(s)}(\tau_s-\delta| e(V+1))\}}\right\}\text{ if }v(s)>U$$
and
$$\max\left\{0, \dfrac{\tau_{s}-MID^{v(s)}_{\theta s}}{1-MID^{v(s)}_{\theta s}}\right\}\text{ if }v(s)\leq U.$$

\noindent If $MID^{v(s)}_{\theta s} < \tau_s$, then $\delta < \min_{\theta,s,v} |\tau_s - MID_{\theta,s}^{v}|$ implies $MID^{v(s)}_{\theta s} < \tau_s-\delta$.  This in turn implies
$$\max\{F_{v(s)}(MID^{v(s)}_{\theta s}| e(V+1)), F_{v(s)}(\tau_s-\delta| e(V+1))\}= F_{v(s)}(\tau_s-\delta| e(V+1)).$$
If $MID^{v(s)}_{\theta s} > \tau_s$, then $\delta < \min_{\theta,s,v} |\tau_s - MID_{\theta,s}^{v}|$ implies $MID^{v(s)}_{\theta s} > \tau_s+\delta$. 
By the definition of $e(V+1)$, $ R_{iu}>MID^u_{\theta s}\text{ for }u=1, ..., V$. Therefore, there is no applicant with $R_{iv(s)}>MID^{v(s)}_{\theta s}$ and $R_{iv(s)} \in [\tau_s-\delta,\tau_s+\delta]$.

Hence, conditional on $t_s(\delta)=c$ and not being assigned a choice preferred to $s,$ the probability of
being assigned $s$ is given by $\Phi'_\delta$.
Therefore, for students with $t_s(\delta)=c$, we have $\psi_s(\theta,T, \delta, w)=\Phi_\delta \times\Phi'_\delta.$\qed

\begin{lemma}\label{lemma2}
In the general tie-breaking setting of Section \ref{sec:multiscore}, 
for all $s$, $\theta$, and sufficiently small $\delta>0$, we have:
\begin{equation}
\psi_s(\theta,T, \delta, w)=\left\{
\begin{array}
[c]{ll}


0 & \text{ if }t_s(0)=n\text{ or }t_b(0)=a\text{ for some }b\in B_{\theta s},\\







\Phi^*_\delta &\text{ otherwise and }t_s(0)=a,\\


\Phi^*_\delta\times&\dfrac{F_{v(s)}(\tau_{s}| e(V+1))-F_{v(s)}(\tau_s-\delta| e(V+1))}{F_{v(s)}(\tau_s+\delta| e(V+1))-F_{v(s)}(\tau_s-\delta| e(V+1))}\\

\ &\text{ otherwise and }t_s(0)=c\text{ and }v(s)>U. \\

\Phi^*_\delta\times&\max\{ 0,\dfrac{\tau_s-MID^{v(s)}_{\theta s}}{1-MID^{v(s)}_{\theta s}}\}\\

\ &\text{ otherwise and }t_s(0)=c\text{ and }v(s)\leq U. \\

\end{array}
\right.
\end{equation}

\noindent where
$$\Phi^*_\delta(v)\equiv
\begin{cases}
\dfrac{F_v(MID^v_{\theta s}+\delta| e(v))-F_v(MID^v_{\theta s}| e(v))}{F_v(MID^v_{\theta s}+\delta| e(v))-F_v(MID^v_{\theta s}-\delta| e(v))}\text{ if }MID^{v}_{\theta s}=\tau_b \text{ and } t_b=c \text{ for some } b\in B^{v}_{\theta s},\\
1\ \ \ \ \ \ \ \ \ \ \ \ \ \ \ \ \ \ \ \ \ \ \ \ \ \ \ \ \ \ \ \ \ \ \ \ \ \ \ \ \ \ \ \ \ \ \ \ \ \ \ \ \text{ otherwise}
\end{cases}$$
and

$$\Phi^*_\delta\equiv \prod_{v=1}^{U}(1-MID^{v}_{\theta s})\prod^V_{v=U+1}\Phi^*_\delta(v).$$
\end{lemma}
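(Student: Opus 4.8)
The plan is to derive Lemma \ref{lemma2} from Lemma \ref{pscoreContinuum}, since both express the same finite-bandwidth probability $\psi_s(\theta,T,\delta,w)$; Lemma \ref{lemma2} merely repackages it so that each factor has a transparent $\delta\to 0$ limit (the $\Phi^*_\delta(v)$ and the screened clearance factor will each tend to $0.5$, producing the $\sigma_s$ and the ``$\times 0.5$'' of Theorem \ref{theorem:local}). Throughout I would work in the regime $\delta<\min_{\theta,s,v}|\tau_s-MID^v_{\theta s}|$ supplied by the hypothesis of Lemma \ref{pscoreContinuum} together with Assumption \ref{assumption_differentiability}, so that no bandwidth window reaches a second distinct cutoff. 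The first case of Lemma \ref{lemma2} ($t_s=n$, or $t_b=a$ for some $b\in B_{\theta s}$) is immediate from Lemma \ref{pscoreContinuum}: in either situation $i$ cannot clear $s$ or is certainly assigned a preferred school, so $\psi_s=0$. For the remaining cases the work is to show (a) that the disqualification factor $\Phi_\delta$ of Lemma \ref{pscoreContinuum} equals $\Phi^*_\delta$, and (b) that the clearance factor at $s$ reduces to the three displayed forms.

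For step (a) I would argue tie-breaker by tie-breaker, using the independence in Assumption \ref{assumption_smooth}(i) and the nested conditioning events $e(v)$ to factor the probability of surviving all preferred schools as $\prod_{v=1}^{U}(1-MID^v_{\theta s})\prod_{v=U+1}^{V}\Phi^*_\delta(v)$, exactly along the lines of the proof of Lemma \ref{pscoreContinuum}; the lottery factors are unchanged. For each non-lottery $v>U$, the key point is that failing the most forgiving preferred cutoff $MID^v_{\theta s}=\tau_b$ implies failing every preferred school using $v$, so survival is the single event $R_{iv}>MID^v_{\theta s}$. The refinement over Lemma \ref{pscoreContinuum} lies in the triggering condition: were some preferred school $b'\in B^v_{\theta s}$ with $\tau_{b'}<MID^v_{\theta s}$ conditionally seated, then $R_{iv}$ would sit within $\delta$ of $\tau_{b'}$, and cutoff separation would place $R_{iv}$ below $MID^v_{\theta s}-\delta$, forcing the $MID^v$-producing school to carry label $a$ --- the prob-0 case already excluded. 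Hence in every remaining case the only preferred school using $v$ that can be conditionally seated is the one producing $MID^v_{\theta s}$, which is exactly the condition ``$MID^v_{\theta s}=\tau_b$ and $t_b=c$ for some $b\in B^v_{\theta s}$'' that defines $\Phi^*_\delta(v)$. When that school is conditionally seated, survival has probability $P(R_{iv}>MID^v_{\theta s}\mid R_{iv}\in(MID^v_{\theta s}-\delta,MID^v_{\theta s}+\delta],\,e(v))$, the displayed ratio; otherwise survival is certain and $\Phi^*_\delta(v)=1$.

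For step (b) I would read the clearance factor at $s$ off the final paragraph of the proof of Lemma \ref{pscoreContinuum}. When $t_s=a$ applicant $i$ clears $s$ for sure conditional on reaching it, so $\psi_s=\Phi^*_\delta$. When $t_s=c$ and $s$ is screened ($v(s)>U$), clearance requires $R_{iv(s)}\le\tau_s$ inside the window $(\tau_s-\delta,\tau_s+\delta]$, and cutoff separation keeps this window away from $MID^{v(s)}_{\theta s}$, so the truncation in $\Phi'_\delta$ is inactive and the clearance probability is exactly the conditional-CDF ratio in the third line. When $t_s=c$ and $s$ is a lottery school ($v(s)\le U$), the tie-breaker is uniform and bandwidth-free, giving $\max\{0,(\tau_s-MID^{v(s)}_{\theta s})/(1-MID^{v(s)}_{\theta s})\}$ as in the fourth line. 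Because lottery-school labels do not depend on $\delta$, the case split in terms of the recorded labels is unambiguous, and multiplying the survival and clearance factors yields each line of the lemma.

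The main obstacle is the refinement in step (a): showing that, in every configuration of positive probability, the only preferred school using a given non-lottery tie-breaker that can be conditionally seated is the $MID^v_{\theta s}$-producing one, so that the product over schools collapses onto the $MID$ schools. This rests entirely on the separation bound $\delta<\min_{\theta,s,v}|\tau_s-MID^v_{\theta s}|$ and on keeping the sequential conditioning events $e(v)$ aligned with the cross-tie-breaker independence of Assumption \ref{assumption_smooth}(i). Once this bookkeeping is in place, the remaining steps are the conditional-probability evaluations already carried out for Lemma \ref{pscoreContinuum}.
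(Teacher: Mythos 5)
Your proposal is correct and follows essentially the same route as the paper: both derive Lemma~\ref{lemma2} from Lemma~\ref{pscoreContinuum} by (i) carrying over the zero cases, (ii) showing that in the remaining cases the survival factor collapses to $\Phi^*_\delta$ because only the $MID^v_{\theta s}$-producing school can carry label $c$ once the $t_b=a$ configurations are excluded, and (iii) simplifying the clearance factor at $s$ using Assumption~\ref{assumption_differentiability} to deactivate the $\max$ truncation. Your step (a) actually spells out the cutoff-separation argument for why the triggering conditions of $\Phi_\delta(v)$ and $\Phi^*_\delta(v)$ coincide, which the paper asserts without elaboration.
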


\noindent \textit{Proof of Lemma \ref{lemma2}}. 
The first line follows from Lemma \ref{pscoreContinuum} and the fact that $t_s(0)=n\text{ or }t_b(0)=a\text{ for some }b\in B_{\theta s}$ imply $t_s(\delta)=n\text{ or }t_b(\delta)=a\text{ for some }b\in B_{\theta s}$ for sufficiently small $\delta>0$.

For the remaining lines, first note that conditional on $t_s(0)\neq n\text{ and }t_b(0)\neq a\text{ for all }b\in B_{\theta s}$, we have $\Phi^*_\delta(v)=\Phi_\delta(v)$ and so $\Phi^*_\delta=\Phi_\delta$ holds for small enough $\delta$. 
$\Phi^*_\delta$ therefore is the probability of not being assigned to a school preferred to $s$ in the last three cases.

The second line is then by the fact that $t_s(0)=a$ implies $t_s(\delta)=a$ for small enough $\delta>0$. 
The third line is by the fact that for small enough $\delta>0$, 
\begin{align*}
\Phi'_\delta&=\max\bigg\{0, \frac{F_{v(s)}(\tau_{s}| e(V+1))-F_{v(s)}(\tau_s-\delta| e(V+1))}{F_{v(s)}(\tau_s+\delta| e(V+1))-F_{v(s)}(\tau_{s}-\delta| e(V+1))}\bigg\}\\
&=\frac{F_{v(s)}(\tau_{s}| e(V+1))-F_{v(s)}(\tau_s-\delta| e(V+1))}{F_{v(s)}(\tau_s+\delta| e(V+1))-F_{v(s)}(\tau_{s}-\delta| e(V+1))},
\end{align*}

\noindent where we invoke Assumption \ref{assumption_differentiability}, which implies $MID^v_{\theta s}\neq \tau_s$. 
The last line directly follows from Lemma \ref{pscoreContinuum}. \qed \\

We use Lemma \ref{lemma2} to derive Theorem \ref{theorem:local}. 
We characterize $\lim_{\delta\rightarrow 0}\psi_s(\theta,T, \delta, w)$ and show that it coincides with $\psi_{s}(\theta, T)$ in the main text. 
In the first case in Lemma \ref{lemma2}, $\psi_s(\theta,T, \delta, w)$ is constant (0) for any small enough $\delta$.
The constant value is also $\lim_{\delta\rightarrow 0}\psi_s(\theta,T, \delta, w)$ in this case.

To characterize $\lim_{\delta\rightarrow 0}\psi_s(\theta,T, \delta, w)$ in the remaining cases, note that by the differentiability of $F_v(\cdot| e(v))$ (recall the continuous differentiability of $F^i_v(r| e)$), L'Hopital's rule implies:
\[
\lim_{\delta\rightarrow 0}\dfrac{F_{v(s)}(\tau_{s}| e(V+1))-F_{v(s)}(\tau_s-\delta| e(V+1))}{F_{v(s)}(\tau_s+\delta| e(V+1))-F_{v(s)}(\tau_s-\delta| e(V+1))}=\dfrac{F'_{v(s)}(\tau_s| e(V+1))}{2F'_{v(s)}(\tau_s| e(V+1))}=0.5\\
\]
and
\[
\lim_{\delta\rightarrow 0}\dfrac{F_v(MID^v_{\theta s}+\delta| e(v))-F_v(MID^v_{\theta s}| e(v))}{F_v(MID^v_{\theta s}+\delta| e(v))-F_v(MID^v_{\theta s}-\delta| e(v))}=\dfrac{F'_v(MID^v_{\theta s}| e(v))}{2F'_v(MID^v_{\theta s}| e(v))}=0.5.\\
\]

\noindent This implies $\lim_{\delta\rightarrow 0}\Phi^*_\delta(v)=0.5^{1\{MID^{v}_{\theta s}=\tau_b \text{ and } t_b=c \text{ for some } b\in B^{v}_{\theta s}\}}$ since $1\{MID^{v}_{\theta s}=\tau_b \text{ and } t_b=c \text{ for some } b\in B^{v}_{\theta s}\}$ does not depend on $\delta$. 
Therefore $$\lim_{\delta\rightarrow 0}\Phi^*_\delta=\prod_{v=1}^{U}(1-MID^{v}_{\theta s})0.5^{m_{s}(\theta,T)}$$
where $m_{s}(\theta,T) = |\{v>U: MID^{v}_{\theta s}=\tau_b \text{ and } t_b=c \text{ for some } b\in B^{v}_{\theta s}\}|$. 

Combining these limiting facts with the fact  that the limit of a product of functions equals the product of the limits of the functions, we obtain the following: 
$\lim_{\delta\rightarrow 0}\psi_s(\theta,T, \delta, w)=0$ if (a) $t_{s}=n$ or (b) $t_b=a \ \text{ for some } b \in B_{\theta s}$. Otherwise,\\
		\begin{equation*}
	\psi_s(\theta,T)=\left\{
	\begin{array}
	[c]{ll}
	\sigma_s(\theta,T)\lambda_s(\theta) & \text{ if } t_{s}=a\\
	\sigma_s(\theta,T)\lambda_s(\theta)\max\left\{ 0,\frac{\tau_{s}-MID^{v(s)}_{\theta s}}{1-MID^{v(s)}_{\theta s}}\right\} & \text{ if } t_{s}=c\text{ and }v(s)\leq U\\
	0.5\sigma_s(\theta,T)\lambda_s(\theta) & \text{ if } t_{s}=c\text{ and }v(s)> U.\\
	\end{array}
	\right.
	\end{equation*}
		This expression coincides with $\psi_s(\theta,T)$, completing the proof of Theorem \ref{theorem:local}. 

\newpage 

\begin{center}
\begin{LARGE}
\textbf{Online Appendices}
\end{LARGE}
\end{center}

\makeatletter 
\renewcommand{\thefigure}{\thesection\@arabic\c@figure}
\makeatother

\section{Understanding Theorem \ref{theorem:local}}\label{appendix_illustration}

Figure \ref{figappdx:Thm1} illustrates Theorem \ref{theorem:local} for an applicant who ranks screened schools 1, 3, 5 and 6 and lottery schools 2 and 4, where school $k$ is applicant's $k$-th choice. The line next to each school represents applicant position (priority plus tie-breaker) for each school. Schools with the same colored lines have the same tie-breaker. Schools 1 and 5 use screened tie-breaker $2$. Schools 2 and 4 use lottery tie-breaker 1. Schools 3 and 6 use screened tie-breaker $3$. 

Since school 1 has only one priority, positions run from 1 to 2.  School \textit{2} has two priority groups, so positions run from 1 to 3. Figure \ref{figappdx:Thm1} indicates the applicants position $\pi$ by an arrow. At screened schools, the brackets around the DA cutoff $\xi$ represent the $\delta$-neighborhood around the cutoff. 

\setcounter{figure}{0}

\begin{landscape}
\begin{figure}[!t] 
    \centering
    \caption*{Figure B1: Illustrating Theorem 1} \refstepcounter{figure} \label{figappdx:Thm1} 
    \includegraphics[scale=.47]{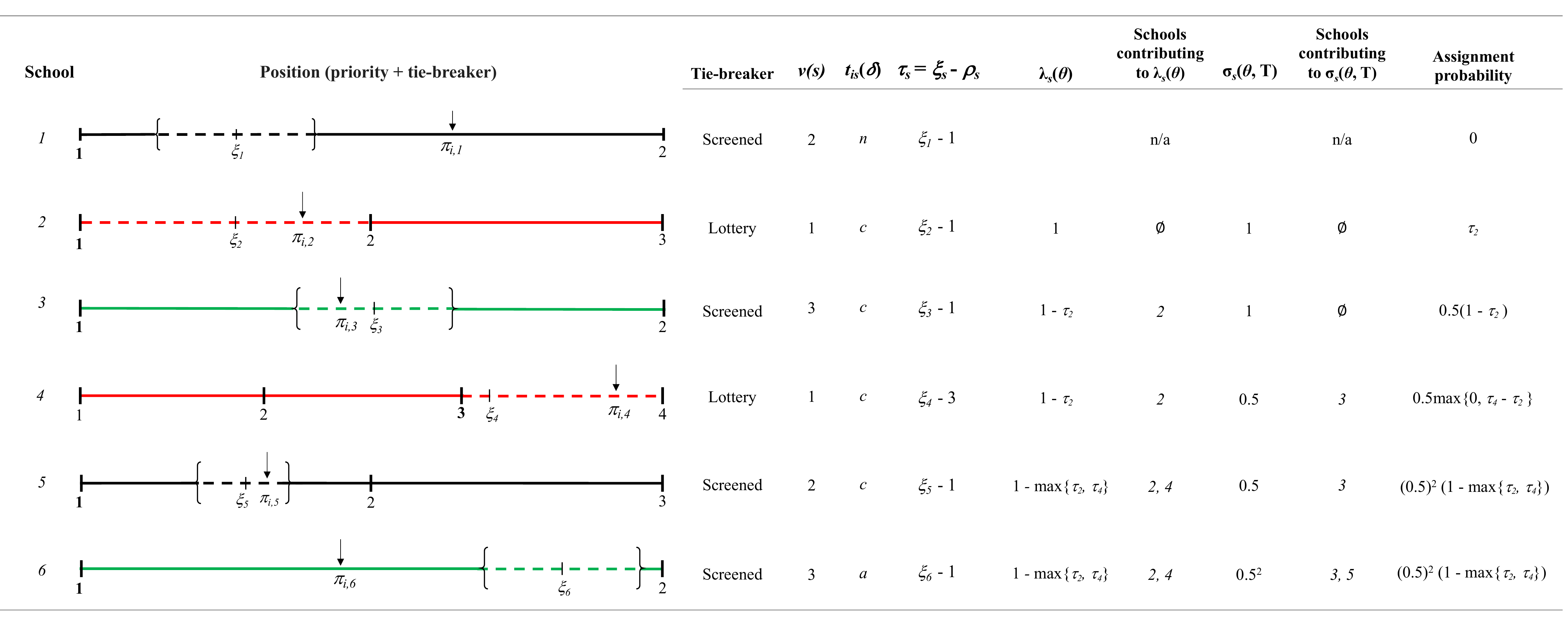}
    \vspace{0.5cm}
    \floatfoot{\footnotesize{Notes: This figure illustrates Theorem 1 for one applicant listing six schools. The applicant has marginal priority (shown in bold) at each. Dashes mark intervals in which offer risk is strictly between 0 and 1. The set of applicants subject to random assignment includes everyone with marginal priority at lottery schools and applicants with tie-breakers inside the relevant bandwidth at screened schools. Same-color tie-breakers are shared. Schools \textit{1, 3, 5,} and \textit{6} are screened, while \textit{2} and \textit{4} have lottery tie-breakers. The applicant's preferences are \textit{1} $ \succ_i $ \textit{2} $ \succ_i $  \textit{3} $ \succ_i $ \textit{4} $ \succ_i $ \textit{5} $ \succ_i $ \textit{6}. Arrows mark $\pi_{is} = \rho_{is} + R_{iv(s)}$, the applicant's position at each school $s$. Lower $\pi_{is}$ is better. Integers indicate priorities $\rho_s$, and tick marks indicate the DA cutoff, $\xi_s = \rho_s + \tau_s$. Note that $t_\mathit{6} = a$, so this applicant is sure to be seated somewhere. The assignment probability therefore sums to 1: if $\tau_{\mathit{2}} \geq \tau_{\mathit{4}},$ the probability of any assignment is $\tau_{\mathit{2}} + 0.5 \times (1- \tau_{\mathit{2}}) + 0 + 2 \times 0.5^2 \times (1 - \tau_{\mathit{2}}) = 1$; if $\tau_{\mathit{2}}  <  \tau_{\mathit{4}},$ this probability is $\tau_{\mathit{2}} + 0.5 \times (1-\tau_{\mathit{2}})+0.5\times (\tau_\mathit{4}-\tau_{\mathit{2}})+ 2 \times 0.5^2 \times (1- \tau_{\mathit{4}})=1$.}}
    
\end{figure}
\end{landscape}

The applicant is never seated at school \textit{1} since his position is to the right of the $\delta$-neighborhood, conditionally seated at schools \textit{2} and \textit{4} since his priority is equal to the marginal priority at each school, conditionally seated at schools \textit{3} and \textit{5} since his position is within the $\delta$-neighborhood at each school, and always seated at school \textit{6} since his position is to the left of the $\delta$-neighborhood. 

The columns next to the lines record tie-breaker cutoff, $\tau$, disqualification probability at lottery schools, $\lambda$, schools contributing to $\lambda$, the disqualification probability at screened schools, $\sigma$, schools contributing to $\sigma$, and assignment probability.

The local score at each school is computed as follows: 
\begin{itemize}
    \item[School \textit{1}:] The local score at school \textit{1} is zero because $t_{i\mathit{1}}(\delta)=n$.
    \item[School \textit{2}:] MID at school \textit{2} is zero because this applicant ranks no other lottery school higher. Hence, the second line of \eqref{equation_main_theorem2} applies and probability is given by the tie-breaker cutoff at school \textit{2}, which is $\tau_{\mathit{2}}$.  
    \item[School \textit{3}:] Since $t_{i\mathit{3}}(\delta)=c$, the third line of \eqref{equation_main_theorem2} applies. The local score at school \textit{3} is the probability of not being assigned to school \textit{2}, that is, $1-\tau_{\mathit{2}}$, times $0.5$.  This last term is the probability associated with being local to the cutoff at school \textit{3}.  
    \item[School \textit{4}:] MID at school \textit{4} is determined by the tie-breaker cutoff at school \textit{2}. When MID exceeds the tie-breaker cutoff at school \textit{4}, then school \textit{4} assignment probability is zero. Otherwise, since $t_{i\mathit{3}}(\delta)=c$
    and school \textit{4} is a lottery school, the second line of \eqref{equation_main_theorem2} applies.   
    The probability is therefore $0.5$ times the difference between the cutoff at school \textit{4} and MID.
    \item[School \textit{5}:] MID at school \textit{5} is determined by the larger of the tie-breaker cutoffs at school \textit{2} and school \textit{4}. Since $t_{i\mathit{5}}(\delta)=c$, the third line of \eqref{equation_main_theorem2} applies, and the probability is determined by $(0.5)^2$ times $\lambda$, the disqualification probability at lottery schools.
    \item[School \textit{6}:] Finally, since $t_{i\mathit{6}}(\delta)=a$, the first line of \eqref{equation_main_theorem2} applies and the local score becomes $(0.5)^2$ times $\lambda$.
    \end{itemize}

Since $t_{i\mathit{6}}(\delta)=a$, the probabilities sum to 1.  If $\tau_{\mathit{2}} \geq \tau_{\mathit{4}}$, the probability of any assignment is $\tau_{\mathit{2}}+0.5 \times (1-\tau_{\mathit{2}})+2\times(0.5)^2\times(1-\tau_{\mathit{2}})=1$.
If $\tau_{\mathit{2}}<\tau_{\mathit{4}}$, the probability is $\tau_{\mathit{2}}+0.5 \times (1-\tau_{\mathit{4}})+0.5 \times (\tau_{\mathit{4}} - \tau_{\mathit{2}})+2 \times 0.5^2 \times (1-\tau_{\mathit{4}})=1$.

\section{Additional Results and Proofs}

\subsection{The DA Propensity Score}\label{globalrisk}

This appendix derives the DA propensity score defined as the probability of assignment conditional on type for all applicants, without regard to cutoff proximity.  The serial dictatorship propensity score discussed in Section \ref{sec:SDscoredetail} is a special case of this.

$MID^{v}_{\theta s}$ and priority status determine DA propensity score with general tie-breakers. For this proposition, we assume that tie-breakers $R_{iv}$ and $R_{iv'}$ are independent for $v \neq v'$.


\begin{proposition}[The DA Propensity Score with General Tie-breaking]\label{theorem:multi} 
	Consider DA with multiple tie-breakers indexed by $v$, 
	distributed independently of one another according to $F_v(r|\theta)$. For all $s$ and $\theta$ in this match,
	\begin{equation*}
	\begin{split}
	p_{s}(\theta)=
	\left\{
	\begin{array}
	[c]{ll}
	0 & \text{ if }\rho_{\theta s}>\rho_s\\
	\prod_{v}(1-F_v(MID^{v}_{\theta s}|\theta)) & \text{ if }\rho_{\theta s}<\rho_s \\
	\prod_{v\neq v(s)}(1-F_v(MID^{v}_{\theta s}|\theta))\\
		\times\max\left\{ 0,F_{v(s)}(\tau_s|\theta)-F_{v(s)}(MID^{{v(s)}}_{\theta s}|\theta)\right\}    & \text{ if }\rho_{\theta s}=\rho_s \\
	\end{array}
	\right.
	\end{split}
	\end{equation*}
	where $F_{v(s)}(\tau_s|\theta)=\tau_s$ and $F_{v(s)}(MID^{{v(s)}}_{\theta s}|\theta) = MID^{v(s)}_{\theta s}$ when $v(s) \in \{1, ..., U\}$.
\end{proposition}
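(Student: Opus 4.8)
The plan is to start from the stability characterization of student-proposing DA in \eqref{eq:core}, which says that an applicant $i$ ranking $s$ is assigned $s$ if and only if $i$ qualifies at $s$ (position $\pi_{is}\le\xi_s$) and fails to qualify at every preferred school $b\succ_i s$ (position $\pi_{ib}>\xi_b$). In the large-market continuum model the cutoffs $\xi_s$---and hence the marginal priorities $\rho_s$, the tie-breaker cutoffs $\tau_s$, and every $MID^v_{\theta s}$---are fixed constants, so the only randomness is the tie-breaker vector. Thus $p_s(\theta)$ is the probability that the draws $\{R_{iv}\}$ satisfy the qualification/disqualification events, and the whole argument reduces to rewriting \eqref{eq:core} as threshold conditions on the $R_{iv}$ and integrating against the independent tie-breaker distributions.

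First I would dispose of qualification at $s$ through marginal priority, exactly as in the text: if $\rho_{\theta s}>\rho_s$ the applicant never clears $s$, giving the first line $p_s(\theta)=0$; if $\rho_{\theta s}<\rho_s$ qualification at $s$ is automatic and places no constraint on $R_{iv(s)}$; and if $\rho_{\theta s}=\rho_s$ qualification at $s$ is equivalent to $R_{iv(s)}\le\tau_s$.

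The key step is to show that the disqualification event $\{\pi_{ib}>\xi_b\text{ for all }b\succ_\theta s\}$ collapses, one tie-breaker at a time, into the single threshold $R_{iv}>MID^v_{\theta s}$. Grouping the preferred schools by tie-breaker via $B^v_{\theta s}$, I would observe that since each $b\in B^v_{\theta s}$ uses tie-breaker $v$ and $\rho_{ib}=\rho_{\theta b}$, disqualification at $b$ is: automatic when $\rho_{\theta b}>\rho_b$; impossible when $\rho_{\theta b}<\rho_b$; and equivalent to $R_{iv}>\tau_b$ when $\rho_{\theta b}=\rho_b$. Matching these to the three branches of the $MID^v_{\theta s}$ definition: (i) if every such $b$ has $\rho_{\theta b}>\rho_b$ or $B^v_{\theta s}=\emptyset$, disqualification is free, $MID^v_{\theta s}=0$, and $1-F_v(0|\theta)=1$ imposes nothing; (ii) if some $b$ has $\rho_{\theta b}<\rho_b$, the applicant is seated there for sure and can never be disqualified at all preferred schools, captured by $MID^v_{\theta s}=1$ and $1-F_v(1|\theta)=0$; (iii) otherwise the binding requirement is $R_{iv}$ exceeding every marginal-priority cutoff, i.e. $R_{iv}>\max\{\tau_b\}=MID^v_{\theta s}$, using the monotonicity that failing a more forgiving (larger) cutoff entails failing all stricter ones sharing tie-breaker $v$. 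In every case the disqualification probability contributed by tie-breaker $v$ is $1-F_v(MID^v_{\theta s}|\theta)$.

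Finally I would assemble the pieces using independence of tie-breakers across $v$ (the assumption maintained just before the proposition), so the joint probability factors into a product over $v$. When $\rho_{\theta s}<\rho_s$, qualification at $s$ is free and each tie-breaker contributes $1-F_v(MID^v_{\theta s}|\theta)$, giving the second line. When $\rho_{\theta s}=\rho_s$, the tie-breaker $v(s)$ does double duty---it must simultaneously clear $s$ ($R_{iv(s)}\le\tau_s$) and disqualify the preferred schools in $B^{v(s)}_{\theta s}$ ($R_{iv(s)}>MID^{v(s)}_{\theta s}$)---so its contribution is the measure of the interval $(MID^{v(s)}_{\theta s},\tau_s]$, namely $\max\{0,F_{v(s)}(\tau_s|\theta)-F_{v(s)}(MID^{v(s)}_{\theta s}|\theta)\}$, while every $v\neq v(s)$ still contributes $1-F_v(MID^v_{\theta s}|\theta)$; this yields the third line, and the lottery simplification follows from $F_v(r|\theta)=r$ for $v\le U$. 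I expect the main obstacle to be step (iii): verifying rigorously that disqualification across all same-tie-breaker preferred schools is governed by the single maximal marginal cutoff, and ensuring that the double role of $v(s)$ in the marginal case is handled by intersecting two events on the \emph{same} random variable rather than multiplying them as if independent.
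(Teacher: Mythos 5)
Your proposal is correct and follows essentially the same route as the paper's proof: fix the large-market cutoffs, split by marginal-priority status at $s$, collapse disqualification at preferred schools into the single threshold $R_{iv}>MID^{v}_{\theta s}$ per tie-breaker, and use independence across tie-breakers to factor the probability, with the marginal case handled by intersecting the qualification and disqualification events on the common tie-breaker $v(s)$ to get the interval measure $\max\{0,F_{v(s)}(\tau_s|\theta)-F_{v(s)}(MID^{v(s)}_{\theta s}|\theta)\}$. The only cosmetic difference is that the paper reaches that last term by conditioning on non-assignment to preferred schools and then cancelling the $1-F_{v(s)}(MID^{v(s)}_{\theta s}|\theta)$ factor, whereas you compute the joint event directly.
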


Proposition \ref{theorem:multi}, which generalizes an earlier multiple lottery tie-breaker result in  \cite{mdrd1:17}, covers three sorts of applicants. First, applicants with less-than-marginal priority at $s$ have no chance of being seated there.  The second line of the theorem reflects the likelihood of qualification at schools preferred to $s$ among applicants surely seated at $s$ when they can't do better.   Since tie-breakers are assumed independent, the probability of \text{not} doing better than $s$ is described by a product over tie-breakers, $\prod_{v}(1-F_v(MID^{v}_{\theta s}|\theta))$. If type $\theta$ is sure to do better than $s$, then $MID^{v}_{\theta s}=1$ and the probability at $s$ is zero.

Finally, the probability for applicants with $\rho_{\theta_i s}=\rho_s$ multiplies the term
$$\prod_{v\neq v(s)}(1-F_v(MID^{v}_{\theta s}|\theta))$$
by
$$\max\left\{ 0,F_{v(s)}(\tau_s|\theta)-F_{v(s)}(MID^{{v(s)}}_{\theta s}|\theta)\right\}.$$
The first of these is the probability of failing to improve on $s$ by virtue of being seated at schools using a tie-breaker \textit{other} than $v(s)$.  
The second parallels assignment probability in single-tie-breaker serial dictatorship: to be seated at $s$, applicants in $\rho_{\theta_i s}=\rho_s$ must have $R_{iv(s)}$ between $MID^{{v(s)}}_{\theta s}$ and $\tau_s$. 

Proposition \ref{theorem:multi} allows for single tie-breaking, lottery tie-breaking, or a mix of non-lottery and lottery tie-breakers as in the NYC high school match.
With a single tie-breaker, the propensity score formula simplifies, omitting product terms over $v$:  
\begin{corollary}[\cite{mdrd1:17}] \label{proposition:SingleR} 
	Consider DA using a single tie-breaker, $R_i$, distributed according to $F_R(r|\theta)$ for type $\theta$. For all $s$ and $\theta$ in this market, we have:
	\begin{equation*}
	p_{s}(\theta)=\left\{
	\begin{array}
	[c]{ll}
	0 & \text{ if }\rho_{\theta s}>\rho_s,\\
	1-F_R(MID_{\theta s}|\theta) & \text{ if }\rho_{\theta s}<\rho_s, \\
	(1-F_R(MID_{\theta s}|\theta))\times\max\left\{ 0,\dfrac{F_R(\tau_s|\theta)-F_R(MID_{\theta s}|\theta)}{1-F_R(MID_{\theta s}|\theta)}\right\}  & \text{ if }\rho_{\theta s}=\rho_s,\\
	\end{array}
	\right.
	\end{equation*}
	where $p_{s}(\theta)=0$ when $MID_{\theta s}=1$ and $\rho_{\theta s}=\rho_s$, and $MID_{\theta s}$ is as defined in Section \ref{sec:SDsection}, applied to a single tie-breaker. 
\end{corollary}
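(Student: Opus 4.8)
The plan is to derive Corollary \ref{proposition:SingleR} as the single-tie-breaker specialization of Proposition \ref{theorem:multi}, which I may assume. First I would set the number of tie-breakers to one, so that every school draws from the common scalar $R_i$ and the index function satisfies $v(s)=1$ for all $s$. Under this restriction the partition $B^v_{\theta s}$ has the single member set $B^1_{\theta s}=B_{\theta s}$, so the family $\{MID^v_{\theta s}\}$ collapses to the single scalar $MID_{\theta s}$ — the three-branch quantity evaluated on $B_{\theta s}$, which reduces to the serial-dictatorship definition of Section \ref{sec:SDsection} when priorities are trivial. Verifying this reduction of $MID^{v}_{\theta s}$ to $MID_{\theta s}$ is the only bookkeeping needed before invoking the proposition.

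With the reduction in hand I would proceed branch by branch through the formula in Proposition \ref{theorem:multi}. The case $\rho_{\theta s}>\rho_s$ gives $p_s(\theta)=0$ directly. For $\rho_{\theta s}<\rho_s$, the product $\prod_v(1-F_v(MID^{v}_{\theta s}|\theta))$ contains the single factor $1-F_R(MID_{\theta s}|\theta)$, which is exactly the second branch of the corollary. For the marginal case $\rho_{\theta s}=\rho_s$, the product $\prod_{v\neq v(s)}(\cdot)$ is taken over the empty index set and hence equals $1$, leaving $\max\{0,\,F_R(\tau_s|\theta)-F_R(MID_{\theta s}|\theta)\}$.

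The remaining step is to reconcile this last expression with the form stated in the corollary, namely $(1-F_R(MID_{\theta s}|\theta))\times\max\{0,\,[F_R(\tau_s|\theta)-F_R(MID_{\theta s}|\theta)]/(1-F_R(MID_{\theta s}|\theta))\}$. Since $1-F_R(MID_{\theta s}|\theta)\ge 0$, multiplication by this nonnegative factor commutes with the maximum, so the two expressions coincide whenever $F_R(MID_{\theta s}|\theta)<1$. I expect the only delicate point — and thus the main obstacle, though a minor one — to be the boundary $MID_{\theta s}=1$ together with $\rho_{\theta s}=\rho_s$, where the corollary's ratio is undefined; there the separately stated convention $p_s(\theta)=0$ applies, consistent with the fact that a type sure to qualify at a preferred school can never be seated at $s$. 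The boundary is therefore resolved by the stated convention rather than by any new argument, completing the specialization.
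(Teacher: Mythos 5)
Your proposal is correct and matches the paper's own treatment: the paper states this corollary as a direct specialization of Proposition \ref{theorem:multi} to a single tie-breaker, ``omitting product terms over $v$,'' which is exactly your branch-by-branch reduction. Your extra care in reconciling $\max\{0,F_R(\tau_s|\theta)-F_R(MID_{\theta s}|\theta)\}$ with the ratio form via nonnegativity of $1-F_R(MID_{\theta s}|\theta)$, and in invoking the stated convention at $MID_{\theta s}=1$, is the only detail the paper leaves implicit.
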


Common lottery tie-breaking for all schools further simplifies the DA propensity score.  When $v(s)=1$ for all $s$, $F_R(MID_{\theta s})=MID_{\theta s}$ and $F_R(\tau_s|\theta)=\tau_s$, as in the Denver match analyzed by \cite{mdrd1:17}.
In this case, the DA propensity score is a function only of $MID_{\theta s}$ and the classification of applicants into being never, always, and conditionally seated.   This contrasts with the scores in Propositions \ref{proposition:SingleR} and \ref{theorem:multi}, which depend on the unknown and unrestricted conditional distributions of tie-breakers given type ($F_R(\tau_{s}| \theta)$ and $F_R(MID_{\theta s}| \theta)$ with a single tie-breaker; $F_v(\tau_{s}| \theta)$ and $F_v(MID_{\theta s}| \theta)$ with general tie-breakers).  We therefore turn again to the local propensity score to isolate assignment variation that is independent of type and potential outcomes.

\subsubsection*{Proof of Proposition \ref{theorem:multi}} 

We prove Proposition \ref{theorem:multi} using a strategy to that used in the proof of Theorem 1 in \cite{mdrd1:17}.
Note first that admissions cutoffs $\mathbf{\xi}$ in a large market do not depend on the realized tie-breakers $r_{iv}$'s:
DA in the large market depends on the $r_{iv}$'s only through $G(I_0)$, defined as the fraction of applicants in set $I_0 = \{i\in I  \hspace*{.1cm} | \hspace*{.1cm} \theta_i\in \Theta_0, r_{iv}\leq r_v\text{ for all }v\}$ with various choices of $\Theta_0$ and $r_v$.
In particular, $G(I_0)$ doesn't depend on tie-breaker realizations in the large market.  For the empirical CDF of each tie-breaker conditional on each type, $\hat F_v(\cdot| \theta)$, the
Glivenko-Cantelli theorem for independent but non-identically distributed random variables implies
$\hat F_v(\cdot| \theta)=F_v(\cdot| \theta)$ for any $v$ and $\theta$ \citep{wellner1981glivenko}.
Since cutoffs $\mathbf{\xi}$ are constant, marginal priority $\rho_{s}$ is also constant for every school $s$.

Now, consider the propensity score for school $s.$ First, applicants who don't rank $s$ have $p_{s}(\theta)=0$. 
If $\rho_{\theta s}>\rho_s,$ then $\rho_{\theta s}>\rho_{s}.$ Therefore, $$p_{s}(\theta)=0 \text{ if }\rho_{\theta s}>\rho_s\text{ or $\theta$ does not rank $s$}.$$

Second, if $\rho_{\theta s}\leq\rho_s$, then the type $\theta$ applicant may be
assigned a preferred school $\tilde{s}\in B_{\theta s},$ where $\rho_{\theta \tilde{s}} = \rho_{\tilde{s}}$. 
For each tie-breaker $v$, the proportion of type $\theta$ applicants assigned some $\tilde{s}\in B^v_{\theta s}$ where $\rho_{\theta \tilde{s}} = \rho_{\tilde{s}}$
is $F_v(MID^{v}_{\theta s}|\theta)$. 
This means that for each $v$, the probability of not being assigned any $\tilde{s}\in
B^v_{\theta s}$ where $\rho_{\theta \tilde{s}} = \rho_{\tilde{s}}$ is $1-F_v(MID^{v}_{\theta s}|\theta).$ 
Since tie-breakers are assumed to be distributed independently of one another, the probability of not being assigned any $\tilde{s}\in
B_{\theta s}$ where $\rho_{\theta \tilde{s}} = \rho_{\tilde{s}}$ for a type $\theta$ applicant is $\Pi_v (1-F_v(MID^{v}_{\theta s}|\theta)).$ 
Every
applicant of type $\rho_{\theta s}<\rho_s$ who is not assigned a preferred choice
is assigned $s$ because $\rho_{\theta s}<\rho_{s}.$ So%
\[
p_{s}(\theta)=\Pi_v (1-F_v(MID^{v}_{\theta s}|\theta)) \text{ if }\rho_{\theta s}<\rho_s.
\]

Finally, consider applicants of type $\rho_{\theta s}=\rho_s$ who are not assigned a
choice preferred to $s$. The fraction of applicants $\rho_{\theta s}=\rho_s$ who are
not assigned a preferred choice is $\Pi_v (1-F_v(MID^{v}_{\theta s}|\theta))$. 
Also, the values of the tie-breaking variable $v(s)$ of these applicants are larger than $MID^{v(s)}_{\theta s}$. 
If $\tau
_{s}<MID^{v(s)}_{\theta s},$ then no such applicant is assigned $s.$ If
$\tau_{s}\geq MID^{v(s)}_{\theta s},$ then the fraction of applicants who are
assigned $s$ within this set is given by $\frac{F_{v(s)}(\tau_{s}|\theta)-F_{v(s)}(MID^{v(s)}_{\theta
s}|\theta)}{1-F_{v(s)}(MID^{v(s)}_{\theta s}|\theta)}.$\ Hence, conditional on $\rho_{\theta s}=\rho_s$ and not being assigned a choice higher than $s,$ the probability of
being assigned $s$ is given by $\max\{0, \frac{F_{v(s)}(\tau_{s}|\theta)-F_{v(s)}(MID^{v(s)}_{\theta
s}|\theta)}{1-F_{v(s)}(MID^{v(s)}_{\theta s}|\theta)}\}.$\ Therefore,
\[
p_{s}(\theta)=\prod_{v\neq v(s)}(1-F_v(MID^{v}_{\theta s}|\theta))\times\max\left\{ 0,F_{v(s)}(\tau_s|\theta)-F_{v(s)}(MID^{{v(s)}}_{\theta s}|\theta)\right\}   \text{ if }\rho_{\theta s}=\rho_s.
\]

\subsection{Proof of Theorem \ref{theorem:main}}
\noindent The proof uses lemmas established below. 
The first lemma shows that the vector of DA cutoffs computed for the sampled market, $\hat{\xi}_N$, converges to the vector of cutoffs in the continuum.

\begin{lemma}
\label{CutoffConvergence} (Cutoff almost sure convergence) $\hat{\mathbf{\xi}}_N\overset{a.s.}{\longrightarrow} \mathbf{\xi}$ where $\mathbf{\xi}$ denotes the vector of continuum market cutoffs.
\end{lemma}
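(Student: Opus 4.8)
The plan is to characterize the DA outcome in both the finite and the continuum economy through market-clearing cutoffs, following the supply-and-demand representation of \cite{azevedo/leshno:14}, and then show that the finite cutoffs converge because the empirical demand converges to continuum demand. Concretely, a stable matching is summarized by a cutoff vector $P=(P_1,\dots,P_S)$: given $P$, each applicant demands her most-preferred school among those at which her position clears the cutoff, and $P$ is an equilibrium when the induced demand for each school equals its capacity (with the cutoff taking its slack value when demand falls short). Let $D_s(P)$ denote the mass of continuum applicants demanding $s$ at cutoffs $P$, and let $\hat D_{Ns}(P)$ be the corresponding empirical proportion in the sampled market of size $N$. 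By construction $\xi$ solves the continuum clearing equations $D_s(\xi)=q_s$ (with the slack convention), while $\hat\xi_N$ solves the empirical analog with capacities $\lfloor Nq_s\rfloor/N$.

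First I would establish uniform almost-sure convergence of empirical demand to continuum demand, $\sup_P|\hat D_{Ns}(P)-D_s(P)|\overset{a.s.}{\to}0$ for every $s$. Since the $N$ applicants are sampled independently from the continuum type distribution and, conditional on type, each tie-breaker $R_{iv}$ is drawn independently with CDF $F^i_v$, the event ``applicant $i$ demands $s$ at cutoff $P$'' is a monotone function of finitely many tie-breaker threshold-crossings. The Glivenko--Cantelli theorem for independent, non-identically distributed variables (already invoked in the proof of Proposition \ref{theorem:multi} via \citealp{wellner1981glivenko}) then delivers uniform convergence of the relevant empirical CDFs, and hence of $\hat D_{Ns}$, to their continuum counterparts. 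I would also note the deterministic facts $\lfloor Nq_s\rfloor/N\to q_s$.

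The remaining step is to transfer convergence of demand into convergence of the clearing cutoffs. Under Assumptions \ref{assumption_smooth}--\ref{assumption_types}, the continuum economy is regular in the sense of \cite{azevedo/leshno:14}: it admits a \emph{unique} stable matching, and continuous differentiability with positive density of tie-breakers (Assumption \ref{assumption_smooth}) makes $D_s(P)$ continuous and strictly responsive to its own cutoff in a neighborhood of $\xi$. These properties make the equilibrium cutoff a continuous functional of the pair (demand function, capacity vector) at the continuum solution, so that uniform convergence of $\hat D_N$ to $D$ together with $\lfloor Nq_s\rfloor/N\to q_s$ forces $\hat\xi_N\overset{a.s.}{\to}\xi$. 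I expect the continuity of this cutoff map to be the main obstacle: one must rule out that an arbitrarily small perturbation of demand induces a jump in the clearing cutoffs, which is exactly where uniqueness of the continuum stable matching and the strict local monotonicity of demand (guaranteed by the positive-density assumption, with Assumption \ref{assumption_types} ensuring marginal priority groups are non-degenerate) do the work. Care is also needed at schools with slack, where the equilibrium cutoff sits at its non-binding value $K+1$; there I would argue separately that undersubscription in the continuum implies, for all large $N$, undersubscription in the sampled market, so the corresponding finite cutoff converges trivially.
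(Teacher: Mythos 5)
Your proposal is correct and follows essentially the same route as the paper's proof: the paper only sketches this lemma, stating that the argument is analogous to Lemma 3 of \cite{mdrd1:17} (which rests on the \cite{azevedo/leshno:14} cutoff characterization and the continuity of the market-clearing cutoff map at continuum economies with a unique stable matching), with the two new ingredients being exactly the ones you deploy --- the Glivenko--Cantelli theorem for independent, non-identically distributed tie-breakers \citep{wellner1981glivenko} and the continuous differentiability with positive density of the non-lottery tie-breaker distributions. No substantive differences to report.
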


\noindent This result implies that the estimated score converges to the large-market local score as market size grows and bandwidth shrinks.  

 
 \begin{lemma}
\label{estimatedPScoreConvergence} (Estimated local propensity score almost sure convergence)
For all $\theta\in\Theta, s\in S,$ and $T\in \{a, c, n\}^{S}$, we have $\hat \psi_{s}(\theta, T(\delta_N))\overset{a.s.}{\longrightarrow}\psi_{s}(\theta, T)$ as $N\rightarrow \infty$ and $\delta_N\rightarrow 0$.
\end{lemma}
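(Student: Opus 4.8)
The plan is to carry out the argument on the probability-one event supplied by Lemma \ref{CutoffConvergence}, on which $\hat\xi_{Ns}\to\xi_s$ for every one of the finitely many schools $s$. On this event I would show that each quantity feeding the formula of Theorem \ref{theorem:local} converges to its continuum counterpart, and then reassemble these convergences by the algebra of almost sure limits. Since $\theta$, $s$, and the target classification $T$ are fixed here, no uniformity is needed at this stage; I only need convergence for a single $(\theta,s,T)$.

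First I would treat the ``smooth'' ingredients. At a binding cutoff, $\xi_s=\rho_s+\tau_s$ with $\tau_s\in(0,1)$ is non-integer, so $\mathrm{int}(\cdot)$ is continuous there and the estimated marginal priority $\hat\rho_s=\mathrm{int}(\hat\xi_{Ns})$ equals $\rho_s$ for all large $N$ (slack schools, $\tau_s=1$, are handled by the $\xi_s=K+1$ convention). Hence $\hat\tau_s=\hat\xi_{Ns}-\hat\rho_s\to\tau_s$, and the priority comparisons $\hat\rho_{\theta b}$ versus $\hat\rho_b$ are eventually exact, so the branch defining $\hat{MID}^v_{\theta s}$ (value $0$, value $1$, or a maximum) matches the continuum branch for large $N$; in the maximum branch, continuity of $\max$ over the finite set $B^v_{\theta s}$ gives $\hat{MID}^v_{\theta s}\to MID^v_{\theta s}$. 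This delivers $\hat\lambda_s=\prod_{v\le U}(1-\hat{MID}^v_{\theta s})\to\lambda_s(\theta)$, and for a lottery school $s$ the conditionally-seated factor $\max\{0,(\hat\tau_s-\hat{MID}^{v(s)}_{\theta s})/(1-\hat{MID}^{v(s)}_{\theta s})\}$ converges by continuity at the limit point $MID^{v(s)}_{\theta s}<1$; the excluded degenerate case $MID^{v(s)}_{\theta s}=1$ falls under the branch $t_b=a$, which returns a score of zero. The screened-school factor is the constant $0.5$.

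The delicate step, which I expect to be the main obstacle, is convergence of the integer count $\hat m_s$, and hence of $\hat\sigma_s=0.5^{\hat m_s}$. This count tallies screened tie-breakers $v>U$ for which the school attaining $MID^v_{\theta s}$ is conditionally seated, and it hinges on the exact equality $\hat{MID}^v_{\theta s}=\hat\tau_b$. Here Assumption \ref{assumption_differentiability} is essential: distinct tie-breaker cutoffs force the maximum defining $MID^v_{\theta s}$ to be attained at a unique school $b^\ast$ with a strict gap to the runner-up, so once every $\hat\tau_b$ lies within half that gap of $\tau_b$ the estimated argmax is $b^\ast$ and $\hat{MID}^v_{\theta s}=\hat\tau_{b^\ast}$ holds by construction. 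The maintained hypothesis $T(\delta_N)\to T$ then makes $\hat t_{b^\ast}(\delta_N)=c$ agree with $t_{b^\ast}=c$ for large $N$, so the membership of each $v$ in the count stabilizes; being integer-valued, $\hat m_s$ equals $m_s(\theta,T)$ eventually, giving $\hat\sigma_s\to\sigma_s(\theta,T)$.

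It then remains to combine the pieces. The hypothesis $T(\delta_N)\to T$ also renders the top-level trichotomy correct for large $N$: the zero-score cases ($t_s=n$, or $t_b=a$ for some $b\in B_{\theta s}$) hold in the finite market exactly when they hold in the continuum, and otherwise $\hat\psi_s$ and $\psi_s$ use the same line of the Theorem \ref{theorem:local} formula. On each branch $\hat\psi_s$ is a finite product of factors shown above to converge almost surely, so the product rule for almost sure limits yields $\hat\psi_s(\theta,T(\delta_N))\to\psi_s(\theta,T)$ almost surely, as claimed.
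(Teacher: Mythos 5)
Your proposal is correct and follows essentially the same route as the paper: both rest on Lemma \ref{CutoffConvergence} and then pass the almost-sure convergence of the sampled cutoffs through the formula of Theorem \ref{theorem:local}. The paper compresses this into a one-line appeal to the continuous mapping theorem for an ``almost everywhere continuous'' map of $\hat{\mathbf{\xi}}_N$ and $\delta_N$, whereas you explicitly verify that the continuum cutoff vector is a continuity point of that map --- checking the integer part defining marginal priority, the argmax defining $\hat{MID}^v_{\theta s}$, and the stabilization of the integer count $\hat m_s$ under Assumption \ref{assumption_differentiability} --- which is precisely the verification the paper's terser argument leaves implicit.
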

 
The next lemma shows that the true finite market score with a fixed bandwidth, defined as $\psi_{Ns}(\theta, T; \delta_N) \equiv E_N[D_i(s)| \theta_i=\theta, T_i(\delta_N)=T]$, also converges to $\psi_{s}(\theta, T)$ as market size grows and bandwidth shrinks. 

\begin{lemma}
\label{PScoreConvergence} (Bandwidth-specific propensity score almost sure convergence)
For all $\theta\in\Theta, s\in S,$ $T\in \{a, c, n\}^{S}$,  and $\delta_N$ such that $\delta_N\rightarrow 0$ and $N\delta_N\rightarrow \infty$ as $N\rightarrow\infty$, we have $\psi_{Ns}(\theta, T; \delta_N)\overset{p}{\longrightarrow}\psi_{s}(\theta, T)$ as $N\rightarrow \infty$.
\end{lemma}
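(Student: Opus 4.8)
The plan is to bridge the finite-market, fixed-bandwidth score $\psi_{Ns}(\theta, T; \delta_N)$ and the continuum, zero-bandwidth score $\psi_s(\theta, T)$ through the \emph{continuum}, fixed-bandwidth score of Lemma \ref{pscoreContinuum}, which I write $\psi_s(\theta, T, \delta_N)$ (it does not depend on the characteristic $w$). I would first observe that, given the $N$ sampled types, $\psi_{Ns}(\theta, T; \delta_N)$ is an exact expectation over tie-breaker draws, so the only randomness to control is that of the type sample. The triangle inequality gives
\[
|\psi_{Ns}(\theta, T; \delta_N) - \psi_s(\theta, T)| \le |\psi_{Ns}(\theta, T; \delta_N) - \psi_s(\theta, T, \delta_N)| + |\psi_s(\theta, T, \delta_N) - \psi_s(\theta, T)|,
\]
where the second, deterministic term tends to zero as $\delta_N \to 0$ by Lemma \ref{lemma2} and the L'Hôpital step in the proof of Theorem \ref{theorem:local} (continuous differentiability of the conditional tie-breaker CDFs, Assumption \ref{assumption_smooth}). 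The task therefore reduces to showing the first term—finite versus continuum at a common, shrinking bandwidth—vanishes in probability over the type sample.

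For the first term I would exploit the common multiplicative structure of the two scores given by Lemma \ref{pscoreContinuum}: a product of ``failure-to-qualify-at-preferred-schools'' factors, taken separately over lottery and screened tie-breakers, times a qualification factor at $s$. I would then show each finite-market factor converges to its continuum counterpart. The cutoff ingredients $\hat\tau_s$, $\hat{MID}^{v}_{\theta s}$, the marginal priorities, and the induced classification of each school into $\{n, a, c\}$ converge to their continuum values by Lemma \ref{CutoffConvergence}; distinctness of cutoffs (Assumption \ref{assumption_differentiability}) ensures these comparisons are not on a knife edge, so the classification stabilizes with probability approaching one. The conditional tie-breaker distributions entering the factors are empirical type-averages of the individual CDFs $F^i_v$ and converge to $F_v(\cdot\mid\cdot)$ by the Glivenko-Cantelli theorem for independent, non-identically distributed variables already used in the proof of Proposition \ref{theorem:multi}. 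For lottery tie-breakers every relevant rate is exact because the distribution is uniform, so only cutoff convergence is needed there; the entire bandwidth subtlety is confined to the within-window qualification rates at screened schools.

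That within-window rate is the step I expect to be the main obstacle: in the continuum it tends to $0.5$, and the finite-market analog must reproduce it over a window whose half-width $\delta_N$ is itself shrinking. This is exactly where $N\delta_N \to \infty$ enters, controlling a finite-population effect absent from the continuum. A given applicant $i$ may be \emph{pivotal} at a screened school, i.e.\ may set the realized cutoff $\hat\tau_s$ equal to her own position; when this happens $i$ is assigned $s$ and automatically has $t_{is}(\delta_N)=c$, because the window $(\hat\tau_s - \delta_N, \hat\tau_s + \delta_N]$ is then centered at her own position. This event biases the conditional assignment rate upward. Since positions near a cutoff are spaced $O(1/N)$ apart, pivotality has probability $O(1/N)$, whereas the genuine marginal contribution to $\{t_{is}(\delta_N)=c\}$ from non-pivotal applicants is of order $\delta_N$; under $N\delta_N \to \infty$ the former is negligible relative to the latter, and, using positivity and continuity of the conditional density at the cutoff (Assumption \ref{assumption_smooth}), the within-window rate converges to $1/2$. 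I would make this precise by conditioning on the tie-breakers of all applicants other than $i$, computing the exact conditional probabilities of the three cases (non-pivotal and qualified, pivotal, non-pivotal and disqualified), and taking expectations. Assembling the factor-by-factor limits through the continuous mapping theorem—the product and the $\max\{0,\cdot\}$ operations being continuous away from the coincidences excluded by Assumption \ref{assumption_differentiability}—gives $|\psi_{Ns}(\theta, T; \delta_N) - \psi_s(\theta, T, \delta_N)| \overset{p}{\to} 0$. Combined with the deterministic bandwidth limit this proves the claim for each $(\theta, s, T)$, and since there are finitely many schools—and hence finitely many types $\theta$ and configurations $T$—nothing further is required.
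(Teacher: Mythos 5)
Your proposal is correct in outline and shares the paper's skeleton --- both arguments dispose of the $T=a$ and $T=n$ cases trivially, reduce the $T=c$ case to the within-window qualification rate at a screened cutoff, and split the error into a finite-versus-continuum piece at common bandwidth $\delta_N$ plus a deterministic bandwidth-to-zero piece handled by smoothness (your second term, the paper's passage from Lemma \ref{lemma2} to the limit) --- but you resolve the crux by a genuinely different device. The paper compares the finite-market conditional probability, computed at the random realized cutoff $\tilde{\xi}_{Ns}$, to the continuum one at $\xi_s$ by combining cutoff convergence (Lemma \ref{CutoffConvergence}) with a \emph{local Glivenko--Cantelli} result (Lemma \ref{lemma_local_gc}, built on van der Vaart's Example 19.29 for local empirical processes), which gives uniform convergence of the empirical conditional CDF over the shrinking window at rate $O_p(1/\sqrt{N\delta_N})$; it then converts deterministic convergence along type samples into convergence in probability via the extended continuous mapping theorem. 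You instead condition on the other applicants' tie-breakers and do an explicit order-statistics computation, isolating the \emph{pivotality} event in which $i$ sets the cutoff herself; since adjacent positions near the cutoff are spaced $O_p(1/N)$ apart while the window has mass of order $\delta_N$, the self-selection bias in the numerator is $O(1/(N\delta_N))$ and the ratio tends to $1/2$. Both routes correctly locate where $N\delta_N\rightarrow\infty$ bites, and your computation does go through in the single-school case (which is all the paper treats explicitly before asserting the general case follows ``with additional notation''). What the paper's route buys is a reusable uniform-convergence lemma that cleanly absorbs the randomness of the cutoff and scales to many schools and tie-breakers through the sequential conditioning of Lemma \ref{pscoreContinuum}; what your route buys is elementariness and an explicit accounting of the finite-population distortion that the continuum model hides. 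To make yours fully rigorous you would still need to (i) control the joint randomness of cutoffs across schools when several coordinates of $T_i(\delta_N)$ are conditioned on simultaneously, and (ii) justify that the realized classification events stabilize, which is exactly where Assumption \ref{assumption_differentiability} and Lemma \ref{CutoffConvergence} enter in the paper; but these are matters of bookkeeping rather than missing ideas.
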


Finally, the definitions of $\psi_{Ns}(\theta, T; \delta_N)$ and $\psi_{Ns}(\theta, T)$ imply that  $|\psi_{Ns}(\theta, T; \delta_N)-\psi_{Ns}(\theta, T)|\overset{a.s.}{\longrightarrow}0$ as $\delta_N\rightarrow 0$. 
Combining these results shows that for all $\theta\in\Theta, s\in S,$ and $T$, as $N\rightarrow\infty$ and $\delta_N\rightarrow 0$ with $N\delta_N\rightarrow \infty$, we have
\begin{align*}
&|\hat \psi_{s}(\theta, T(\delta_N))-\psi_{Ns}(\theta, T)| \\
=&|\hat \psi_{s}(\theta, T(\delta_N))-\psi_{Ns}(\theta, T; \delta_N)+\psi_{Ns}(\theta, T; \delta_N)-\psi_{Ns}(\theta, T)|\\
\leq&|\hat \psi_{s}(\theta, T(\delta_N))-\psi_{Ns}(\theta, T; \delta_N)|+|\psi_{Ns}(\theta, T; \delta_N)-\psi_{Ns}(\theta, T)|\\
\overset{p}{\longrightarrow}&|\psi_{s}(\theta, T)-\psi_{s}(\theta, T)|+0\\
=&0.
\end{align*}

\noindent This yields the theorem since $\Theta, S$, and $\{n, c, a\}^S$ are finite. 


\subsubsection*{Proof of Lemma \ref{CutoffConvergence}}

The proof of Lemma \ref{CutoffConvergence} is analogous to the proof of Lemma 3 in \cite{mdrd1:17} and available upon request. 
The main difference is that to deal with multiple non-lottery tie-breakers, the proof of Lemma \ref{CutoffConvergence} needs to invoke the continuous differentiability of $F^i_v(r| e)$ and the Glivenko-Cantelli theorem for independent but non-identically distributed random variables \citep{wellner1981glivenko}. 

\subsubsection*{Proof of Lemma \ref{estimatedPScoreConvergence}}

$\hat \psi_{s}(\theta, T(\delta_N))$ is almost everywhere continuous in finite sample cutoffs $\hat{\mathbf{\xi}}_N$, finite sample MIDs ($MID^v_{\theta s}$), and bandwidth $\delta_N$.
Since every $MID^v_{\theta s}$ is almost everywhere continuous in finite sample cutoffs $\hat{\mathbf{\xi}}_N$, $\hat \psi_{s}(\theta, T(\delta_N))$ is almost everywhere continuous in finite sample cutoffs $\hat{\mathbf{\xi}}_N$ and bandwidth $\delta_N$.
Recall $\delta_N\rightarrow 0$ by assumption while $\hat{\mathbf{\xi}}_N\overset{a.s.}{\longrightarrow} \mathbf{\xi}$ by Lemma \ref{CutoffConvergence}.
Therefore, by the continuous mapping theorem, as $N\rightarrow \infty$, $\hat \psi_{s}(\theta, T(\delta_N))$ almost surely converges to $\hat \psi_{s}(\theta, T(\delta_N))$ with $\mathbf{\xi}$ replacing $\hat{\mathbf{\xi}}_N$, which converges to $\psi_{s}(\theta, T)$ as $\delta_N\rightarrow 0$.

\subsubsection*{Proof of Lemma \ref{PScoreConvergence}}

We use the following fact, which is implied by Example 19.29 in \cite{vdv:00}.

\begin{lemma}\label{lemma_local_gc}
Let $X$ be a random variable distributed according to some CDF $F$ over $[0, 1]$.
Let $F(\cdot | X\in [x-\delta, x+\delta])$ be the conditional version of $F$ conditional on $X$ being in a small window $[x-\delta, x+\delta]$ where $x\in [0, 1]$ and $\delta\in (0, 1]$.
Let $X_1, ..., X_N$ be iid draws from $F$.
Let $\hat F_N$ be the empirical CDF of $X_1, ..., X_N$.
Let $\hat F_N(\cdot | X\in[x-\delta, x+\delta])$ be the conditional version of $\hat F_N$ conditional on a subset of draws falling in $[x-\delta, x+\delta]$, i.e., $\{X_i| i=1, ..., n, X_i \in [x-\delta, x+\delta]\}$.
Suppose $(\delta_N)$ is a sequence with $\delta_N\downarrow 0$ and $\delta_N\times N \rightarrow \infty$.
Then $\hat F_N(\cdot | X\in[x-\delta_N, x+\delta_N])$ uniformly converges to $F(\cdot | X\in [x-\delta_N, x+\delta_N])$, i.e.,
$$\sup_{x'\in [0, 1]} |\hat F_N(x'| X\in[x-\delta_N, x+\delta_N])-F(x'| X\in[x-\delta_N, x+\delta_N])|\rightarrow_{p}0\text{ as }N\rightarrow\infty\text{ and }\delta_N\rightarrow 0.$$
\end{lemma}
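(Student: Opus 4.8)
The plan is to write both conditional CDFs as ratios of unnormalized (empirical and population) measures over the shrinking window, and then to control the numerator and denominator separately. Let $W_N=(x-\delta_N,\,x+\delta_N]$ denote the window, let $p_N=F(x+\delta_N)-F(x-\delta_N)=P(X\in W_N)$ be its population mass, and let $\hat p_N=\tfrac1N\sum_{i=1}^N \mathbf{1}[X_i\in W_N]$ be its empirical mass. For $x'$ in the window set $H_N(x')=F(x')-F(x-\delta_N)$ and $\hat H_N(x')=\tfrac1N\sum_{i=1}^N \mathbf{1}[x-\delta_N<X_i\le x']$, so the population and empirical conditional CDFs are $H_N(x')/p_N$ and $\hat H_N(x')/\hat p_N$ (both equal to $0$ below $W_N$ and $1$ above it). Because the relevant $F$ has a positive density at $x$, we have $p_N\ge c\,\delta_N$ for some $c>0$ and all large $N$, so the hypothesis $N\delta_N\to\infty$ yields $Np_N\to\infty$. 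This quantitative form of ``enough draws land in the window'' is what drives the whole argument.

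First I would dispatch the denominator. Since $\hat p_N$ averages iid Bernoulli$(p_N)$ indicators, $E[\hat p_N]=p_N$ and $\mathrm{Var}(\hat p_N)=p_N(1-p_N)/N\le p_N/N$, hence $\mathrm{Var}(\hat p_N/p_N)\le 1/(Np_N)\to 0$. Chebyshev's inequality then gives $\hat p_N/p_N\to_p 1$, and therefore also $p_N/\hat p_N\to_p 1$.

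The core step is a uniform bound on the numerator fluctuation \emph{relative to} $p_N$. Consider the empirical process indexed by the truncated half-line class $\mathcal{C}_N=\{\mathbf{1}[x-\delta_N<\,\cdot\,\le x']:x'\in\mathbb{R}\}$. This is a VC class whose index is bounded independently of $N$ (half-lines form a VC class), with common envelope $\mathbf{1}[\,\cdot\in W_N]$ satisfying $P(\text{envelope}^2)=p_N$, and with every member having variance $H_N(x')(1-H_N(x'))\le p_N$. The variance-sensitive maximal inequality for VC classes — precisely the computation behind Example 19.29 in \cite{vdv:00} — gives $E\sup_{x'}|\hat H_N(x')-H_N(x')|\lesssim \sqrt{p_N/N}$ (up to a logarithmic factor, which is immaterial here). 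Dividing by $p_N$ and applying Markov's inequality,
\[
\sup_{x'}\frac{|\hat H_N(x')-H_N(x')|}{p_N}=O_p\!\left(\frac{1}{\sqrt{Np_N}}\right)\to_p 0,
\]
again because $Np_N\to\infty$. Assembling the ratio via
\[
\frac{\hat H_N(x')}{\hat p_N}-\frac{H_N(x')}{p_N}=\frac{p_N}{\hat p_N}\cdot\frac{\hat H_N(x')-H_N(x')}{p_N}+\frac{H_N(x')}{p_N}\cdot\frac{p_N-\hat p_N}{\hat p_N},
\]
the first summand is uniformly $o_p(1)$ since $p_N/\hat p_N\to_p 1$ and the normalized numerator fluctuation is uniformly $o_p(1)$, while the second is uniformly $o_p(1)$ since $H_N(x')/p_N\le 1$ pointwise and $(p_N-\hat p_N)/\hat p_N\to_p 0$. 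Taking the supremum over all $x'$ (trivial outside $W_N$, where both CDFs agree at $0$ or $1$) delivers the claimed uniform convergence in probability.

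The main obstacle is the numerator step. An ordinary Glivenko--Cantelli bound gives only $\sup_{x'}|\hat H_N-H_N|\to 0$, but here the target $p_N$ itself vanishes, so I need the \emph{rate} $\sqrt{p_N/N}$ rather than mere consistency, and I need it uniformly over the shrinking window. Obtaining this requires the variance-sensitive maximal inequality for VC classes (the content of the cited van der Vaart example) instead of the crude $O_p(N^{-1/2})$ envelope bound; the payoff is exactly the factor $\sqrt{p_N}$ that, combined with $Np_N\to\infty$, renders the fluctuation negligible relative to the window mass $p_N$.
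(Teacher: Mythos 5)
Your proof is correct and follows essentially the same route as the paper's: both write the conditional CDFs as ratios of window-restricted (empirical and population) measures, obtain the key rate $\sup_{x'}|\hat H_N(x')-H_N(x')|=O_p(\sqrt{p_N/N})$ from the local empirical process computation of Example 19.29 in van der Vaart, and use positive density plus $N\delta_N\to\infty$ to make this negligible relative to the window mass. The only cosmetic differences are your choice of algebraic identity for the difference of ratios and your use of Chebyshev for the denominator, where the paper folds that term into the same Gaussian-process bound.
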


\begin{proof}[Proof of Lemma \ref{lemma_local_gc}]
We first prove the statement for $x\in(0,1)$.
Let $P$ be the probability measure of $X$ and $\hat P_N$ be the empirical measure of $X_1,...,X_N$.
Note that
\begin{align*}
	&\sup_{x'\in [0, 1]} |\hat F_N(x'| X\in[x-\delta_N, x+\delta_N])-F(x'| X\in[x-\delta_N, x+\delta_N])|\\
	=&\sup_{t\in [-1, 1]} |\hat F_N(x+t\delta_N| X\in[x-\delta_N, x+\delta_N])-F(x+t\delta_N| X\in[x-\delta_N, x+\delta_N])|\\
	=&\sup_{t\in [-1, 1]} |\frac{\hat P_N[x-\delta_N,x+t\delta_N]}{\hat P_N[x-\delta_N, x+\delta_N]}-\frac{P_{X}[x-\delta_N,x+t\delta_N]}{P_{X}[x-\delta_N, x+\delta_N]}|\\
	=&\frac{1}{\hat P_N[x-\delta_N, x+\delta_N]P_{X}[x-\delta_N, x+\delta_N]}\\
	&\times\sup_{t\in [-1, 1]} |\hat P_N[x-\delta_N,x+t\delta_N]P_{X}[x-\delta_N,x+\delta_N]-\hat P_N[x-\delta_N,x+\delta_N]P_{X}[x-\delta_N,x+t\delta_N]|\\
	=&\frac{1}{\hat P_N[x-\delta_N, x+\delta_N]P_{X}[x-\delta_N, x+\delta_N]}\\
	&\times\sup_{t\in [-1, 1]} |\hat P_N[x-\delta_N,x+t\delta_N] (P_{X}[x-\delta_N,x+\delta_N]-\hat P_N[x-\delta_N,x+\delta_N])\\
	&~~~~+\hat P_N[x-\delta_N,x+\delta_N](\hat P_N[x-\delta_N,x+t\delta_N]-P_{X}[x-\delta_N,x+t\delta_N])|\\
	\le &\frac{1}{\hat P_N[x-\delta_N, x+\delta_N]P_{X}[x-\delta_N, x+\delta_N]}\\
	&\times\{\sup_{t\in [-1, 1]} \hat P_N[x-\delta_N,x+t\delta_N] |\hat P_N[x-\delta_N,x+\delta_N]-P_{X}[x-\delta_N,x+\delta_N]|\\
	&~~~~+\sup_{t\in [-1, 1]}\hat P_N[x-\delta_N,x+\delta_N]|\hat P_N[x-\delta_N,x+t\delta_N]-P_{X}[x-\delta_N,x+t\delta_N]|\} \\
	&= \frac{1}{P_{X}[x-\delta_N, x+\delta_N]} \times\{|\hat P_N[x-\delta_N,x+\delta_N]-P_{X}[x-\delta_N,x+\delta_N]| \\ &+\sup_{t\in [-1, 1]} |\hat P_N[x-\delta_N,x+t\delta_N]-P_{X}[x-\delta_N,x+t\delta_N]|\}\\
	=&\frac{A_N}{P_{X}[x-\delta_N, x+\delta_N]},
\end{align*}
where $$A_N=|\hat P_N[x-\delta_N,x+\delta_N]-P_{X}[x-\delta_N,x+\delta_N]|+\sup_{t\in [-1, 1]} |\hat P_N[x-\delta_N,x+t\delta_N]-P_{X}[x-\delta_N,x+t\delta_N]|.$$ The above inequality holds by the triangle inequality and the second last equality holds because $\sup_{t\in [-1, 1]}\hat P_N[x-\delta_N,x+t\delta_N]=\hat P_N[x-\delta_N,x+\delta_N]$.

We show that $A_N/P_{X}[x-\delta_N, x+\delta_N]\stackrel{p}{\longrightarrow} 0$. Example 19.29 in \cite{vdv:00} implies that the sequence of processes $\{\sqrt{n/\delta_N}(\hat P_N[x-\delta_N,x+t\delta_N]-P_{X}[x-\delta_N,x+t\delta_N]):t\in [-1,1]\}
$ converges in distribution to a Gaussian process in the space of bounded functions on $[-1,1]$ as $N\rightarrow\infty$. We denote this Gaussian process by $\{\mathbb{G}_t:t\in [-1,1]\}$.
We then use the continuous mapping theorem to obtain
$$
\sqrt{n/\delta_N}A_N \stackrel{d}{\longrightarrow} |\mathbb{G}_1|+\sup_{t\in [-1, 1]} |\mathbb{G}_t|
$$
as $N\rightarrow \infty$.
Since $\{\mathbb{G}_t:t\in [-1,1]\}$ has bounded sample paths, it follows that $|\mathbb{G}_1|<\infty$ and $\sup_{t\in [-1, 1]} |\mathbb{G}_t|<\infty$ for sure.
By the continuous mapping theorem, under the condition that $N\delta_N\rightarrow \infty$, 
\begin{align*}
	(1/\delta_N) A_N &= (1/\sqrt{N\delta_N})\times \sqrt{n/\delta_N}A_N\\
	&\stackrel{d}{\longrightarrow} 0\times (|\mathbb{G}_1|+\sup_{t\in [-1, 1]} |\mathbb{G}_t|)\\
	&=0.
\end{align*}
This implies that $(1/\delta_N) A_N\stackrel{p}{\longrightarrow} 0$, because for any $\epsilon>0$,
\begin{align*}
	\Pr(|(1/\delta_N)A_N|>\epsilon)&=\Pr((1/\delta_N)A_N<-\epsilon)+\Pr((1/\delta_N)A_N>\epsilon)\\
	&\le\Pr((1/\delta_N)A_N\le-\epsilon)+1-\Pr((1/\delta_N)A_N\le\epsilon)\\
	&\rightarrow \Pr(0\le-\epsilon)+1-\Pr(0\le\epsilon)\\
	&=0,
\end{align*}
where the convergence holds since $(1/\delta_N) A_N\stackrel{d}{\longrightarrow} 0$.
To show that $A_N/P_{X}[x-\delta_N, x+\delta_N]\stackrel{p}{\longrightarrow} 0$, it is therefore enough to show that $\lim_{N\rightarrow \infty}(1/\delta_N)P_{X}[x-\delta_N, x+\delta_N]>0$.
We have
\begin{align*}
	(1/\delta_N)P_{X}[x-\delta_N, x+\delta_N]
	&= (1/\delta_N)(F_{X}(x+\delta_N)-F_{X}(x-\delta_N))\\
	&= (1/\delta_N)(2f(x)\delta_N+o(\delta_N))\\
	&= 2f(x)+o(1)\\
	&\rightarrow 2f(x)\\
	&>0,
\end{align*}
where we use Taylor's theorem for the second equality and the assumption of $f(x)>0$ for the last inequality.

We next prove the statement for $x=0$. Note that
\begin{align*}
	&\sup_{x'\in [0, 1]} |\hat F_N(x'| X\in[-\delta_N, \delta_N])-F(x'| X\in[-\delta_N, \delta_N])|\\
	=&\sup_{t\in [0, 1]} |\hat F_N(t\delta_N| X\in[0, \delta_N])-F(t\delta_N| X\in[0, \delta_N])|\\
	=&\sup_{t\in [0, 1]} |\frac{\hat F_N(t\delta_N)}{\hat F_N(\delta_N)}-\frac{F_{X}(t\delta_N)}{F_{X}(\delta_N)}|\\
	=&\frac{1}{\hat F_N(\delta_N)F_{X}(\delta_N)}\sup_{t\in [0, 1]} |\hat F_N(t\delta_N)F_{X}(\delta_N)-\hat F_N(\delta_N)F_{X}(t\delta_N)|\\
	=&\frac{1}{\hat F_N(\delta_N)F_{X}(\delta_N)}\sup_{t\in [0, 1]} |\hat F_N(t\delta_N) (F_{X}(\delta_N)-\hat F_N(\delta_N))+\hat F_N(\delta_N)(\hat F_N(t\delta_N)-F_{X}(t\delta_N))|\\
	\le &\frac{1}{\hat F_N(\delta_N)F_{X}(\delta_N)}\{\sup_{t\in [0, 1]} \hat F_N(t\delta_N) |\hat F_N(\delta_N)-F_{X}(\delta_N)|+\sup_{t\in [0, 1]}\hat F_N(\delta_N)|\hat F_N(t\delta_N)-F_{X}(t\delta_N)|\}\\
	=&\frac{1}{F_{X}(\delta_N)}\{|\hat F_N(\delta_N)-F_{X}(\delta_N)|+\sup_{t\in [0, 1]} |\hat F_N(t\delta_N)-F_{X}(t\delta_N)|\}
	=\frac{A_N^0}{F_{X}(\delta_N)},
\end{align*}
where $A_N^0=|\hat F_N(\delta_N)-F_{X}(\delta_N)|+\sup_{t\in [0, 1]}|\hat F_N(t\delta_N)-F_{X}(t\delta_N)|$.
By the argument used in the above proof for $x\in (0,1)$, we have $(1/\delta_N) A_N^0\stackrel{p}{\longrightarrow} 0$. It also follows that
\begin{align*}
	(1/\delta_N)F_{X}(\delta_N)
	&= (1/\delta_N)(f(0)\delta_N+o(\delta_N))\\
	&= f(0)+o(1)\\
	&\rightarrow f(0)\\
	&>0.
\end{align*}
Thus, $\frac{A_N^0}{F_{X}(\delta_N)}\stackrel{p}{\longrightarrow} 0$, and hence $\sup_{x'\in [0, 1]} |\hat F_N(x'| X\in[-\delta_N, \delta_N])-F(x'| X\in[-\delta_N, \delta_N])|\stackrel{p}{\longrightarrow} 0$.
The proof for $x=1$ follows from the same argument.
\end{proof}
\noindent

Consider any deterministic sequence of economies $\{g_N\}$ such that $g_N\in \mathcal{G}$ for all $N$ and
$g_N\rightarrow G$ in the $(\mathcal{G},d)$ metric space. 
Let $(\delta_N)$ be an associated sequence of positive numbers (bandwidths) such that $\delta_N\rightarrow 0$ and $N\delta_N\rightarrow \infty$ as $N\rightarrow\infty$. 
Let $\psi_{Ns}(\theta, T; \delta_N)\equiv E_N[D_i(s)| \theta_i=\theta, T_i(\delta_N)=T]$ be the (finite-market, deterministic) bandwidth-specific propensity score for particular $g_N$ and $\delta_N$.

For Lemma \ref{PScoreConvergence}, it is enough to show deterministic convergence of this finite-market score, that is, $\psi_{Ns}(\theta, T; \delta_N)\rightarrow \psi_{s}(\theta, T)$ as $g_N\rightarrow G$ and $\delta_N\rightarrow 0$.
To see this, let $G_N$ be the distribution over $I(\Theta_{0}, r_0, r_{1})$'s induced by randomly drawing $N$ applicants from $G$, where $I(\Theta_{0}, r_0, r_{1})\equiv\{i| \theta_i\in \Theta_{0}, r_0<r_i\leq r_1\}$. 
Note that $G_N$ is random and that $G_N\overset{a.s.}{\rightarrow}G$ by \cite{wellner1981glivenko}'s Glivenko-Cantelli theorem for independent but non-identically distributed random variables.
$G_N\overset{p}{\rightarrow}G$ and $\psi_{Ns}(\theta, T; \delta_N)\rightarrow \psi_{s}(\theta, T)$ allow us to apply the Extended Continuous Mapping Theorem (Theorem 18.11 in \cite{vdv:00}) to obtain $\tilde{\psi}_{Ns}(\theta, T; \delta_N)\overset{p}{\longrightarrow}\psi_{s}(\theta, T)$ where $\tilde{\psi}_{Ns}(\theta, T; \delta_N)$ is the random version of $\psi_{Ns}(\theta, T; \delta_N)$ defined for $G_N$.

For notational simplicity, consider the single-school RD case, where there is only one school $s$ making assignments based on a single non-lottery tie-breaker $v(s)$ (without using any priority). 
A similar argument with additional notation shows the result for DA with general tie-breaking. 

For any $\delta_N>0,$ whenever $T_i(\delta_N)=a$, it is the case that $D_i(s)=1$. 
As a result, 
$$\psi_{Ns}(\theta, a; \delta_N) \equiv E_N[D_i(s)| \theta_i=\theta, T_i(\delta_N)=a]=1\equiv\psi_{s}(\theta, a).$$ 

\noindent Therefore, $\psi_{Ns}(\theta, a; \delta_N)\rightarrow\psi_{s}(\theta, a)$ as $N\rightarrow \infty$.
Similarly, for any $\delta_N>0,$ whenever $T_i(\delta_N)=n$, it is the case that $D_i(s)=0$. 
As a result, 
$$\psi_{Ns}(\theta, n; \delta_N) \equiv E_N[D_i(s)| \theta_i=\theta, T_i(\delta_N)=n]=0\equiv\psi_{s}(\theta, n).$$ 

\noindent Therefore, $\psi_{Ns}(\theta, n; \delta_N)\rightarrow\psi_{s}(\theta, n)$ as $N\rightarrow \infty$. 
Finally, when $T_i(\delta_N)=c$, let $$F_{N, v(s)}(r| \theta)\equiv \dfrac{\sum^N_{i=1}1\{\theta_i=\theta\}F_{v(s)}^i(r)}{\sum^N_{i=1}1\{\theta_i=\theta\}}$$ be the aggregate tie-breaker distribution conditional on each applicant type $\theta$ in the finite market. 
$\mathbf{\tilde{\xi}}_{Ns}$ denotes the random cutoff at school $s$ in a realized economy $g_N$. 
For any $\epsilon$, there exists $N_0$ such that for any $N>N_0$, we have
\begin{align*}
\psi_{Ns}(\theta, c; \delta_N)&\equiv E_N[D_i(s)| \theta_i=\theta, T_i(\delta_N)=c]\\
&=P_N[R_{iv(s)}\leq \mathbf{\tilde{\xi}}_{Ns}| \theta_i=\theta, R_{iv(s)}\in(\mathbf{\tilde{\xi}}_{Ns}-\delta_N, \mathbf{\tilde{\xi}}_{Ns}+\delta_N]] \\
&\in (P[R_{iv(s)}\leq \mathbf{\xi}_{s}| \theta_i=\theta, R_{iv(s)}\in(\mathbf{\xi}_{s}-\delta_N, \mathbf{\xi}_{s}+\delta_N]]-\epsilon/2, \\
&\ \ \ \ \ P[R_{iv(s)}\leq \mathbf{\xi}_{s}| \theta_i=\theta, R_{iv(s)}\in(\mathbf{\xi}_{s}-\delta_N, \mathbf{\xi}_{s}+\delta_N]]+\epsilon/2),
\end{align*}

\noindent where $\mathbf{\xi}_{s}$ is school $s$'s continuum cutoff, $P$ is the probability induced by the tie-breaker distributions in the continuum economy, and the inclusion is by Assumption \ref{assumption_differentiability} and Lemmata \ref{CutoffConvergence} and \ref{lemma_local_gc}.  
Again for any $\epsilon$, there exists $N_0$ such that for any $N>N_0$, we have
\begin{align*}
&(P[R_{iv(s)}\leq \mathbf{\xi}_{s}| \theta_i=\theta, R_{iv(s)}\in(\mathbf{\xi}_{s}-\delta_N, \mathbf{\xi}_{s}+\delta_N]]-\epsilon/2, \\
&\ P[R_{iv(s)}\leq \mathbf{\xi}_{s}| \theta_i=\theta, R_{iv(s)}\in(\mathbf{\xi}_{s}-\delta_N, \mathbf{\xi}_{s}+\delta_N]]+\epsilon/2)\\
&=(\dfrac{F_{v(s)}(\mathbf{\xi}_{s}| \theta)-F_{v(s)}(\mathbf{\xi}_{s}-\delta_N| \theta)}{F_{v(s)}(\mathbf{\xi}_{s}+\delta_N| \theta)-F_{v(s)}(\mathbf{\xi}_{s}-\delta_N| \theta)}-\epsilon/2, \\
&\ \ \ \ \ \dfrac{F_{v(s)}(\mathbf{\xi}_{s}| \theta)-F_{v(s)}(\mathbf{\xi}_{s}-\delta_N| \theta)}{F_{v(s)}(\mathbf{\xi}_{s}+\delta_N| \theta)-F_{v(s)}(\mathbf{\xi}_{s}-\delta_N| \theta)}+\epsilon/2) \\
&=(\dfrac{\{F_{v(s)}(\mathbf{\xi}_{s}| \theta)-F_{v(s)}(\mathbf{\xi}_{s}-\delta_N| \theta)\}/\delta_N}{\{F_{v(s)}(\mathbf{\xi}_{s}+\delta_N| \theta)-F_{v(s)}(\mathbf{\xi}_{s}| \theta)\}/\delta_N+\{F_{v(s)}(\mathbf{\xi}_{s}| \theta)-F_{v(s)}(\mathbf{\xi}_{s}-\delta_N| \theta)\}/\delta_N}-\epsilon/2, \\
&\ \ \ \ \ \dfrac{\{F_{v(s)}(\mathbf{\xi}_{s}| \theta)-F_{v(s)}(\mathbf{\xi}_{s}-\delta_N| \theta)\}/\delta_N}{\{F_{v(s)}(\mathbf{\xi}_{s}+\delta_N| \theta)-F_{v(s)}(\mathbf{\xi}_{s}| \theta)\}/\delta_N+\{F_{v(s)}(\mathbf{\xi}_{s}| \theta)-F_{v(s)}(\mathbf{\xi}_{s}-\delta_N| \theta)\}/\delta_N}+\epsilon/2) \\
&\in (0.5-\epsilon, 0.5+\epsilon)\\
&=(\psi_{s}(\theta, c)-\epsilon, \psi_{s}(\theta, c)+\epsilon), 
\end{align*}

\noindent completing the proof.

\makeatletter 
\renewcommand{\thetablenums}{\thesection\@arabic\c@tablenums}
\makeatother

\newpage
\section{Empirical Appendix} \setcounter{tablenums}{0}

\subsection{Data}

The NYC DOE provided data on students, schools, the rank-order lists submitted by match participants, school assignments, and outcome variables. Applicants and programs are uniquely identified by a number that can be used to merge data sets. Students with a record in assignment files who cannot be matched to other files are omitted. 

\subsubsection{Applicant Data}

We focus on first-time applicants to the NYC public (unspecialized) high school system who live in NYC and attended a public middle school in eighth grade. The NYC high school match is conducted in three rounds. The data used for the present analyses are from the first assignment round, which uses DA and we refer to as \textit{main round}. Applicants who were not assigned after the main round apply to the remaining seats in a subsequent \textit{supplementary round}. Students who remain unassigned in the supplementary round are then assigned on a case-by-case basis in the final \textit{administrative round}.  \\

\noindent \textbf{Assignment, Priorities, and Ranks}

\noindent Data on the assignment system come from the DOE's enrollment office, and report assignments for our two cohorts. The main application data set details applicant program choices, eligibility, priority group and rank, as well as the admission procedure used at the respective program. Lottery numbers and details on assignments at Educational Option (Ed-Opt) programs are provided in separate data sets. \\

\noindent \textbf{Student Characteristics}

\noindent NYC DOE students files record grade, gender, ethnicity, and whether students attended a public middle school. Separate files include (i) student scores on middle school standardized tests, (ii) English language learner and special education status, and (iii) subsidized lunch status. Our baseline middle school scores are from 6th grade math and English exams. If a student re-took a test, the latest result is used. Our demographic characteristics come from the DOE's snapshot for 8th grade.

\subsubsection{School-level Data}

\noindent \textbf{School Letter Grades}

\noindent School grades are drawn from NYC DOE School Report Cards for 2010/11, 2011/12 and 2012/13. For each application cohort, we grade schools based on the report cards published in the school year prior to the application school year: for the 2011/12 application cohort, for instance, schools are assigned grades published in 2010/11, and similarly for the other two cohorts.\\

\noindent \textbf{School Characteristics}

\noindent School characteristics were taken from report card files provided by the DOE. These data provide information on enrollment statistics, racial composition, attendance rates, suspensions, teacher numbers and experience, and graduating class Regents Math and English performance. A unique identifier for each school allows these data to be merged with data from other sources.  The analyses on teacher experience and education reported in Table \ref{tab:stuchars} of this publication are based on the School-Level Master File 1996-2016, a dataset compiled by the Research Alliance for NYC Schools at New York University' Steinhardt School of Culture, Education, and Human Development (www.ranycs.org). All data in the School-Level Master File are publicly available. The Research Alliance takes no responsibility for potential errors in the dataset or the analysis. The opinions expressed in this publication are those of the authors and do not represent the views of the Research Alliance for NYC Schools or the institutions that posted the original publicly available data.\footnote{Research Alliance for New York City Schools (2017). School-Level Master File 1996-2016 [Data file and code book]. Unpublished data}\\

\noindent \textbf{Defining Screened and Lottery Schools}

\noindent We define lottery schools as any school hosting at least one program for which the lottery number is used as the tie-breaker. Screened schools are the remaining schools. 
Some schools allow students to share a screened tie-breaker rank, breaking screening-variable ties with lottery numbers.  Propensity scores for such schools are computed using the lottery tie breaker and schools are considered lottery in any analysis that makes this substantive distinction. Specialized high schools are considered screened schools. The remaining schools, mostly charters that conduct a separate lottery process, are considered lottery schools.

\subsubsection{SAT and Graduation Outcomes}

 \noindent \textbf{SAT Tests}

 \noindent The NYC DOE has data on SAT scores for test-takers from 2006-17. These data originate with the College Board. We use the first test for multiple takers. For applicants tested in the same month, we use the highest score. During our sample period, the SAT has been redesigned. We re-scale scores of SAT exams taken prior to the reform according to the official re-scaling scheme provided by CollegeBoard.\footnote{See \url{https://collegereadiness.collegeboard.org/educators/higher-ed/scoring/concordance} for the conversion scale.}\\

 \noindent \textbf{Graduation}

 \noindent The DOE Graduation file records the discharge status for public school students  enrolled from 2005-17. Because data on graduation results are not yet available for the youngest (2013/14) cohort, graduation results are for the two older cohorts only.\\

 \noindent \textbf{College- and Career-preparedness and College-readiness}

 \noindent The DOE provided us with individual-level indicators for college- and career-preparedness as well as college-readiness for public school students enrolled from 2005-17. Since these data are not yet available for the youngest (2013/14) cohort, the results are for the two older cohorts only. Table \ref{tabappdx:collegeindex} gives an overview on the criteria for the two indicators.

\begin{figure}[!ht]
 \centering
 \includegraphics[scale=0.8]{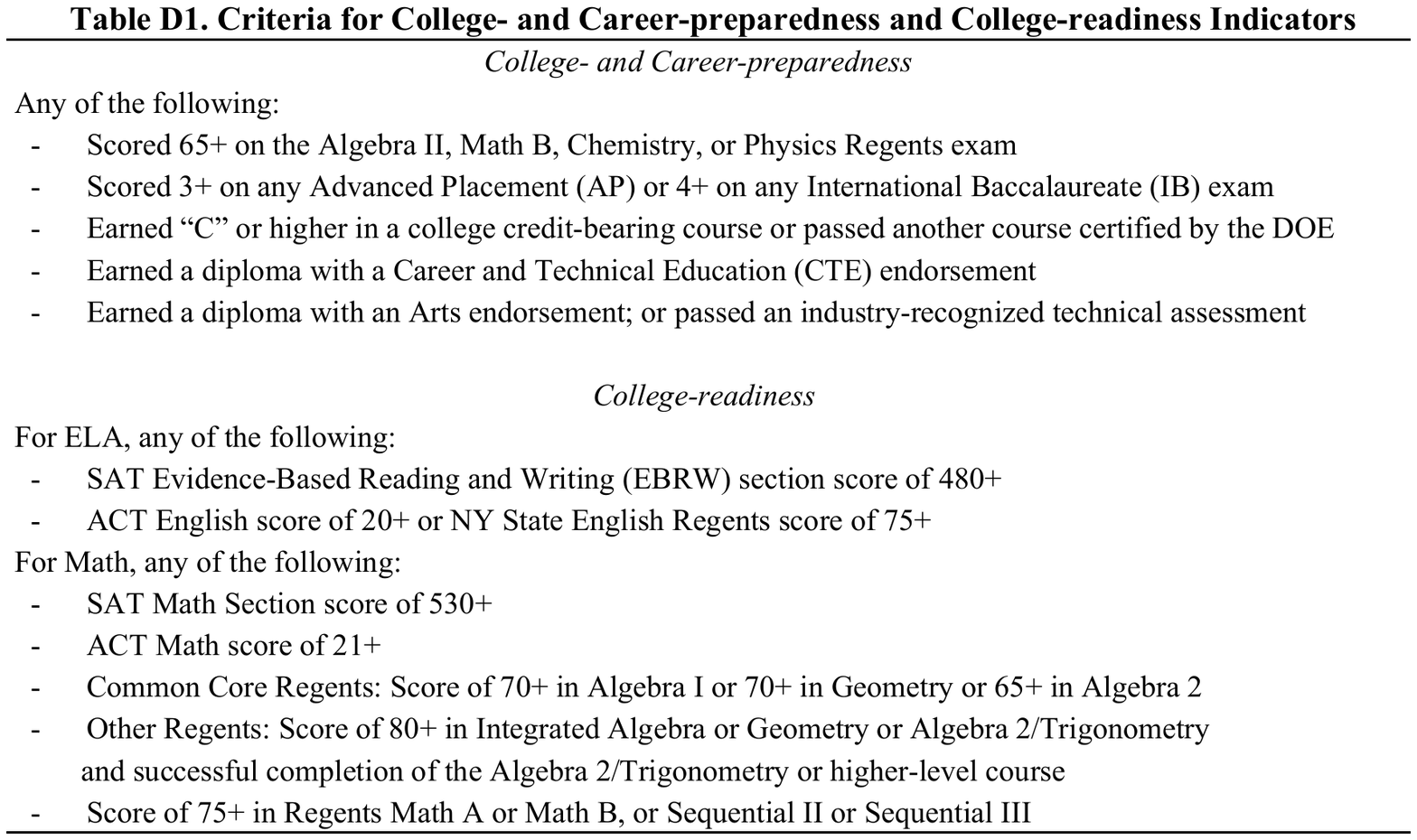} \refstepcounter{tablenums} \label{tabappdx:collegeindex}
\end{figure}

\subsubsection{Replicating the NYC Match}

NYC uses the student-proposing DA algorithm to determine assignments. The three ingredients for this algorithm are: student's ranking of up to 12 programs, program capacities and priorities, and tie-breakers.\\

\noindent \textbf{Program Assignment Rules}

\noindent Programs use a variety of assignment rules. Lottery, Limited Unscreened, and Zoned programs order students first by priority group, and within priority group by lottery number. Screened and Audition programs order students by priority group and then by a non-lottery tie-breaker, referred to as running or rank variable.  We observe these in the form of an an ordering of applicants provided by Screened and Audition programs. Ed-Opt programs use two tie-breakers, which is described into more detail below. Finally, as mentioned above, some schools allow students to share a screened tie-breaker rank, breaking screening-variables ties with lottery numbers.\\

 \noindent \textbf{Program Capacities and Priorities}

\noindent Program capacities must be imputed. We assume program capacity equals the number of assignments extended. Program type determines priorities. The priority group is a number assigned by the NYC DOE depending on addresses, program location, siblings, among other considerations, including, in some cases, whether applicants attended an information session or open house (for Limited Unscreened programs). \\

 \noindent \textbf{Lottery Numbers}

\noindent The lottery numbers are provided by the NYC DOE in a separate data set. Lottery tie-breakers are reported as unique alphanumeric string and scaled to $[0, 1]$. Lottery numbers are missing for some; we assign these applicants a randomly drawn lottery number and use it in our replicated match. It is this replicated match that is used to construct assignment instruments and their associated propensity scores.\\

 \noindent \textbf{Ranks}

\noindent Screened, Audition, and half of the seats at Ed-Opt programs assign students a rank, based on various diverging criteria, such as former test performance. Ranks are reported as an integer reflecting raw tie-breaker order in this group. We scale these so as to lie in $(0, 1]$ by transforming raw tie-breaking realizations $R_{iv}$ into $[R_{iv}-\min_j{R_{jv}}+1]/[\max_jR_{jv}-\min_jR_{jv}+1]$ for each tie-breaker $v$. At some screened programs, the rank numbers of applicants have gaps, i.e. the distribution of running variable values is discontinuous. Potential reasons include i) human error when school principals submit applicant rankings to the NYC DOE, and ii) while running variables are assigned at the program level, applications at Ed-Opt programs are treated as six separate buckets (i.e. distinct application choices), leading to artificial gaps in rank distributions (see discussion of assignment at Ed-Opt programs below).\\

 \noindent \textbf{Assignment at Educational Option programs}

\noindent Ed-Opt programs use two tie-breakers. Applicants are first categorized into high performers, middle performers, and low performers by scores on a seventh grade reading test. Ed-Opt programs aim to have an enrollment distribution of 16\% high performers, 68\% middle performers and 16\% low performers.
Half of Ed-Opt seats are assigned using the lottery tie-breaker. These seats are called ``random.'' The other half uses a rank variable such as those used by other screened programs. These seats are called ``select.''\\
We refer to the resulting six combinations as ``buckets.''  Ed-Opt applicants are treated as applying to all six. A separate data set details which bucket applicants were offered. Buckets have their own priorities and capacities. The latter are imputed based on the observed assignments to buckets. \\
Tables \ref{tabappdx:edoptschoice} and \ref{tabappdx:edoptspriority} show applicants' choice order of and priorities at Ed-Opt buckets, respectively. Both are based on consultations with the NYC DOE and our simulations of the match. \\

 \noindent High performers rank high buckets first, while medium and low performers apply to medium and low buckets first, respectively. 

\begin{figure}[ht]
 \centering
 \includegraphics[scale=0.9]{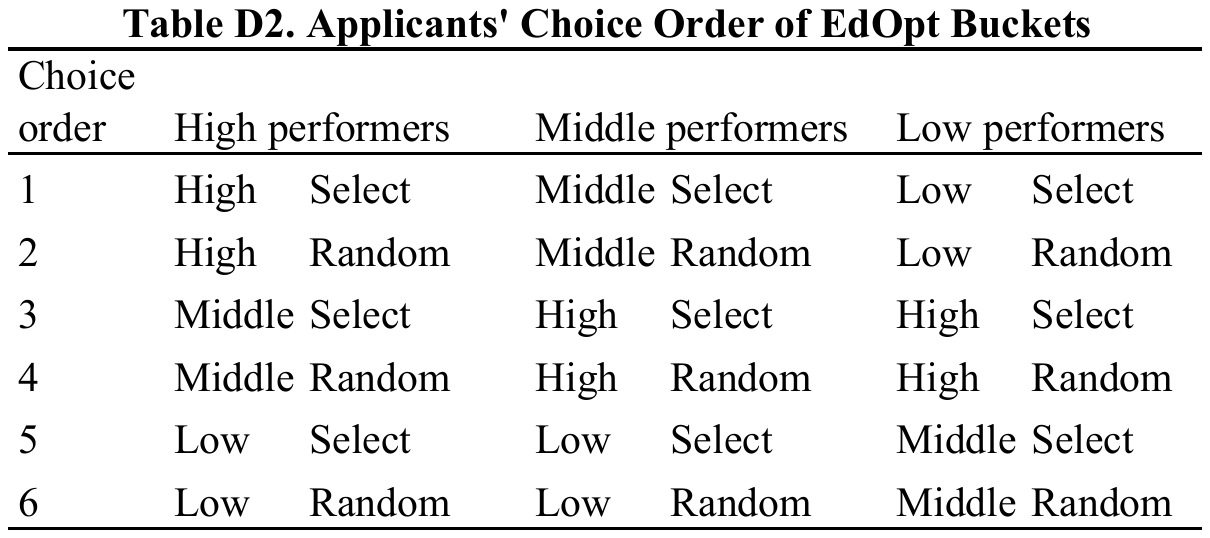} \refstepcounter{tablenums} \label{tabappdx:edoptschoice}
\end{figure}
\noindent 
 High performers have highest priority (priority group 1) at high buckets, while medium and low performers receive highest priority at medium and low buckets, respectively. 

\begin{figure}[H]
 \centering
 \includegraphics[scale=0.9]{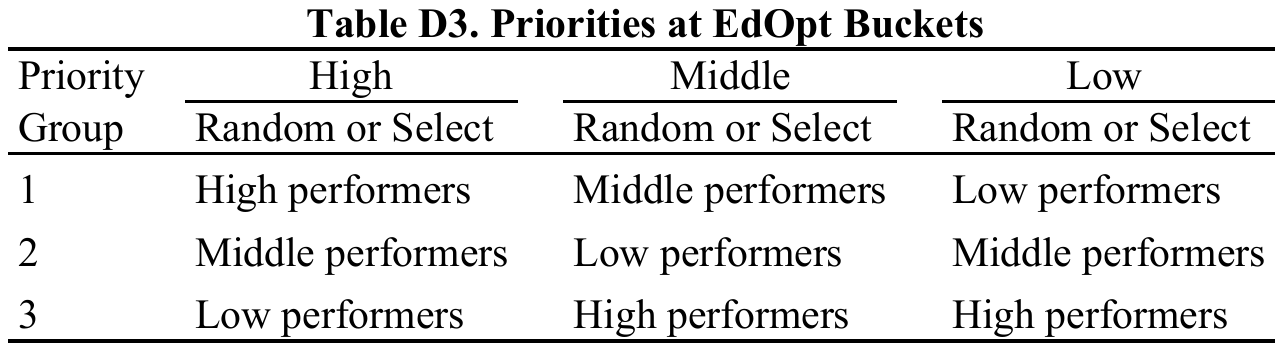} \refstepcounter{tablenums} \label{tabappdx:edoptspriority}
\end{figure}

\noindent \textbf{Miscellaneous Sample Restrictions}
\noindent 
The analysis sample is limited to first-time eighth grade applicants for ninth grade seats. Ineligible applications (as indicated in the main application data set) are dropped. Applicants with special education status compete for a different set of seats and are thus dropped in the analysis.

\noindent Students in the top 2\% of scorers on the seventh grade reading are automatically admitted into any Ed-Opt program they rank first. We gather these assignments in a separate Ed-Opt bucket, thereby leaving the admission process to the other six unaffected.

Table \ref{tabappdx:reprates} records the proportion of applicants for which our match replication was successful.
\linebreak

\begin{center}
\includegraphics[scale=0.9]{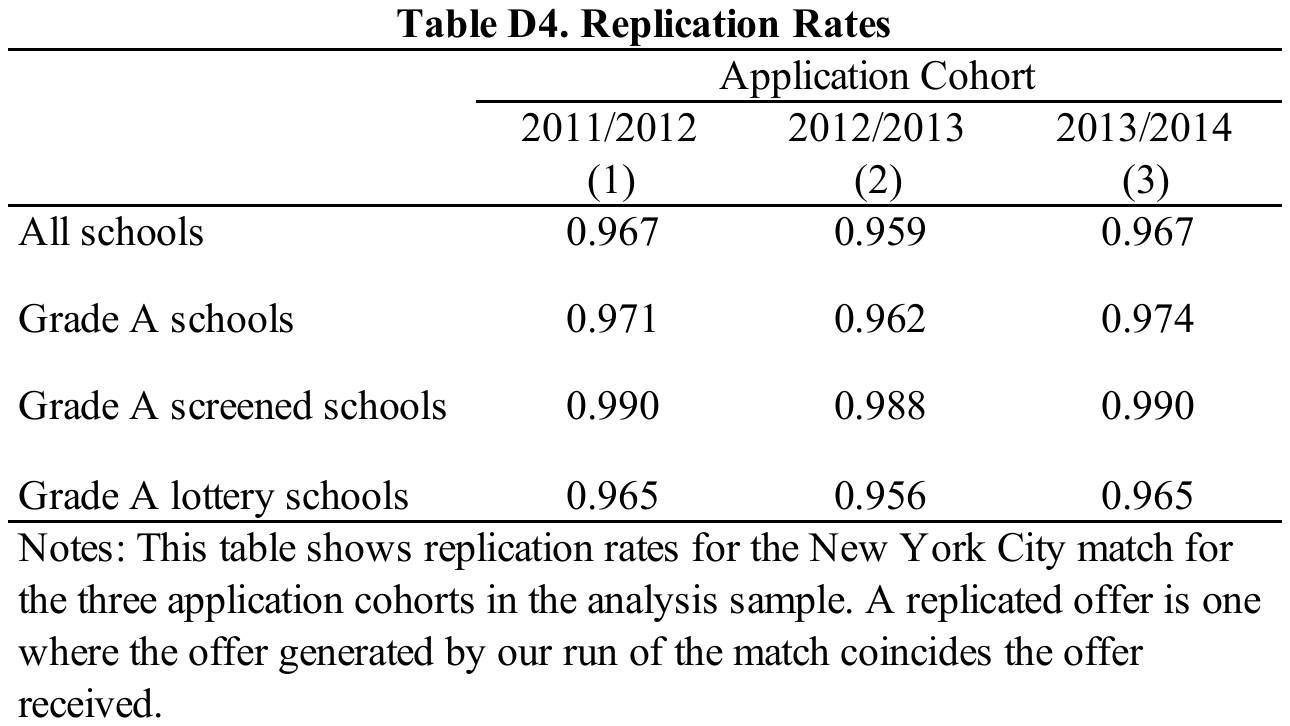} \refstepcounter{tablenums} \label{tabappdx:reprates}
\end{center}

\subsection{Additional Empirical Results}

Grade A risk has a mode at 0.5, but takes on many other values as well. A probability of 0.5 arises when the overall Grade A propensity score is generated by a single Grade A screened school. This can be seen in Figure \ref{figappdx:ptsofsupport}, which tabulates the estimated probability of assignment to a Grade A school for applicants in all cohorts (2012-2014) with a probability strictly between 0 and 1 calculated using the formula in Theorem 1. There are 24,966 students with the estimated assignment probability equal to 1, 86,494 students with the propensity score equal to 0, and 41,647 students with Grade A risk. The propensity score of 0.5 arises when the overall Grade A propensity score is generated by a single Grade A screened school.  

\begin{figure}

	\centering
	\setcounter{figure}{0}
	\caption{Distribution of Grade A Risk}
	 \vspace{0.3cm}
	 \includegraphics[scale=.34]{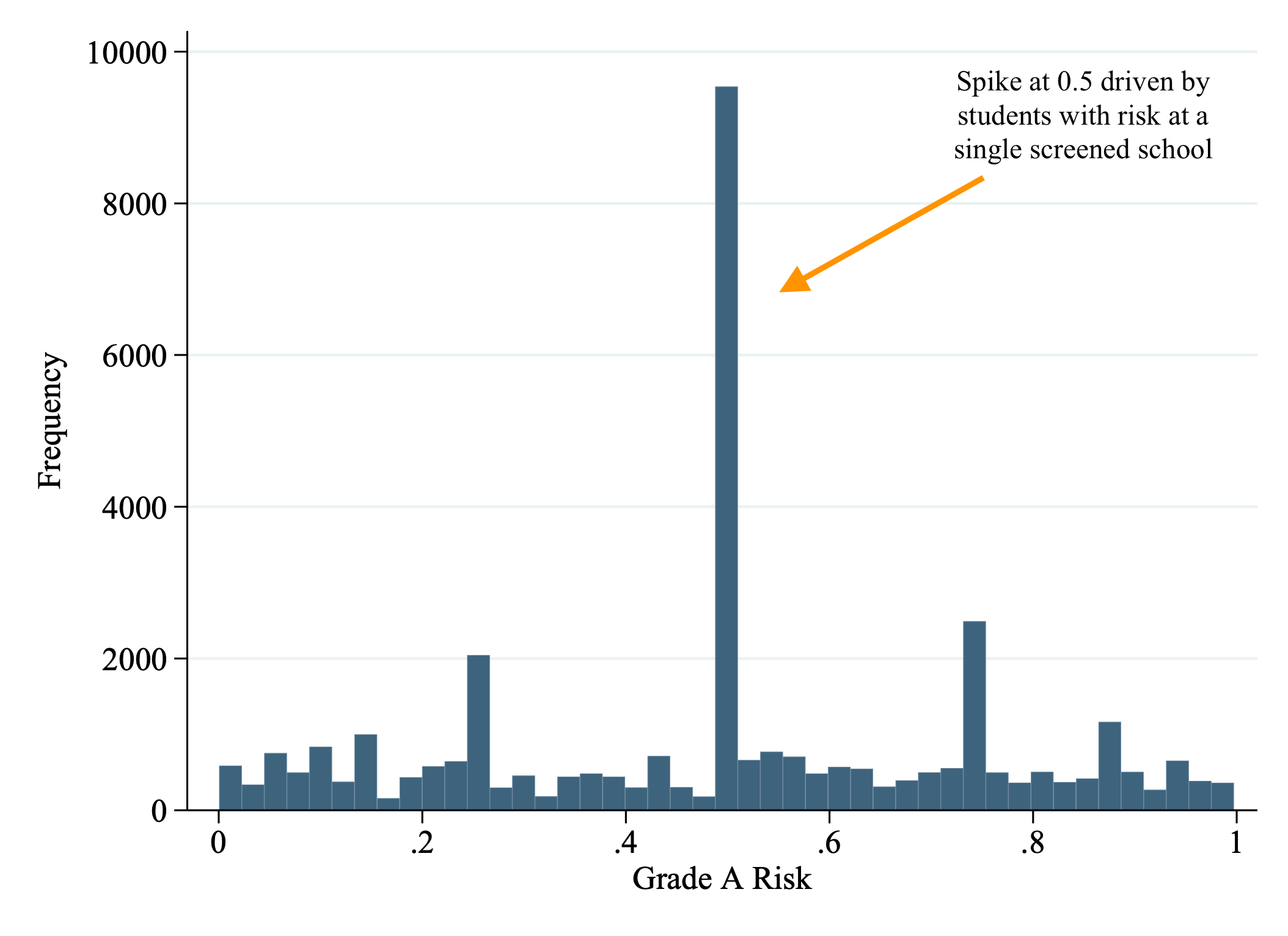}  \label{figappdx:ptsofsupport} 
	\vspace{0.2cm}
    \floatfoot{\small {\textit{Notes:} This figure shows the histogram of the estimated probability of assignment to a Grade A school for at-risk applicants in all sample cohorts (2012-2014), calculated using Theorem 1. The full sample includes 24,966 applicants with a Grade A propensity score equal to 1, 86,494 applicants with propensity score equal to 0, and 35,102 students with Grade A risk. The at-risk sample is used to compute the balance estimates reported in Table \ref{tab:balance}.}}
    
\end{figure}

Table \ref{tabappdx:attrition} reports estimates of the effect of Grade A assignments on attrition, computed by estimating models like those used to gauge balance.  Applicants who receive Grade A school assignments have a slightly higher likelihood of taking the SAT.  Decomposing Grade A schools into screened and lottery schools, applicants who receive lottery Grade A school assignments are 1.6 percent more likely to have SAT scores, while assignments to Grade A screened schools do not correspond to a statistically significant difference in the likelihood of having follow-up SAT scores. This modest difference seems unlikely to bias the 2SLS Grade A estimates reported in Tables \ref{tab:2slsmain} and \ref{tab:2slsmulti}. 

Table \ref{tabappdx:ungraded} reports estimates of the effect of enrollment in an ungraded high school.  These use models like those used to compute the estimates presented in Table \ref{tab:2slsmain}. OLS estimates show a small positive effect of ungraded school attendance on SAT scores and a strong negative effect  on graduation outcomes.  2SLS estimates, by contrast, suggest ungraded school attendance is unrelated to these outcomes. \\

\begin{center} 
	\includegraphics[scale=0.9]{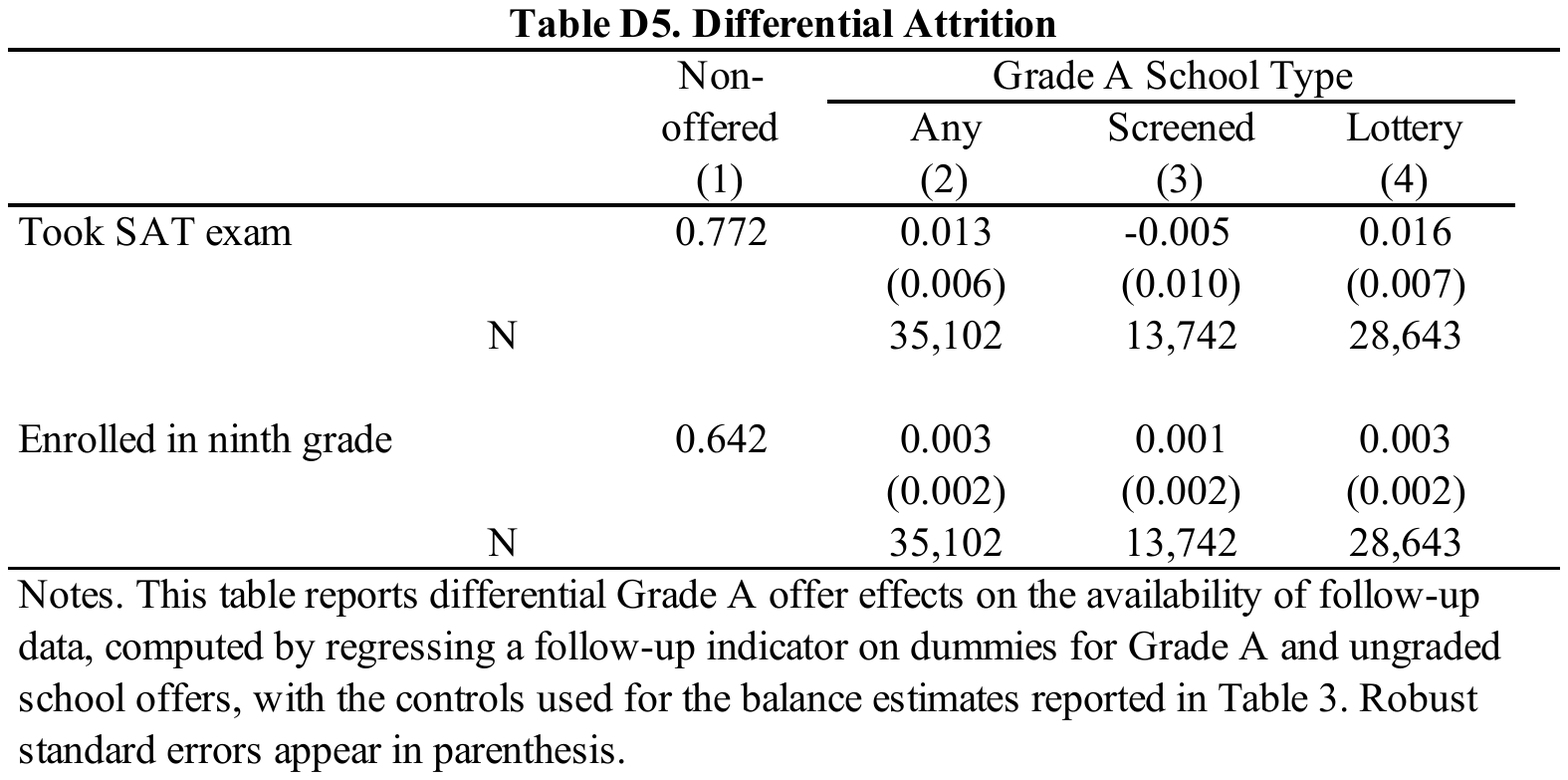} \refstepcounter{tablenums} \label{tabappdx:attrition} \\
\end{center}

\medskip

\begin{center}
	\includegraphics[scale=0.9]{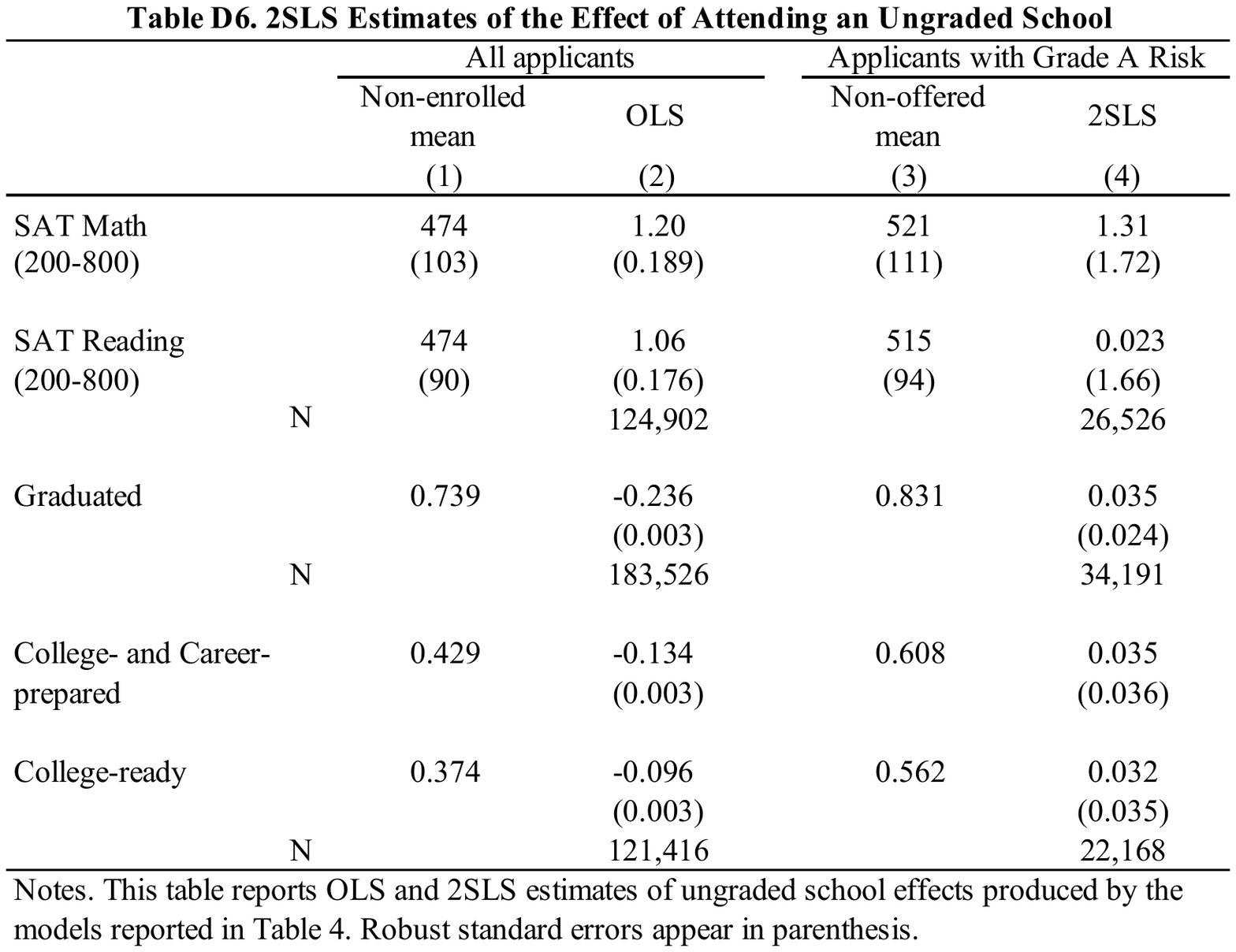} \refstepcounter{tablenums} \label{tabappdx:ungraded} \\
\end{center}

\end{document}